\newcommand*{\addFileDependency}[1]{
  \typeout{(#1)}
  \@addtofilelist{#1}
  \IfFileExists{#1}{}{\typeout{No file #1.}}
}
\newcommand*{\myexternaldocument}[1]{%
    \externaldocument{#1}%
    \addFileDependency{#1.tex}%
    \addFileDependency{#1.aux}%
}
\theoremstyle{remark}
\newtheoremstyle{mytheoremstyle} 
    {\topsep}                    
    {\topsep}                    
    {\upshape}                   
    {.5em}                           
    {\itshape}                   
    {.}                          
    {.5em}                       
    {}  
\theoremstyle{plain}
\newtheoremstyle{iremark}
  {\topsep}   
  {\topsep}   
  {\upshape}  
  {0.2in}       
  {\itshape}  
  {.}         
  {5pt plus 1pt minus 1pt} 
  {\thmname{#1}\thmnumber{ \itshape#2}\thmnote{ (#3)}} 
\newtheorem{theorem}{Theorem}
\newtheorem{lemma}[theorem]{Lemma}
\newtheorem{proposition}{Proposition}
\theoremstyle{definition}
\newtheorem*{proof}{Proof}
\DeclareFontFamily{U}{mathx}{\hyphenchar\font45}
\DeclareFontShape{U}{mathx}{m}{n}{
	<5> <6> <7> <8> <9> <10>
	<10.95> <12> <14.4> <17.28> <20.74> <24.88>
	mathx10
}{}
\DeclareSymbolFont{mathx}{U}{mathx}{m}{n}
\DeclareMathOperator*{\E}{\mathbb{E}}
\DeclarePairedDelimiter\abs{\lvert}{\rvert}%
\DeclarePairedDelimiter\absbig{\Big\lvert}{\Big\rvert}%
\renewcommand\d[1]{\ensuremath{%
		\;\mathrm{d}#1\@ifnextchar\d{\!}{}}}
\newcommand*\rel@kern[1]{\kern#1\dimexpr\macc@kerna}
\newcommand*\widebar[1]{%
  \begingroup
  \def\mathaccent##1##2{%
    \rel@kern{0.8}%
    \overline{\rel@kern{-0.8}\macc@nucleus\rel@kern{0.2}}%
    \rel@kern{-0.2}%
  }%
  \macc@depth\@ne
  \let\math@bgroup\@empty \let\math@egroup\macc@set@skewchar
  \mathsurround\z@ \frozen@everymath{\mathgroup\macc@group\relax}%
  \macc@set@skewchar\relax
  \let\mathaccentV\macc@nested@a
  \macc@nested@a\relax111{#1}%
  \endgroup
}
\newcommand\widecheck[1]{%
\savestack{\tmpbox}{\stretchto{%
  \scaleto{%
    \scalerel*[\widthof{\ensuremath{#1}}]{\kern-.6pt\bigwedge\kern-.6pt}%
    {\rule[-\textheight/2]{1ex}{\textheight}}
  }{\textheight}%
}{0.5ex}}%
\stackon[1pt]{#1}{\scalebox{-1}{\tmpbox}}%
}
\newcommand{\rev}[1]{\textcolor{black}{#1}} 
\newcommand{\norm}[1]{\left\lVert#1\right\rVert}
\newcommand{\normsmall}[1]{\big\lVert#1\big\rVert}
\newcommand{\thn}[1]{ {#1^{\rm{th} } } }
\newcommand{\snr}{{\rm{SNR}}}
\newcommand{\sigmaphi}{\sigma^2_{\phi}}
\newcommand{\rphii}{R_{\phi \phi}}
\newcommand{\rxii}{R_{\xi \xi}}
\newcommand{\deltat}{\Delta t}
\newcommand{\sigmaxi}{\sigma^2_{\xi}}
\newcommand{\Tsym}{ T_{\rm{sym}} }
\newcommand{\Ts}{ T_{\rm{s}} }
\newcommand{\Tcp}{ T_{\rm{cp}} }
\newcommand{\FF}{ \mathbf{F} }
\newcommand{\deltaf}{ \Delta f }
\newcommand{\fc}{ f_c }
\newcommand{\bxidelta}{ \bxi_{\Delta} }
\newcommand{\bxideltahat}{ \bxihat_{\Delta} }
\newcommand{\bomega}{ \boldsymbol{\omega} }
\newcommand{\bkappa}{ \boldsymbol{\kappa} }
\newcommand{\quot}[1]{``{#1}''}
\newcommand{\sm}{ s_{m} }
\newcommand{\llr}{\mathcal{L}}
\newcommand{\llrt}{\widetilde{\llr}}
\newcommand{\projrange}[1]{\boldsymbol{\Pi}_{#1}}
\newcommand{\projnull}[1]{\boldsymbol{\Pi}^{\perp}_{#1}}
\newcommand{\xnm}{ x_{n,m} }
\newcommand{\boldRhat}{ \widehat{\boldR} }
\newcommand{\rrhat}{ \widehat{\rr} }
\newcommand{\rhat}{ \widehat{r} }
\newcommand{\alphahat}{ \widehat{\alpha} }
\newcommand{\stilde}{ \widetilde{s} }
\newcommand{\ytilde}{ \widetilde{y} }
\newcommand{\sigmatau}{{ \sigma_{\xi}^2(\tau) }}
\newcommand{\mtN}{{\mathcal{N}}}
\newcommand{\fdb}{ f_{\rm{3dB}} }
\newcommand{\floop}{ f_{\rm{loop}} }
\newcommand{\taup}{  \tau_{\rm{p}} }
\newcommand{\Mod}[1]{\ (\mathrm{mod}\ #1)}
\newcommand{\rect}[1]{ { \rm{rect} }\left(#1\right) }
\newcommand{\mtCN}{{\mathcal{CN}}}
\newcommand{\trace}[1]{ {{{\rm{tr}}\bigg( #1 \bigg)}}  }
\newcommand{\realp}[1]{ \Re \left\{#1\right\}  }
\newcommand{\imp}[1]{ \Im \left\{#1\right\}  }
\newcommand{\vecc}[1]{ {\rm{vec}}\left(#1\right)  }
\newcommand{\veccs}[1]{ {\rm{vec}}\big(#1\big)  }
\newcommand{\veccinv}[1]{ {\rm{reshape}}_{N,M}\left(#1\right)  }
\newcommand{\epstau}{\epsilon_{\tau}}
\newcommand{\epsnu}{\epsilon_{\nu}}
\newcommand{\imax}{I_{\rm{max}}}
\newcommand{\diag}[1]{ {\rm{diag}}\left(#1\right)  }
\newcommand{\blkdiagg}[1]{ {\rm{blkdiag}}\big(#1\big)  }
\newcommand{\Imatrix}{{ \boldsymbol{\mathrm{I}} }}
\newcommand{\cc}{ \mathbf{c} }
\newcommand{\bb}{ \mathbf{b} }
\newcommand{\nn}{ \mathbf{n} }
\newcommand{\nuhat}{{ \widehat{\nu} }}
\newcommand{\tauhat}{{ \widehat{\tau} }}
\newcommand{\tauhatu}{{ \tauhat^{\rm{true}} }}
\newcommand{\boldzero}{{ {\boldsymbol{0}} }}
\newcommand{\boldone}{{ {\boldsymbol{1}} }}
\newcommand{\boldYfr}{\boldY^{\rm{PN-free}}}
\newcommand{\detm}[1]{ {{{\rm{det}}\left( #1 \right)}}  }
\newcommand{\boldY}{ \mathbf{Y} }
\newcommand{\boldX}{ \mathbf{X} }
\newcommand{\Wbhat}{ \widehat{\boldW} }
\newcommand{\boldYtilde}{ \widetilde{\boldY} }
\newcommand{\boldW}{ \mathbf{W} }
\newcommand{\boldZ}{ \mathbf{Z} }
\newcommand{\boldRcirc}{ \boldR^{{\rm{circ}}} }
\newcommand{\boldD}{ \boldGamma }
\newcommand{\boldA}{ \mathbf{A} }
\newcommand{\boldR}{ \mathbf{R} }
\newcommand{\qq}{ \mathbf{q} }
\newcommand{\yy}{ \mathbf{y} }
\newcommand{\xx}{ \mathbf{x} }
\newcommand{\ww}{ \mathbf{w} }
\newcommand{\zz}{ \mathbf{z} }
\newcommand{\rr}{ \mathbf{r} }
\newcommand{\boldSigma}{ \mathbf{\Sigma} }
\newcommand{\boldXi}{ \mathbf{\Xi} }
\newcommand{\boldXihat}{ \widehat{\boldXi} }
\newcommand{\upsb}{ \boldsymbol{\upsilon} }
\newcommand{\varpib}{ \boldsymbol{\varpi} }
\newcommand{\varsigmab}{ \boldsymbol{\varsigma} }
\newcommand{\bxi}{ \boldsymbol{\xi} }
\newcommand{\bxihat}{ \widehat{\bxi} }
\newcommand{\xihat}{ \widehat{\xi} }
\newcommand{\boldGamma}{ \mathbf{\Gamma} }
\newcommand{\transpose}[1]{ {#1}^{T} }
\newcommand{\complexset}[2]{ \mathbb{C}^{#1 \times #2}  }
\newcommand{\realset}[2]{ \mathbb{R}^{#1 \times #2}  }
\newcommand{\conj}{ {\ast} }
\newcommand{\etab}{ {\boldsymbol{\eta}} }
\newcommand{\Lambdab}{ {\boldsymbol{\Lambda}} }
\newcounter{relctr} 
\everydisplay\expandafter{\the\everydisplay\setcounter{relctr}{0}} 
\newcommand\labelrel[2]{%
  \begingroup
    \refstepcounter{relctr}%
    \stackrel{\textnormal{(\alph{relctr})}}{\mathstrut{#1}}%
    \originallabel{#2}%
  \endgroup
}
\begin{document}
\bstctlcite{IEEEexample:BSTcontrol}

\title{\rev{Monostatic Sensing with OFDM under Phase Noise: From Mitigation to Exploitation}}

\author{Musa Furkan Keskin, \textit{Member, IEEE}, Henk Wymeersch, \textit{Senior Member, IEEE}, and Visa Koivunen, \textit{Fellow, IEEE}\thanks{Musa Furkan Keskin and Henk Wymeersch are with the Department of Electrical Engineering, Chalmers University of Technology, SE 41296 Gothenburg, Sweden (e-mail: furkan@chalmers.se). Visa Koivunen is with the Department of Signal Processing and Acoustics, Aalto University, FI 00076 Aalto, Finland. This work is supported, in part, by MSCA-IF grant
888913 (OTFS-RADCOM) and the European Commission through the H2020 project Hexa-X (Grant Agreement no. 101015956).}}

\maketitle

\begin{abstract}
     We consider the problem of monostatic radar sensing with \rev{orthogonal frequency-division multiplexing} (OFDM) joint radar-communications (JRC) systems in the presence of phase noise (PN) caused by oscillator imperfections. We begin by providing a rigorous statistical characterization of PN in the radar receiver over multiple OFDM symbols for free-running oscillators (FROs) and phase-locked loops (PLLs). Based on the delay-dependent PN covariance matrix, we derive the hybrid maximum-likelihood (ML)/maximum a-posteriori (MAP) estimator of the deterministic delay-Doppler parameters and the random PN, resulting in a challenging high-dimensional nonlinear optimization problem. To circumvent the nonlinearity of PN, we then develop an iterated small angle approximation (ISAA) algorithm that progressively refines delay-Doppler-PN estimates via closed-form updates of PN as a function of delay-Doppler at each iteration. 
     Moreover, unlike existing approaches where PN is considered to be purely an impairment that has to be mitigated, we propose to exploit PN for \rev{resolving} range ambiguity by capitalizing on its delay-dependent statistics (i.e., the range correlation effect), through the formulation of a parametric Toeplitz-block Toeplitz covariance matrix reconstruction problem. Simulation results indicate quick convergence of ISAA to the hybrid Cram\'{e}r-Rao bound (CRB), as well as its remarkable performance gains over state-of-the-art benchmarks, for both FROs and PLLs under various operating conditions, while showing that the detrimental effect of PN can be turned into an advantage for sensing.
     

	\textit{Index Terms--} OFDM, joint radar-communications, phase noise, exploitation, iterated small angle approximation.
	\vspace{-0.1in}
\end{abstract}


\section{Introduction}\label{sec_intro}




Envisioned as one of the key enabling technologies in 6G wireless networks, the concept of joint radar-communications (JRC), also known as integrated sensing and communications (ISAC), has drawn significant attention in recent years \cite{SPM_JRC_2019,jointRadCom_review_TCOM,JRC_6G_Nokia_2021,Eldar_SPM_JRC_2020,JCAS_Survey_2022,Fan_ISAC_6G_JSAC_2022}. 
Towards practical JRC implementation in real-world scenarios, two principal design paradigms have been pursued in the literature: radar-communications coexistence (RCC) \cite{SPM_Zheng_2019} and dual-functional radar-communications (DFRC) \cite{DFRC_SPM_2019,DFRC_Waveform_Design}. RCC deals with joint optimization of spectrally coexisting radar and communication systems deployed on separate platforms to mitigate mutual interference and enable efficient spectrum utilization \cite{RCC_Interf_TSP_2021}, while DFRC refers to integration of sensing and communication functionalities into a single hardware that employs a joint waveform to perform both tasks simultaneously \rev{for mutual benefit} \cite{Eldar_SPM_JRC_2020}. Sparking innovative use cases (e.g., sensing-assisted communications \rev{\cite{Fan_ISAC_6G_JSAC_2022}}) and bringing hardware/cost efficiency by piggybacking on the existing wireless infrastructure, DFRC holds great potential in \rev{emerging 6G} networks, where sensing will be an integral component \cite{Fan_ISAC_6G_JSAC_2022,6GVision_2020}. As a promising candidate for DFRC transmission, the orthogonal
frequency-division multiplexing (OFDM) waveform has been extensively studied thanks to its wide prevalence in mobile network standards and its satisfactory performance in radar operation \cite{RadCom_Proc_IEEE_2011,General_Multicarrier_Radar_TSP_2016,ICI_OFDM_TSP_2020,OFDM_DFRC_TSP_2021}.

As next-generation wireless systems are expected to operate at high frequencies, i.e., millimeter wave (mmWave) bands \cite{6GVision_2020}, hardware impairments (HWIs), such as phase noise (PN) \cite{PN_OFDM_TSP_2017}, power amplifier nonlinearity (PAN) and mutual coupling (MC), can become a major bottleneck for OFDM DFRC system performance in both radar and communications \cite{hexax_pimrc_2021,RF_JCS_2021}. In particular, the severity of PN, which is caused by oscillator imperfections\rev{\footnote{\label{fn_pn_osc}\rev{Being a time-varying impairment, PN constitutes a much more serious issue for DFRC systems than static impairments caused by oscillator non-idealities, such as carrier frequency offset (CFO) and I/Q imbalance \cite{OFDM_PN_HCRB_2014}. For instance, CFO has no effect on monostatic sensing systems as the same oscillator is employed for transmission and reception \cite{MIMO_OFDM_ICI_JSTSP_2021}.}}}, increases with operating frequency \cite{OFDM_PN_HCRB_2014,PN_CohBw_2021_TWC}. Due its rapidly time-varying nature, PN requires dynamic compensation accounting for subcarrier-level (frequency-domain) or sample-level (time-domain) mitigation processing, thereby posing a significant challenge in channel estimation and data detection for OFDM communications \cite{VI_PN_TSP_2007,OFDM_PN_HCRB_2014,PN_OFDM_Relay_TWC_2016,PN_mmWave_OFDM_TWC_2022}. \rev{To tackle the PN compensation problem for OFDM, various frequency-domain \cite{PN_OFDM_PLL_TCOM_2007,PN_CohBw_2021_TWC,PN_ICI_OFDM_COML_2005,OFDM_PN_Wiener_TCOM_2011,PN_OFDM_WLAB_COMML_2002} and time-domain \cite{OFDM_Joint_PHN_TSP_2006,PN_OFDM_Sayed_TSP_2007,PN_OFDM_TSP_2017} estimation approaches have been proposed.}

Despite the vast literature on PN estimation in \textit{OFDM communications}, very little effort has been devoted to studying the impact of PN on the performance of \textit{OFDM radar} (e.g., \cite{OFDM_JRC_PN_JLT_2022,SC_OFDM_PN}), let alone to developing algorithms to estimate and compensate for PN in radar sensing. \rev{In \cite{OFDM_JRC_PN_JLT_2022} and \cite{SC_OFDM_PN}}, the effect of PN on the range-velocity profile of \rev{an OFDM radar} is investigated, which shows that PN leads to an increase in noise floor and produces a ridge along the velocity axis. \rev{To support new use cases towards 6G networks, emerging mmWave sensing applications \cite{multiFunc_6G_CoMmag,IEEE80211_Auto_JSTSP_2021,Multibeam_JRC_CommLetter_2022,mmwave_radcom_JSTSP_2021} impose stringent requirements on range and velocity accuracy \cite{6G_HexaX_Access}, which necessitates the consideration of PN.}
From the perspective of radar receive processing, the existing OFDM radar algorithms (e.g., \cite{RadCom_Proc_IEEE_2011,Firat_OFDM_2012,OFDM_Radar_Phd_2014,OFDM_Radar_Corr_TAES_2020,ICI_OFDM_TSP_2020,MUSIC_OFDM_Radar_TSP_2021}) assume ideal oscillators and thus cannot provide satisfactory performance under the effect of PN, especially for large PN variances. In a nutshell, no systematic study has been performed to tackle the problem of radar sensing in OFDM DFRC systems with oscillator PN, and, accordingly, to derive an algorithm to jointly estimate delay, Doppler and PN.


As a key distinction between radar and communications, the so-called \textit{range correlation} effect \cite{Range_Correlation_93,Bliss_PN_2016,SPM_PN_2019,Canan_SPM_2020} constitutes an essential peculiarity of PN in monostatic radar sensing compared to PN in a communications setup. Due to independent PN processes at the transmitter and the receiver, the PN statistics in communications systems 
do not depend on unknown channel parameters (e.g., \cite{OFDM_Cons_PN_TCOM_2004,PN_BCRB_TSP_2013,OFDM_PN_HCRB_2014,PN_OFDM_TSP_2017}). Conversely, in a shared-oscillator radar transceiver, downconversion of the reflected signal from a target results in a differential (self-referenced/self-correlated) PN process \cite{Demir_PN_2006,PN_2006} corresponding to the difference between the original PN process and its version shifted in time by the round-trip delay of the target. This correlation effect renders the statistics of PN in the radar receiver range-dependent, leading to higher PN variance for farther targets \cite{Bliss_PN_2016}, which brings both challenges and opportunities specific to radar sensing. The main challenge pertains to delay and PN being coupled, making it difficult to disentangle the corresponding estimation tasks (as is commonly performed in joint channel/PN estimation in OFDM communications, e.g., \cite{OFDM_Joint_PHN_TSP_2006,OFDM_Joint_PHN_TVT_2009}). On the other hand, the main opportunity arises from the possibility to exploit the range-dependent PN statistics for enhancing range estimation performance. While \textit{PN exploitation} may offer promising performance gains, to the best of authors' knowledge, this topic remains surprisingly unexplored, both in the standard radar literature (i.e., frequency-modulated continuous wave (FMCW), multiple-input multiple-output (MIMO) or pulsed radars) and in OFDM DFRC research.





In light of the existing literature, several fundamental questions arise concerning the sensing functionality of OFDM DFRC systems in the presence of PN:
\begin{itemize}
    \item What are the statistical properties of PN in the OFDM radar receiver for different types of oscillators, namely, free-running oscillators (FROs) and phase-locked loops (PLLs) \cite{Demir_PN_2006,PN_OFDM_PLL_TCOM_2007}? 
    \item How can we develop \rev{powerful} algorithms for joint estimation of delay, Doppler and PN, to \textit{mitigate} the impact of PN on the sensing performance? How do FROs and PLLs affect the performance of delay and Doppler estimation?
    \item Considering the range correlation effect, is it possible to \textit{exploit} PN to improve ranging performance beyond that achievable via PN-free, ideal oscillators?
\end{itemize}
In an attempt to answer these questions, this paper studies the problem of radar delay-Doppler estimation in OFDM DFRC systems in the face of oscillator PN. We begin by deriving the statistical characteristics of PN in the OFDM radar observations for both FROs and PLLs. Then, we propose a novel algorithm for joint estimation of delay, Doppler and PN, based on iterated small angle approximation of PN in the cost function of the hybrid maximum-likelihood (ML)/MAP estimator, which enables fast convergence to the corresponding theoretical bounds. Furthermore, we develop a PN exploitation approach that can effectively utilize the delay-dependent PN statistics to resolve range ambiguity. The main contributions of this paper can be summarized as follows:
\begin{itemize}
    \item \textbf{Problem Formulation for OFDM Radar Sensing under PN:} For the first time in the literature, we investigate the problem of monostatic radar sensing in OFDM DFRC systems under the effect of oscillator PN. To provide a rigorous problem formulation, we derive an explicit statistical characterization of PN in the OFDM radar receiver. \rev{We consider two commonly used oscillator models, namely, FRO and PLL. The derivations} 
    reveal the block-diagonal structure of the PN covariance matrix for the former and the more general Toeplitz-block Toeplitz structure for the latter.
    
    \item \textbf{Hybrid ML/MAP Estimator via Iterated Small Angle Approximation Algorithm:} We derive the hybrid ML/MAP estimator of the deterministic delay-Doppler parameters and the random PN over an OFDM frame with multiple symbols. \rev{The covariance matrix of PN in the backscattered signal} depends on the unknown delay. 
    To deal with the highly nonlinear nature of the resulting cost function, we propose a novel iterated small angle approximation (ISAA) approach that invokes the small PN approximation around the current PN estimate at each iteration, progressively refining delay-Doppler-PN estimates and minimizing the impact of residual PN. The proposed approach enables closed-form update of PN as a function of delay-Doppler and provides significant improvements in PN tracking accuracy through alternating iterations.
    
    \item \textbf{PN Exploitation to Resolve Range Ambiguity:} Relying on the key insight that PN conveys valuable information on delay through its delay-dependent statistics, we develop an algorithm for \rev{resolving} range ambiguity, that exploits the statistics of the PN estimates at the output of the proposed ISAA method. The PN exploitation approach formulates the range estimation as a parametric covariance matrix reconstruction problem by leveraging the Toeplitz-block Toeplitz structure and is capable of yielding unambiguous range estimates through the fact that PN covariance imposes no ambiguity in range (as opposed to the fundamental upper limit dictated by OFDM subcarrier spacing \cite{RadCom_Proc_IEEE_2011,OFDM_Radar_Phd_2014,OFDM_ICI_TVT_2017}).
    
    \item \textbf{Simulation Analysis:} 
    Extensive simulations conducted under a wide variety of operating conditions indicate that the proposed ISAA algorithm converges quickly to the corresponding hybrid Cram\'{e}r-Rao bounds (CRBs) \cite{hybrid_CRB_LSP_2008,Hybrid_ML_MAP_TSP} on delay-Doppler-PN estimation in few iterations and considerably outperforms the benchmark FFT method \cite{RadCom_Proc_IEEE_2011,OFDM_Radar_Phd_2014,OFDM_Radar_Corr_TAES_2020}. The accuracy gains \rev{are} more pronounced for higher SNRs, larger $3 \,\rm{dB}$ oscillator bandwidths, smaller loop bandwidths (for PLLs) and farther targets. In addition, PLLs are found to be more beneficial for Doppler estimation than FROs, due to the presence of slow-time PN correlation for PLLs.
    Moreover, above a certain SNR level, the PN exploitation algorithm is shown to correctly identify the true range of range-ambiguous targets and achieve much higher ranging accuracy than the FFT method fed with PN-free observations, thereby turning PN into an advantage for sensing.\footnote{Notations: 
    $\projrange{\boldX} = \boldX (\boldX^H \boldX)^{-1} \boldX^H$ represents the orthogonal projector onto the column space of $\boldX$ and $\projnull{\boldX} = \Imatrix - \projrange{\boldX}$. $\odot$ and $\otimes$ denote the Hadamard and Kronecker product, respectively. 
    $\diag{\xx}$ outputs a diagonal matrix with the elements of a vector $\xx$ on the diagonals, $\diag{\boldX}$ represents a diagonal matrix with the diagonal elements of a square matrix $\boldX$ on the diagonals, $\vecc{\cdot}$ denotes matrix vectorization operator, and $\veccinv{\cdot}$ reshapes a vector into an $N \times M$ matrix. }
\end{itemize}


\section{System Model and Problem Formulation}\label{sec_sys_mod}
Consider an OFDM JRC system consisting of a DFRC transceiver and a communications receiver (RX), \rev{as shown in Fig.~\ref{fig_scenario}}. Equipped with a radar-communications transmitter (TX) \rev{(i.e., a conventional OFDM TX)} and a radar RX on a single hardware platform, the DFRC transceiver sends data symbols to the communications RX and simultaneously performs monostatic radar sensing using the backscattered signals to accomplish various radar tasks (e.g., target detection, estimation, tracking and classification) \cite{RadCom_Proc_IEEE_2011,DFRC_SPM_2019,jointRadCom_review_TCOM}. To enable full-duplex operation without self-interference to the radar RX, we assume sufficient isolation and decoupling of TX/RX antennas at the DFRC transceiver \cite{RadCom_Proc_IEEE_2011,SI_5G_2015,OFDM_Radar_Phd_2014,80211_Radar_TVT_2018,OFDM_FD_LTE_2019,Fan_ISAC_6G_JSAC_2022}. 
\rev{At the communications RX, conventional OFDM receive operations (e.g., channel estimation, frequency synchronization, data detection \cite{PN_mmWave_OFDM_TWC_2022}) are performed ordinarily without any constraints from the sensing functionality.} Moreover, the oscillator of the DFRC transceiver, which is shared between the TX and radar RX on the co-designed joint hardware platform, is assumed to be non-ideal and impaired by PN due to imperfections \cite{Demir_PN_2000,OFDM_Cons_PN_TCOM_2004,PN_OFDM_PLL_TCOM_2007,OFDM_PN_HCRB_2014,PN_CohBw_2021_TWC}. In this section, we derive OFDM transmit and radar receive signal models in the presence of PN, 
and formulate the resulting OFDM radar sensing problem. \rev{We note that the paper will focus on radar sensing under PN while the communications RX is assumed to compensate for PN via well-established approaches, e.g., \cite{PN_OFDM_Sayed_TSP_2007,VI_PN_TSP_2007,PN_Spectral_ICI_TSP_2010,PN_OFDM_TSP_2017}.}

\begin{figure}
    \centering
    \includegraphics[width=1\columnwidth]{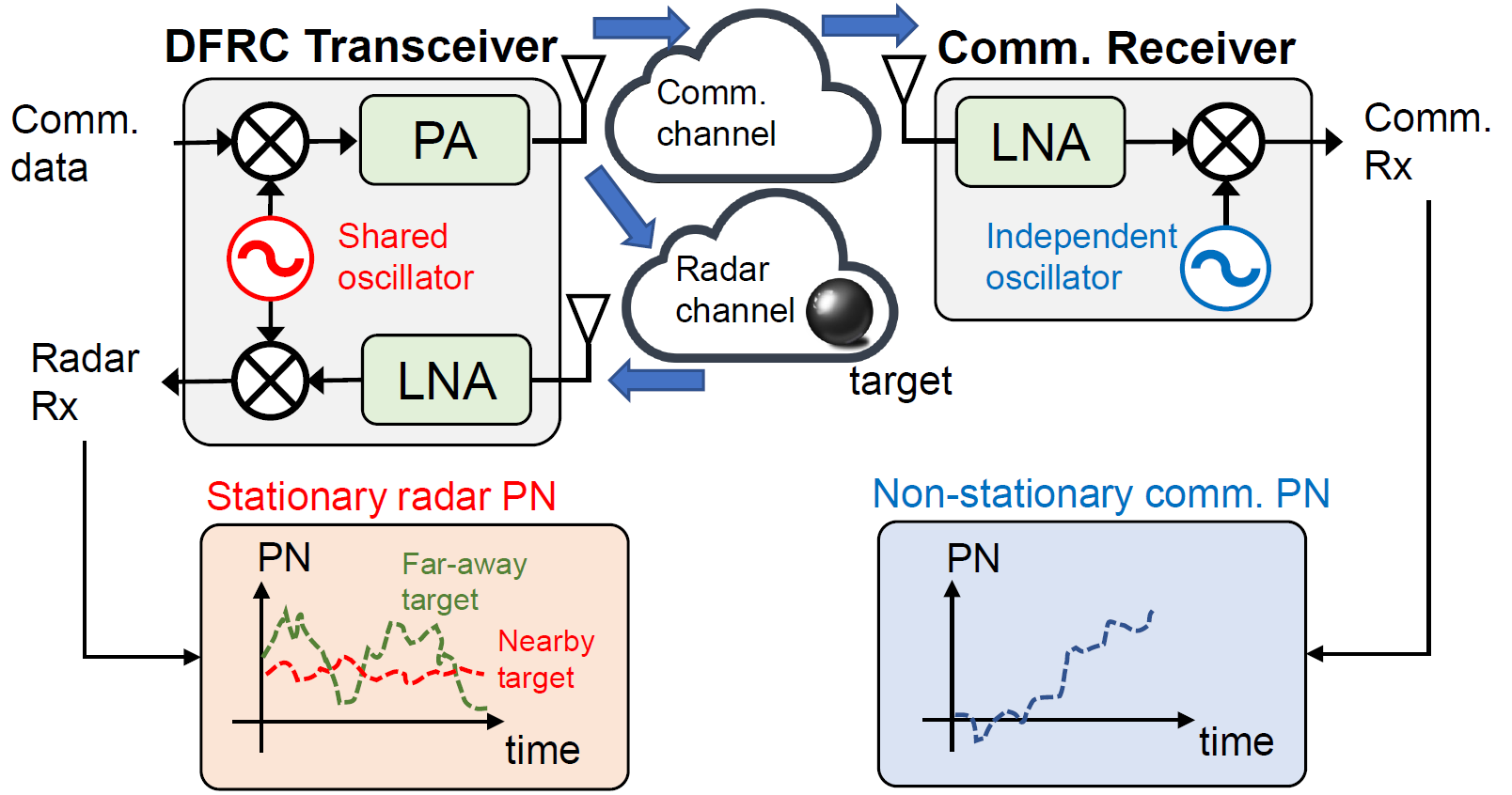}
    \caption{\rev{OFDM JRC system comprising a DFRC transceiver that sends data symbols to a communications RX and simultaneously performs monostatic sensing using the backscattered signals under the impact of PN. Downconversion of the radar return signal using the shared oscillator results in a self-correlated PN process, leading to delay-dependent PN statistics in radar sensing, while PN statistics in the communications RX, which employs an independent oscillator, have no relation to channel parameters.}}
    \label{fig_scenario}
    \vspace{-0.2in}
\end{figure}

\vspace{-0.1in}
\subsection{Transmit Signal Model}\label{sec_transmit}
We consider an OFDM communication frame with $M$ symbols and $N$ subcarriers. The total duration of a symbol is given by $\Tsym = \Tcp + T$, where $\Tcp$ and $T$ denote, respectively, the cyclic prefix (CP) and the elementary symbol durations \cite{RadCom_Proc_IEEE_2011}. In addition, $\deltaf = 1/T$ is the subcarrier spacing, leading to a total bandwidth of $N \deltaf = B$. The complex baseband OFDM transmit signal can be expressed as \cite{General_Multicarrier_Radar_TSP_2016}
\begin{align}
    s(t) = \sum_{m=0}^{M-1} \sm(t)~,
\end{align}
where
\begin{equation}\label{eq_ofdm_baseband}
\sm(t) = \frac{1}{\sqrt{N}} \sum_{n = 0}^{N-1}  \xnm \, e^{j 2 \pi n \deltaf t} \rect{\frac{t - m\Tsym}{\Tsym}} 
\end{equation} 
is the OFDM signal for the $\thn{m}$ symbol, $\xnm$ denotes the complex data symbol on the $\thn{n}$ subcarrier for the $\thn{m}$ symbol, and $\rect{t}$ is a rectangular pulse that takes the value $1$ for $t \in \left[0, 1 \right]$ and $0$ otherwise. In the presence of PN in the oscillator, the upconverted transmit signal over the block of $M$ symbols for $t \in \left[0, M \Tsym \right]$ can be written as \cite{PN_2006}
\begin{equation}\label{eq_passband_st}
\stilde(t) = \Re \left\{  s(t) e^{j \left[ 2 \pi \fc t + \phi(t)\right]} \right\} ~,
\end{equation}
where $\fc$ is the carrier frequency and $\phi(t)$ denotes the PN process in the oscillator.

\subsection{Receive Signal Model}\label{sec_radar_rec}
In radar sensing, we assume the existence of a point target in the far-field, with round-trip delay $\tau = 2 R/c$, normalized Doppler shift $\nu = 2 v/c$ and complex channel gain $\alpha$, which includes path loss and radar cross section effects. Here, $R$, $v$ and $c$ denote the distance, radial velocity and speed of propagation, respectively. Given the transmit signal model in \eqref{eq_passband_st}, the passband backscattered signal at the radar RX can be given as
\begin{align}\label{eq_rec_passband}
    \ytilde(t) = \Re \left\{ \alpha \, s(t-\tau(t)) e^{j \left[ 2 \pi \fc (t-\tau(t)) + \phi(t-\tau(t))\right]} \right\} ~,
\end{align}
where $\tau(t) = \tau - \nu t$ is the time-varying delay due to Doppler shift. After downconverting the passband signal in \eqref{eq_rec_passband} through the noisy oscillator, which corresponds to multiplication by $e^{-j (2 \pi \fc t + \phi(t))}$ \cite{PN_SI_TSP_2017}, the equivalent complex baseband signal can be written as \cite{SPM_PN_2019}
\begin{align}\nonumber
    y(t) &= \alpha \, s(t-\tau(t)) e^{j \left[ 2 \pi \fc (t-\tau(t)) + \phi(t-\tau(t)) \right]} e^{-j \left[ 2 \pi \fc t + \phi(t)\right]} \\ \label{eq_rec_baseband}
    &= \alpha \, s(t-\tau(t)) e^{-j 2 \pi \fc \tau} e^{j 2 \pi \fc \nu t }  e^{j \left[ \phi(t-\tau(t)) - \phi(t)\right] } ~.
\end{align}

Focusing primarily on vehicular JRC scenarios, we assume that the Doppler shifts satisfy $ \lvert \nu \rvert \ll 1/N$ \cite{Firat_OFDM_2012,ICI_OFDM_TSP_2020,MIMO_OFDM_ICI_JSTSP_2021}, where for typical vehicular OFDM JRC systems, \rev{$\nu \ll 10^{-6}$} (corresponding to \rev{$v \ll 540 \, \rm{km/h}$}), while $N$ is on the order of $10^3$.
 This allows us to approximate the PN term in \eqref{eq_rec_baseband} as $\phi(t-\tau(t)) \approx \phi(t-\tau)$. In addition, the time-bandwidth product $B M \Tsym$ is small enough to justify (together with $ \lvert \nu \rvert \ll 1/N$) the narrowband approximation $s(t - \tau(t)) \approx s(t - \tau)$ \cite{OFDM_ICI_TVT_2017}.
Under this setting, the received signal in \eqref{eq_rec_baseband} becomes
\begin{align} \label{eq_rec_baseband2}
    y(t) &= \alpha \, s(t-\tau) e^{-j 2 \pi \fc \tau} e^{j 2 \pi \fc \nu t }  w(t, \tau) ~,
\end{align}
where the multiplicative PN process is represented by  
\begin{align} \label{eq_pn_def}
    w(t, \tau) \triangleq e^{j \left[ \phi(t-\tau) - \phi(t)\right] } ~.
\end{align}
The statistical properties of the PN process $\phi(t-\tau) - \phi(t)$ in \eqref{eq_pn_def} will be derived in Sec.~\ref{sec_pn_statistics}.


\subsection{Fast-Time/Slow-Time Representation with Phase Noise}\label{sec_fast_slow}
For the $\thn{m}$ symbol, we remove the CP and sample $y(t)$ in \eqref{eq_rec_baseband2} at $t = m\Tsym + \Tcp + \ell T / N$ for $\ell = 0, \ldots, N-1$. Making the standard OFDM radar assumptions $\Tcp \geq \tau$ \cite{Firat_OFDM_2012,OFDM_Radar_Phd_2014,SPM_JRC_2019} (CP duration is set to be longer than the round-trip delay of the furthermost target) and $\fc T \nu \ll 1$ \cite{Passive_OFDM_2010,OFDM_Passive_Res_2017_TSP,OTFS_RadCom_TWC_2020,OFDM_DFRC_TSP_2021} (Doppler shift $\fc \nu$ is small compared to subcarrier spacing $\deltaf$), and ignoring constant phase terms, the received signal for the $\thn{m}$ symbol can be written as \cite{ICI_OFDM_TSP_2020,MIMO_OFDM_ICI_JSTSP_2021}
\begin{align}\label{eq_rec_bb2}
    y_{\ell,m} &= \alpha  \, e^{j 2 \pi \fc m \Tsym \nu  }  w_{\ell,m}(\tau) \\ \nonumber &~~\times \frac{1}{\sqrt{N}}  \sum_{n = 0}^{N-1}  \xnm \, e^{j 2 \pi n \frac{\ell}{N}} e^{-j 2 \pi n \deltaf \tau} ~,
\end{align}
where $w_{\ell,m}(\tau)$ is the sampled version of the PN term $w(t, \tau)$ in \eqref{eq_rec_baseband2} for $t = m\Tsym + \Tcp + \ell T / N$. Let
\begin{align} \label{eq_steer_delay}
	\bb(\tau) & \triangleq  \transpose{ \left[ 1, e^{-j 2 \pi \deltaf \tau}, \ldots,  e^{-j 2 \pi (N-1) \deltaf  \tau} \right] } ~, \\ \label{eq_steer_doppler}
	\cc(\nu) & \triangleq \transpose{ \left[ 1, e^{-j 2 \pi f_c \Tsym \nu }, \ldots,  e^{-j 2 \pi f_c (M-1) \Tsym \nu } \right] } ~, 
\end{align}
represent the frequency-domain and temporal (slow-time) steering vectors, respectively. 

Aggregating the observations in \eqref{eq_rec_bb2} over fast-time $\ell$ and slow-time $m$, and taking into account the presence of additive sensor noise, the fast-time/slow-time observation matrix in the presence of PN is obtained as \cite{MIMO_OFDM_ICI_JSTSP_2021}
\begin{align} \label{eq_ym_all_multi}
    \boldY = \alpha \, \boldW \odot \FF_N^{H} \Big(\boldX \odot \bb(\tau) \cc^{H}(\nu) \Big)  + \boldZ ~,
\end{align}
where $\boldW \in \complexset{N}{M}$ with $\left[ \boldW \right]_{\ell,m} \triangleq w_{\ell,m}(\tau)$ is the \textit{multiplicative PN matrix}\footnote{For notational convenience, we drop the dependence of $\boldW$ on $\tau$.} consisting of fast-time/slow-time samples from the PN process in \eqref{eq_pn_def}, $\FF_N \in \complexset{N}{N}$ is the unitary DFT matrix with $\left[ \FF_N \right]_{\ell,n} = \frac{1}{\sqrt{N}} e^{- j 2 \pi n \frac{\ell}{N}} $, $\boldX \in \complexset{N}{M}$ contains the complex data symbols with $\left[ \boldX \right]_{n,m} \triangleq \xnm$, $\boldY \in \complexset{N}{M}$ with $\left[ \boldY \right]_{\ell,m} \triangleq y_{\ell,m} $, and $\boldZ \in \complexset{N}{M}$ is additive white Gaussian noise (AWGN) with $\vecc{\boldZ} \sim \mtCN(\boldzero, \allowbreak 2\sigma^2 \Imatrix ) $. As observed from \eqref{eq_ym_all_multi}, the PN component $\boldW$ introduces intercarrier interference (ICI) in OFDM radar \cite{OFDM_ICI_TVT_2017,ICI_OFDM_TSP_2020,MIMO_OFDM_ICI_JSTSP_2021} (similar to its effect in OFDM communications \cite{PN_OFDM_PLL_TCOM_2007,OFDM_PN_HCRB_2014,PN_OFDM_TSP_2017,PN_CohBw_2021_TWC}) and might severely degrade the performance of delay-Doppler estimation.

\vspace{-0.1in}
\subsection{Special Case: Ideal Oscillator}\label{sec_ideal}
To relate the derived signal model in \eqref{eq_ym_all_multi} to the commonly used ones in the literature, we investigate the special case of an ideal oscillator where the PN process is not present, i.e., $\phi(t) = 0 \, , \forall t$, which yields $w(t,\tau) = 1\, , \forall t, \tau$, and $\boldW$ becomes an all-ones matrix, i.e., $\boldW = \boldone_{N \times M}$. In this case, \eqref{eq_ym_all_multi} reverts to the PN-free model
\begin{align} \label{eq_ym_special}
    \boldYfr = \alpha \, \FF_N^{H} \Big(\boldX \odot \bb(\tau) \cc^{H}(\nu) \Big)   + \boldZ ~.
\end{align}
Following the traditional processing chain for OFDM radar receivers \cite{OFDM_Radar_Corr_TAES_2020,RadCom_Proc_IEEE_2011,OFDM_Radar_Phd_2014}, we take the DFT of the columns of $\boldYfr$ in \eqref{eq_ym_special} to switch from fast-time/slow-time to frequency/slow-time domain and obtain the standard OFDM radar observations \cite{OFDM_DFRC_TSP_2021,OFDM_Passive_Res_2017_TSP,Passive_OFDM_2010,RadCom_Proc_IEEE_2011,OFDM_Radar_Corr_TAES_2020,OFDM_Radar_Phd_2014}: 
\begin{align} \label{eq_ym_special2}
    \boldYtilde = \FF_N \boldYfr = \alpha \,   \boldX \odot    \bb(\tau) \cc^{H}(\nu)   + \FF_N \boldZ ~,
\end{align}
where $\vecc{\FF_N \boldZ} \sim \mtCN(\boldzero, \allowbreak 2\sigma^2 \Imatrix ) $. Clearly, \eqref{eq_ym_special2} does not involve any ICI effect, and is therefore amenable to conventional delay-Doppler estimation algorithms (after removing the effect of $\boldX$), such as 2-D DFT over time and frequency domains \cite{RadCom_Proc_IEEE_2011,OFDM_Radar_Phd_2014,OFDM_Radar_Corr_TAES_2020} and super-resolution methods \cite{Passive_OFDM_2010,OFDM_Passive_Res_2017_TSP}.


\subsection{Problem Statement for OFDM Radar under Phase Noise}\label{sec_formulation}
Given the transmit data symbols\rev{\footnote{\label{fn_coloc}\rev{Being co-located on a shared platform with the JRC transmitter, the radar receiver has the knowledge of transmit data symbols $\boldX$ \cite{RadCom_Proc_IEEE_2011,ICI_OFDM_TSP_2020,MIMO_OFDM_ICI_JSTSP_2021}.}}} $\boldX$ and the fast-time/slow-time observations $\boldY$ in \eqref{eq_ym_all_multi}, the problem of interest for OFDM radar sensing in the presence of PN is to estimate the target parameters $\alpha$, $\tau$ and $\nu$, which inherently involves estimating the PN matrix $\boldW$ and compensating for its effect on $\boldY$. To tackle this problem, we first derive the statistical properties of the PN process in Sec.~\ref{sec_pn_statistics}, which will then be utilized in Sec.~\ref{sec_alg_est} to propose a novel algorithm for joint estimation of delay, Doppler and PN. In Sec.~\ref{sec_extension}, we take a step further by exploiting PN as something beneficial for radar sensing.

\section{Phase Noise Statistics}\label{sec_pn_statistics}
This section provides a statistical characterization of the PN process $\phi(t-\tau) - \phi(t)$ in \eqref{eq_pn_def}, gives expressions of the PN variance for different types of oscillators and derives the structure of the PN covariance matrix.

\subsection{Statistics of Differential Phase Noise Process}\label{sec_stat_DPN}
Let $\phi(t)$ be a zero-mean Gaussian random process with variance $\sigmaphi(t)$ \cite{PN_2006,PN_OFDM_Sayed_TSP_2007}, i.e.,
\begin{equation}\label{eq_pn_stat}
\phi(t) \sim \mtN(0, \sigmaphi(t))~,
\end{equation}
where the form of $\sigmaphi(t)$ depends on the type of oscillator. Let us define the \textit{differential} PN (DPN) process \cite{DPN_93} (also called \textit{self-referenced} PN, \textit{increment} PN process \cite{Demir_PN_2006}, or PN \textit{variation} \cite{PN_2006}) as
\begin{equation}\label{eq_dpn}
\xi(t, \tau) \triangleq \phi(t) - \phi(t-\tau)~.
\end{equation} 
Since the DPN process\footnote{While $\xi(t, \tau)$ is called the DPN here to distinguish it from the actual PN process $\phi(t)$, we will mostly refer to $\xi(t, \tau)$ as PN in the remainder of the text for ease of exposition.} is stationary, its statistics depend only on the increment value (target delay) $\tau$ \cite[Sec.~IV]{Demir_PN_2006}. Hence, the DPN process can be statistically characterized as \cite{PN_2006}
\begin{equation}\label{eq_dpn_stat}
\xi(t, \tau) \sim \mtN(0, \sigmaxi(\tau))~,
\end{equation}
where $\sigmaxi(\tau)$ is the delay-dependent variance of $\xi(t, \tau) $. The following lemma provides the second-order statistics of $\xi(t, \tau) $.
\begin{lemma}\label{lemma_corr_dpn}
    The correlation function of the DPN process $\xi(t, \tau) $ in \eqref{eq_dpn} is given by
    \begin{align}\label{eq_exp_xi2}
     \rxii(\deltat, \tau) &= \rxii(t_1, t_2, \tau) \triangleq \E \left[  \xi(t_1, \tau)  \xi(t_2, \tau) \right] \\ \nonumber &= \frac{ \sigmaxi(\tau+\deltat) + \sigmaxi(\tau-\deltat) }{2} - \sigmaxi(\deltat) ~,
\end{align}
where $\deltat \triangleq t_1 - t_2$ is the time difference.
\end{lemma}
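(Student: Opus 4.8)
The plan is to expand the product $\xi(t_1,\tau)\xi(t_2,\tau)$ in terms of the underlying PN process $\phi$ and to reduce every resulting second-order term to a delay-dependent increment variance. Substituting $\xi(t_i,\tau) = \phi(t_i) - \phi(t_i-\tau)$ and using linearity of expectation gives
\begin{align*}
\E\!\left[\xi(t_1,\tau)\xi(t_2,\tau)\right]
&= \E\!\left[\phi(t_1)\phi(t_2)\right] - \E\!\left[\phi(t_1)\phi(t_2-\tau)\right] \\
&\quad - \E\!\left[\phi(t_1-\tau)\phi(t_2)\right] + \E\!\left[\phi(t_1-\tau)\phi(t_2-\tau)\right] .
\end{align*}
The key step is then to remove the time-varying variance $\sigmaphi(\cdot)$ --- which cannot survive in the final, purely $(\deltat,\tau)$-dependent expression --- by writing each two-point correlation through the polarization identity $\E[\phi(u)\phi(v)] = \tfrac12\big(\E[\phi(u)^2] + \E[\phi(v)^2] - \E[(\phi(u)-\phi(v))^2]\big)$ and invoking $\E[\phi(u)^2] = \sigmaphi(u)$, which holds since $\phi$ is zero-mean by \eqref{eq_pn_stat}.

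Next I would use the stationarity of the DPN process asserted just before the lemma: $\phi(u)-\phi(v)$ is an increment of $\phi$, so its variance depends on $u-v$ only and, by \eqref{eq_dpn_stat}, equals $\sigmaxi(u-v)$; moreover $\sigmaxi(\cdot)$ is even since $\phi(u)-\phi(v) = -(\phi(v)-\phi(u))$. Plugging the polarization identity into the four terms above, the $\sigmaphi$ contributions cancel in pairs --- each of $\sigmaphi(t_1)$, $\sigmaphi(t_1-\tau)$, $\sigmaphi(t_2)$, $\sigmaphi(t_2-\tau)$ appears once with a $+$ sign and once with a $-$ sign --- leaving only increment-variance terms at the lags $t_1-t_2=\deltat$, $t_1-(t_2-\tau)=\deltat+\tau$, $(t_1-\tau)-t_2=\deltat-\tau$, and $(t_1-\tau)-(t_2-\tau)=\deltat$. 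Collecting them gives $\tfrac12\sigmaxi(\deltat+\tau) + \tfrac12\sigmaxi(\deltat-\tau) - \sigmaxi(\deltat)$, and rewriting $\sigmaxi(\deltat-\tau) = \sigmaxi(\tau-\deltat)$ by evenness yields exactly \eqref{eq_exp_xi2}.

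The one delicate point --- the step I would spell out in full --- is the claim that $\E[(\phi(u)-\phi(v))^2]$ is a function of $u-v$ alone, equal to $\sigmaxi(u-v)$ under the even extension of $\sigmaxi$. This is precisely the stationarity of the increment (DPN) process referenced in the text via \cite[Sec.~IV]{Demir_PN_2006}; it is also what makes the time-dependent $\sigmaphi$ terms dispensable, so without it the identity would not even be well posed, its right-hand side depending only on $\tau$ and $\deltat$. Everything else is routine bookkeeping (expansion, polarization substitution, pairwise cancellation). As a consistency check I would note that the resulting $\rxii(\deltat,\tau)$ is even in $\deltat$, matching the wide-sense stationarity of $\xi(t,\tau)$ in $t$ that is implicit in writing $\rxii$ as a function of the single variable $\deltat$.
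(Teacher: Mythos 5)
Your proof is correct and follows essentially the same route as the paper's: expand the four cross-correlations of $\phi$, express each via $\E[\phi(u)\phi(v)] = \tfrac12\big(\sigmaphi(u)+\sigmaphi(v)-\sigmaxi(u-v)\big)$ (your polarization identity is exactly the paper's Eq.~\eqref{eq_rphii}), cancel the $\sigmaphi$ terms pairwise, and invoke evenness of $\sigmaxi$. No substantive difference.
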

\begin{proof}
    Please see Appendix~\ref{app_corr_dpn}.
\end{proof}
\vspace{-0.25in}

\subsection{PN Variance for Different Oscillator Types}
Lemma~\ref{lemma_corr_dpn} allows us to compute the correlation of $\xi(t, \tau) $ in terms of its delay-dependent variance function $\sigmaxi(\tau)$. We now provide expressions for $\sigmaxi(\tau)$ by distinguishing between two realizations of an oscillator, namely, free-running oscillator (FRO) and phase-locked loop (PLL) synthesizer \cite{PN_OFDM_PLL_TCOM_2007,Demir_PN_2006}.

\subsubsection{Free-Running Oscillators (FROs)}
For FROs, the variance of the DPN $\xi(t, \tau)$ in \eqref{eq_dpn_stat} is given by \cite[Sec.~V]{Demir_PN_2006}, \cite[Sec.~V,\,VI]{PN_2006}, \cite[Sec.~III-A]{PN_OFDM_PLL_TCOM_2007}
\begin{align} \label{eq_sigmatauwn}
    \sigmaxi(\tau) &= 4 \pi \fdb \abs{\tau} ~,
\end{align}
where $\fdb$ is the $3 \, \rm{dB}$ bandwidth of the Lorentzian oscillator spectrum.

\subsubsection{Phase-Locked Loop (PLL) Synthesizers}
For PLL architectures, the delay-dependent variance in \eqref{eq_dpn_stat} can be expressed as \cite[Sec.~VII-A]{Demir_PN_2006}, \cite[Sec.~III]{PN_2006}
\begin{align}\label{eq_sigmatauwn_pll}
    \sigmaxi(\tau) &= \frac{2\fdb}{\floop} \left(1 - e^{-2 \pi \floop \abs{\tau}} \right) ~, 
\end{align}
where $\floop$ denotes the loop bandwidth of PLL.

We note that PLL degenerates to FRO with decreasing $\floop$, i.e., \eqref{eq_sigmatauwn_pll} converges to \eqref{eq_sigmatauwn} as $\floop \to 0$. With the expressions \eqref{eq_sigmatauwn} and \eqref{eq_sigmatauwn_pll}, a complete statistical characterization of $\xi(t, \tau)$ can be obtained via \eqref{eq_dpn_stat} and \eqref{eq_exp_xi2} for FROs and PLLs, which, in turn, yields the covariance matrix of the fast-time/slow-time PN samples in $\boldW$ in \eqref{eq_ym_all_multi}. As an example, Fig.~\ref{fig_PN_covariance_PLL_FRO} plots the covariance of $\xi(t, \tau)$ at different target delays $\tau$ for both PLLs and FROs. The figure illustrates the \textit{delay-dependency of the PN statistics} in OFDM radar sensing (as opposed to OFDM communications, e.g., \cite{OFDM_Cons_PN_TCOM_2004,PN_BCRB_TSP_2013,OFDM_PN_HCRB_2014}) and provides insights into the effect of PLL based control on the correlation behavior of PN.

\begin{figure}
	\centering
    \vspace{-0.1in}
	\includegraphics[width=0.95\linewidth]{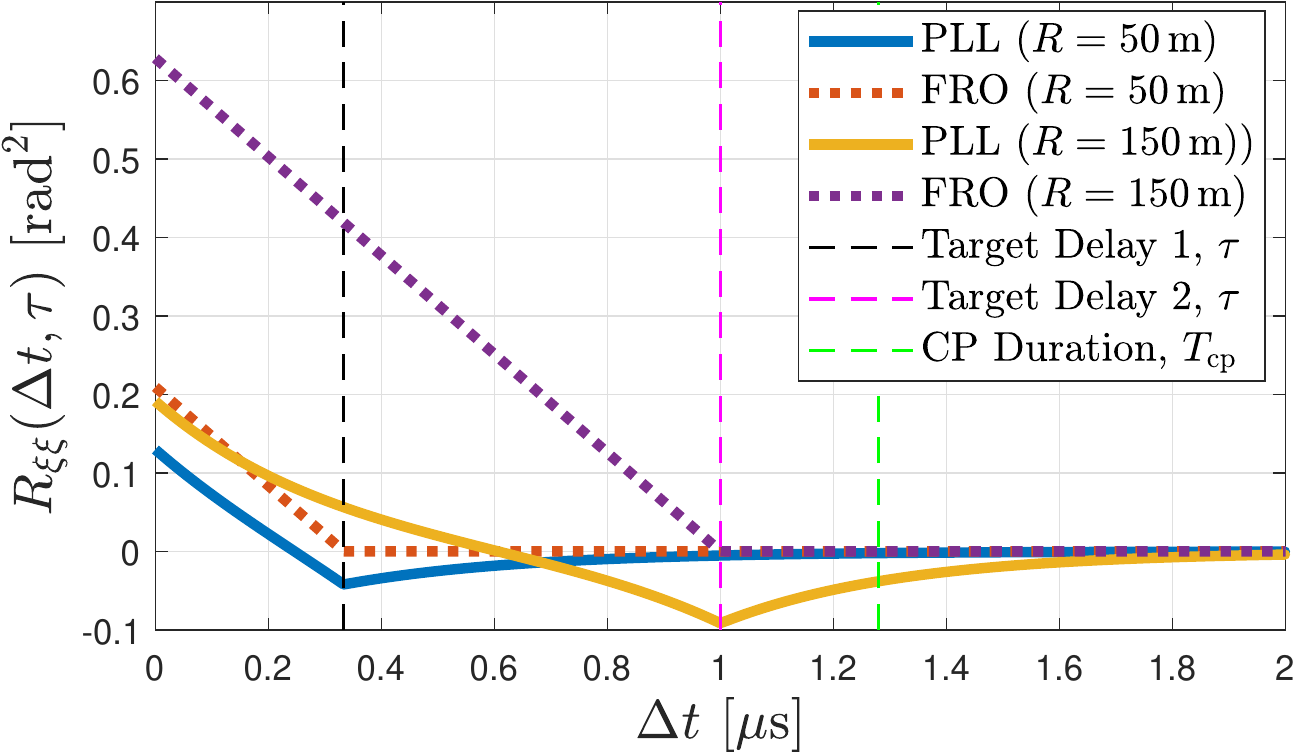}
	\vspace{-0.1in}
	\caption{Covariance of the DPN process $\xi(t, \tau)$ in \eqref{eq_dpn}, computed via the expression in \eqref{eq_exp_xi2}, for PLL and FRO architectures at two different target ranges $R = 50 \, \rm{m}$ and $R = 150 \, \rm{m}$, where the $3 \, \rm{dB}$ bandwidth is $\fdb = 50 \, \rm{kHz}$ and the PLL loop bandwidth is $\floop = 500 \, \rm{kHz}$. \rev{Contrary to OFDM communications, the PN statistics in OFDM radar depend on target delay, as also illustrated in Fig.~\ref{fig_scenario}}.}
	\label{fig_PN_covariance_PLL_FRO}
	\vspace{-0.1in}
\end{figure}

\subsection{Delay-Dependent PN Covariance Matrix}\label{sec_delay_PN_mat}
Let $\ww \triangleq \vecc{\boldW} \in \complexset{NM}{1}$ and $\bxi \in \realset{NM}{1}$ be the sampled version of the PN process in \eqref{eq_dpn} over the entire OFDM frame, i.e., $\ww = e^{-j \bxi}$ from \eqref{eq_pn_def}. Then, $\bxi$ can be statistically characterized as\rev{\footnote{\label{fn_mult_ext}\rev{The derivation of PN statistics can be straightforwardly extended to the multi-target case by computing cross-correlation of PN vectors associated with different targets at different delays, following the arguments in Lemma~\ref{lemma_corr_dpn}.}}}
\begin{align}\label{eq_bxi_stat}
    \bxi \sim \mtN(\boldzero, \boldR(\tau))~,
\end{align}
where $\boldR(\tau) \in \realset{NM}{NM}$ is the delay-dependent positive definite covariance matrix of $\bxi$. Using \eqref{eq_dpn_stat} and \eqref{eq_exp_xi2}, the $\thn{(i_1,i_2)}$ entry of $\boldR(\tau)$ can be written as
\begin{align}\label{eq_rtau_entries}
    \left[ \boldR(\tau) \right]_{i_1,i_2} = \rxii(\deltat_{i_1 i_2}, \tau) ~,
\end{align}
where
\begin{align}\label{eq_deltat}
    \deltat_{i_1 i_2} &\triangleq (i_1-i_2) \Ts +(m_1-m_2)\Tcp
\end{align}
for $(i_1, i_2) = (n_1 + m_1 N, n_2 + m_2 N)$, with $0 \leq n_1, n_2 \leq N-1$ and $0 \leq m_1, m_2 \leq M-1$ denoting the fast-time and slow-time sample indices, respectively, and $\Ts = T/N$ the sampling interval. We note from \eqref{eq_deltat} that due to CP removal, $\deltat_{i_1 i_2}$ depends not only on $i_1-i_2$, but also on the difference between symbol (slow-time) indices for PN samples belonging to different symbols.

From \eqref{eq_rtau_entries} and \eqref{eq_deltat}, it is straightforward to see that $\boldR(\tau)$ is a symmetric \textit{Toeplitz-block Toeplitz} matrix \cite{blockToeplitzInv_83} consisting of $M \times M$ blocks of size $N \times N$:
\begin{align}\label{eq_toep_block}
    \boldR(\tau) = \begin{bmatrix} \boldR_0(\tau) & \boldR_1(\tau) & \ldots  & \boldR_{M-1}(\tau) \\
    \boldR_1^T(\tau) & \ddots & \ddots &   \\
    & \ddots & \ddots & \boldR_1(\tau) \\
    \boldR_{M-1}^T(\tau) & &  \boldR_1^T(\tau) & \boldR_0(\tau) 
    \end{bmatrix} ~,
\end{align}
where the $\thn{m}$ Toeplitz block $\boldR_m(\tau) \in \realset{N}{N}$ is given by
\begin{align} \label{eq_ind_blocks}
    [\boldR_m(\tau)]_{n_1,n_2} = \rxii(\deltat^{(m)}_{n_1 n_2}, \tau)
\end{align}
with $\deltat^{(m)}_{n_1 n_2} \triangleq (n_1-n_2)\Ts -m \Tsym$. While PLLs exhibit the generic Toeplitz-block Toeplitz structure in \eqref{eq_toep_block}, FROs lead to a more special covariance structure, as pointed out in the following lemma.

\begin{lemma}\label{lemma_fro_blkdiag}
    The PN covariance matrix $\boldR(\tau)$ in \eqref{eq_toep_block} is block-diagonal for FROs with $\tau \leq \Tcp$, i.e.,
    \begin{align} \label{eq_R_blkdiag}
    \boldR(\tau) = \blkdiagg{\boldR_0(\tau),\ldots,\boldR_0(\tau)} ~.
\end{align}
\end{lemma}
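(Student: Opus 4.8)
The plan is to reduce the claim to showing that every off-diagonal block $\boldR_m(\tau)$ with $m\neq 0$ in \eqref{eq_toep_block} is the zero matrix; since the diagonal blocks there are all equal to $\boldR_0(\tau)$, this yields \eqref{eq_R_blkdiag} at once. First I would substitute the FRO variance $\sigmaxi(s)=4\pi\fdb\abs{s}$ from \eqref{eq_sigmatauwn} into the correlation formula \eqref{eq_exp_xi2} of Lemma~\ref{lemma_corr_dpn} and invoke the elementary identity $\abs{\tau+\deltat}+\abs{\tau-\deltat}=2\max(\abs{\tau},\abs{\deltat})$ to obtain
\begin{align}
    \rxii(\deltat,\tau)=4\pi\fdb\big(\max(\abs{\tau},\abs{\deltat})-\abs{\deltat}\big)~.
\end{align}
This makes explicit that $\rxii(\deltat,\tau)=0$ as soon as $\abs{\deltat}\geq\tau$ (here $\tau=2R/c\geq 0$, so $\abs{\tau}=\tau$); in words, the differential PN of a free-running oscillator is uncorrelated for any sample pair whose time separation exceeds the round-trip delay.

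Next I would bound the time separation of two samples lying in \emph{different} OFDM symbols. From the definition below \eqref{eq_ind_blocks}, the $\thn{m}$ block has entries indexed by separations $\deltat^{(m)}_{n_1 n_2}=(n_1-n_2)\Ts-m\Tsym$ with $\abs{m}\geq 1$ and $\abs{n_1-n_2}\leq N-1$. Using $T=N\Ts$ and $\Tsym=\Tcp+T$, the reverse triangle inequality gives
\begin{align}
    \bigl\lvert\deltat^{(m)}_{n_1 n_2}\bigr\rvert\;\geq\;\abs{m}\Tsym-(N-1)\Ts\;\geq\;\Tsym-(N-1)\Ts\;=\;\Tcp+\Ts\;>\;\Tcp\;\geq\;\tau~,
\end{align}
where the last step invokes the hypothesis $\tau\leq\Tcp$. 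Combining the two displays, $[\boldR_m(\tau)]_{n_1,n_2}=\rxii(\deltat^{(m)}_{n_1 n_2},\tau)=0$ for all $n_1,n_2$ whenever $m\neq 0$, hence $\boldR_m(\tau)=\boldzero$ and \eqref{eq_R_blkdiag} follows.

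I do not expect a genuine obstacle: the argument is essentially one computation followed by one inequality. The only point that needs care is the sampling-grid bookkeeping — one must use the CP-aware separation \eqref{eq_deltat}, whose extra $(m_1-m_2)\Tcp$ term is exactly what forces the inter-symbol gap to be at least a full symbol duration, so that after subtracting the intra-symbol spread $(N-1)\Ts$ one is left with $\Tcp+\Ts$ rather than merely $N\Ts$; without CP removal the decoupling would not be this clean. It is also worth remarking in the proof that this matches the intuition that the ``memory'' of an FRO's differential PN equals the delay $\tau$, so that the standard OFDM-radar CP condition $\Tcp\geq\tau$ is precisely what makes the per-symbol PN processes mutually uncorrelated; since $\bxi$ is jointly Gaussian by \eqref{eq_bxi_stat}, the vanishing off-diagonal blocks in fact correspond to statistically independent per-symbol PN.
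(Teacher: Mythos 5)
Your proof is correct and follows essentially the same route as the paper's Appendix~B: specialize the correlation formula \eqref{eq_exp_xi2} to the FRO variance \eqref{eq_sigmatauwn} to get a correlation that vanishes for $\abs{\deltat}\geq\tau$, then observe that samples in distinct symbols are separated by at least $\Tcp\geq\tau$, so every off-diagonal block is zero. Your explicit bound $\abs{\deltat^{(m)}_{n_1 n_2}}\geq\Tcp+\Ts$ is marginally sharper than the paper's stated $\geq\Tcp$, but the argument is otherwise identical.
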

\begin{proof}
    Please see Appendix~\ref{app_blk_diag}.
\end{proof}

An intuitive interpretation of Lemma~\ref{lemma_fro_blkdiag} can be provided as follows. Since PN samples in different symbols are separated in time by at least $\Tcp$, the PN process in \eqref{eq_dpn} with $\tau \leq \Tcp$ becomes uncorrelated from one symbol to another. In other words, the time intervals for PN accumulation \cite{Demir_PN_2006} corresponding to $\xi(t, \tau)$ and $\xi(t+\Tcp, \tau)$ are non-overlapping, leading to uncorrelated PN in the absence of a control loop. This result is also corroborated by Fig.~\ref{fig_PN_covariance_PLL_FRO}, where the correlation in case of FRO is zero for $\deltat \geq \tau$, while the correlation for PLL can have non-zero values for $\deltat \geq \Tcp$.


\section{Proposed Algorithm for Delay-Doppler Estimation under Phase Noise}\label{sec_alg_est}
In this section, using the statistical characterization of PN derived in \eqref{eq_bxi_stat} and \eqref{eq_toep_block}, we formulate the sensing problem stated in Sec.~\ref{sec_formulation} using a hybrid ML/MAP estimation approach and propose a novel iterated small angle approximation (ISAA) algorithm to jointly estimate delay, Doppler and PN.

\subsection{Hybrid ML/MAP Estimator}\label{sec_hyb_est}
To derive the hybrid ML/MAP estimator of delay, Doppler and PN, we first rewrite the observations in \eqref{eq_ym_all_multi} as
\begin{align}\label{eq_y_mult_doppler}
    \yy = \alpha \, \boldXi \qq(\tau, \nu) + \zz ~,
\end{align}
where $\yy \triangleq \vecc{\boldY} \in \complexset{NM}{1}$, $\zz \triangleq \vecc{\boldZ} \in \complexset{NM}{1}$, 
\begin{align} \label{eq_xi}
\boldXi &\triangleq \diag{e^{-j \bxi}} \in \complexset{NM}{NM} ~,
\\
\label{eq_qtaunu}
    \qq(\tau, \nu) &\triangleq \vecc{\FF_N^{H} \rev{\Big[}\boldX \odot \bb(\tau) \cc^{H}(\nu) \rev{\Big]} } \in \complexset{NM}{1} ~.
\end{align}
Our goal herein is to estimate from \eqref{eq_y_mult_doppler} the unknown parameter vector $\etab = \left[ \tau, \nu, \alpha, \bxi^T  \right]^T$,
consisting of both random ($\bxi$) and deterministic ($\tau, \nu, \alpha$) parameters\rev{\footnote{\label{fn_pn_tv}\rev{Estimation of $\etab$ should be performed for each new OFDM frame due to the time-varying nature of the PN and possible movement of the target in the delay-Doppler plane. Although the PN $\bxi$ has time-invariant statistics specified in \eqref{eq_bxi_stat}, it will have a different (time-varying) realization in each frame and thus needs to be re-estimated by re-executing the proposed method in Sec.~\ref{sec_alg_est} as new observations in the form of \eqref{eq_y_mult_doppler} arrive.}}}. Then, the hybrid ML/MAP estimator of $\etab$ can be written as \cite{Hybrid_ML_MAP_TSP}
\begin{align}\label{eq_hybrid_ml_map}
    (\tauhat, \nuhat, \alphahat, \bxihat  ) = \arg \max_{\tau, \nu, \alpha, \bxi} f_{\yy, \bxi} (\yy, \bxi; \tau, \nu , \alpha) ,
\end{align}
where $f_{\yy, \bxi} (\yy, \bxi; \tau, \nu , \alpha)$ is the joint PDF of $\yy$  and $\bxi$ as a function of the deterministic parameters $\tau$, $\nu$ and $\alpha$ 
\begin{align}\label{eq_pdf_all}
    f_{\yy, \bxi} (\yy, \bxi; \tau, \rev{\nu}, \alpha) = f_{\yy \lvert \bxi} (\yy \lvert \bxi; \tau, \rev{\nu}, \alpha) f_{\bxi}(\bxi; \tau) ~,
\end{align}
$f_{\yy \lvert \bxi} (\yy \lvert \bxi; \tau, \nu, \alpha)$ is the conditional PDF of $\yy$ given $\bxi$, and $f_{\bxi}(\bxi; \tau)$ is the \textit{a-priori} PDF of $\bxi$. It follows from \eqref{eq_bxi_stat} and \eqref{eq_y_mult_doppler} that
\begin{align}
    &f_{\yy \lvert \bxi} (\yy \lvert \bxi; \tau, \nu, \alpha) \\ \nonumber &= \frac{1}{(2\pi \sigma^2)^{NM} } \exp\left\{ - \frac{ \norm{\yy - \alpha \, \boldXi \qq(\tau, \nu) }^2 }{2 \sigma^2} \right\} ~, \\ \label{eq_pdf_xi}
    &f_{\bxi}(\bxi; \tau)  \\ \nonumber &= \frac{1}{\sqrt{(2 \pi)^{NM} \detm{\boldR(\tau)}  } } \exp\left\{ - \frac{ \bxi^T \boldR(\tau)^{-1} \bxi}{2 } \right\} ~.
\end{align}
Plugging \eqref{eq_pdf_all}--\eqref{eq_pdf_xi} into \eqref{eq_hybrid_ml_map} yields 
\begin{align}\label{eq_hybrid_ml_map2}
    ( \tauhat, \nuhat, \alphahat, \bxihat ) &= \arg \min_{\tau, \nu, \alpha, \bxi} \Bigg\{ \frac{ \norm{\yy - \alpha \, \boldXi \qq(\tau, \nu) }^2 }{2 \sigma^2} \\ \nonumber
    & ~~~~~~+ \frac{ \bxi^T \boldR(\tau)^{-1} \bxi + \log \det \boldR(\tau) }{2 }     \Bigg\} ~.
\end{align}
For given $\tau$, $\nu$ and $\bxi$, the optimal estimate of $\alpha$ in \eqref{eq_hybrid_ml_map2} is given by
\begin{align}\label{eq_alpha_hat}
    \alphahat = \frac{\qq^H(\tau, \nu) \boldXi^H \yy  }{ \qq^H(\tau, \nu) \boldXi^H \boldXi \qq(\tau, \nu)  } = \frac{\qq^H(\tau, \nu) \boldXi^H \yy   }{ \norm{\boldX}_F^2  } ~,
\end{align}
where the last equality follows from \eqref{eq_steer_delay}, \eqref{eq_steer_doppler} and \eqref{eq_qtaunu}:
\begin{align}\label{eq_qq_norm}
    \norm{\qq(\tau, \nu) }^2 &= \norm{ \FF_N^{H} \Big(\boldX \odot \bb(\tau) \cc^{H}(\nu) \Big) }_F^2 
    \\ \nonumber
    &= \norm{  \boldX \odot \bb(\tau) \cc^{H}(\nu)  }_F^2
    = \norm{\boldX}_F^2 ~.
\end{align}

Inserting \eqref{eq_alpha_hat} into \eqref{eq_hybrid_ml_map2}, the hybrid ML/MAP problem becomes
\begin{align}\label{eq_hybrid_ml_map_doppler2}
    (\tauhat, \nuhat, \bxihat) &= \arg \min_{ \tau, \nu, \bxi} ~ \frac{ \yy^H \projnull{\boldXi \qq(\tau, \nu)} \yy }{ \sigma^2} +  \bxi^T \boldR(\tau)^{-1} \bxi
    \\ \nonumber &~~+ \log \det \boldR(\tau)     ~,
\end{align}
where
\begin{align}\label{eq_projnull2}
    \projnull{\boldXi \qq(\tau, \nu)} &= \Imatrix - \frac{ \boldXi \qq(\tau, \nu) \qq^H(\tau, \nu)  \boldXi^H  }{ \norm{\qq(\tau, \nu) }^2} 
    \\ \nonumber
    &= \boldXi \boldXi^H -\frac{ \boldXi \qq(\tau, \nu) \qq^H(\tau, \nu)  \boldXi^H  }{\norm{\qq(\tau, \nu) }^2}  
    \\ \nonumber
    &= \boldXi \left( \Imatrix - \frac{ \qq(\tau, \nu) \qq^H(\tau, \nu)   }{ \norm{\qq(\tau, \nu) }^2}   \right) \boldXi^H
    \\ \nonumber
    &= \boldXi \projnull{ \qq(\tau, \nu) }  \boldXi^H ~,
\end{align}
with $\boldXi \boldXi^H = \Imatrix$ resulting from \eqref{eq_xi}. Substituting \eqref{eq_projnull2} into \eqref{eq_hybrid_ml_map_doppler2}, we obtain
\begin{align}\label{eq_hybrid_ml_map_doppler3}
    (\tauhat, \nuhat, \bxihat )  = \arg \min_{ \tau, \nu, \bxi} ~ \llr(\tau, \nu, \bxi) ~,
\end{align}
where 
\begin{align}\nonumber
    \llr(\tau, \nu, \bxi) &= \frac{ \yy^H \boldXi \projnull{ \qq(\tau, \nu) }  \boldXi^H  \yy }{ \sigma^2} +  \bxi^T \boldR(\tau)^{-1} \bxi \\ \label{eq_hybrid_ml_map_doppler4}  &~~+ \log \det \boldR(\tau)     ~.
\end{align}

\subsection{Iterated Small Angle Approximation}
The hybrid ML/MAP optimization problem derived in \eqref{eq_hybrid_ml_map_doppler3} seems quite challenging to solve, primarily due to highly non-linear behavior of the high-dimensional PN vector $\bxi$ in the objective \eqref{eq_hybrid_ml_map_doppler4} (see \eqref{eq_xi}), leading to many local optima. A possible remedy to overcome such non-linearity is to employ \textit{small angle approximation} (SAA) $e^{j \theta} \approx 1 + j \theta$ \cite{OFDM_Joint_PHN_TSP_2006} for small $\theta$. However, while this approach can work well for PN estimation in communications, it may lead to large errors in our sensing problem of interest. This is especially the case for distant targets since the PN variance increases with target delay, as seen from \eqref{eq_sigmatauwn} and \eqref{eq_sigmatauwn_pll}, invalidating the assumption of small $\theta$.

To circumvent the non-linearity of PN in \eqref{eq_hybrid_ml_map_doppler4} and deal with large PN variances, we propose an \textit{iterated small angle approximation} (ISAA) approach that invokes SAA around a given estimate of PN at each iteration, starting from an all-zeros estimate $\bxihat = \boldzero$. More specifically, suppose we have an estimate of PN vector at the $\thn{i}$ iteration, denoted by $\bxihat^{(i)} \in \realset{NM}{1}$, and we wish to approximate the exponential PN term $e^{-j\bxi}$ in \eqref{eq_hybrid_ml_map_doppler4} around $\bxihat^{(i)}$. 
To this end, we define the \textit{residual PN} 
\begin{align}\label{eq_bxidelta}
    \bxidelta = \bxi - \bxihat^{(i)}
\end{align}
for a given PN estimate $\bxihat^{(i)}$. Re-writing \eqref{eq_hybrid_ml_map_doppler4} as a function of $\bxidelta$, we have 
\begin{align}\nonumber
    &\llr(\tau, \nu, \bxidelta \rev{+ \bxihat^{(i)}}) = \frac{ (\upsb^{\rev{(i)}})^T \diag{\yy}^H \projnull{ \qq(\tau, \nu)}  \diag{\yy}   (\upsb^{\rev{(i)}})^\conj  }{ \sigma^2}    
    \\  \label{eq_nll_original_saa}  &~~+  (\bxidelta + \bxihat^{(i)})^T \boldR(\tau)^{-1} (\bxidelta + \bxihat^{(i)}) + \log \det \boldR(\tau) ~,
\end{align}
where $\upsb^{\rev{(i)}} = e^{-j (\bxidelta + \bxihat^{(i)})} \in \complexset{NM}{1}$. In order to solve \eqref{eq_hybrid_ml_map_doppler3} for $\bxi$ in an iterative fashion, we propose to solve the following minimization problem at the $\thn{i}$ iteration for the residual PN $\bxidelta$, given a PN estimate $\bxihat^{(i)}$ at hand: 
\begin{align}\label{eq_hybrid_ml_map_doppler_res}
    (\tauhat, \nuhat, \bxideltahat )  = \arg \min_{ \tau, \nu, \bxidelta} ~ \llr(\tau, \nu, \bxidelta \rev{+ \bxihat^{(i)}}) ~.
\end{align}

To tackle \eqref{eq_hybrid_ml_map_doppler_res}, we invoke SAA for $\upsb^{\rev{(i)}}$ in \eqref{eq_nll_original_saa} around $\bxidelta$ to obtain
\begin{align}\label{eq_saa_delta}
    \upsb^{\rev{(i)}} = e^{-j \bxihat^{(i)}} \odot e^{-j\bxidelta} \approx e^{-j \bxihat^{(i)}} \odot (\boldone - j \bxidelta) ~.
\end{align}
Plugging \eqref{eq_saa_delta} into the first term in \eqref{eq_nll_original_saa}, we obtain the approximation
\begin{align} \nonumber
    &(\upsb^{\rev{(i)}})^T \diag{\yy}^H \projnull{ \qq(\tau, \nu)}  \diag{\yy}  (\upsb^{\rev{(i)}})^\conj
    \\ \nonumber
    &\approx \left[e^{-j \bxihat^{(i)}} \odot (\boldone - j \bxidelta)\right]^T \diag{\yy}^H \projnull{ \qq(\tau, \nu)}  \diag{\yy} 
    \\ \nonumber
    &~~\times \left[e^{j \bxihat^{(i)}} \odot (\boldone + j \bxidelta) \right] 
    \\ \label{eq_approx_saa}
    &= (\boldone - j \bxidelta)^T \boldD^{\rev{(i)}}(\tau, \nu)   (\boldone + j \bxidelta)~,
\end{align}
where $\boldD^{\rev{(i)}}(\tau, \nu) \in \complexset{NM}{NM}$ is defined as\footnote{Note that \eqref{eq_approx_saa} is guaranteed to be real since $\boldD(\tau, \nu)$ in \eqref{eq_gammanutau} is Hermitian.}
\begin{align}\label{eq_gammanutau}
    \boldD^{\rev{(i)}}(\tau, \nu)   &\triangleq \rev{\Big[} \diag{\yy}^H \projnull{ \qq(\tau, \nu)}  \diag{\yy} \rev{\Big]} \odot \rev{\Big[} e^{-j \bxihat^{(i)}} (e^{j \bxihat^{(i)}})^T \rev{\Big]} ~.
\end{align}
Now, substituting the approximation \eqref{eq_approx_saa} into \eqref{eq_nll_original_saa}, we have
\begin{align} 
    &\llr(\tau, \nu, \bxidelta\rev{+ \bxihat^{(i)}}) \\ \label{eq_nll_original_saa2}
    &\approx \frac{ 1}{ \sigma^2} (\boldone - j \bxidelta)^T \boldD^{\rev{(i)}}(\tau, \nu)   (\boldone + j \bxidelta)
    \\ \nonumber
    &~~+ (\bxihat^{(i)})^T \boldR(\tau)^{-1} \bxihat^{(i)} + \bxidelta^T \boldR(\tau)^{-1} \bxidelta + 2 \bxidelta^T \boldR(\tau)^{-1} \bxihat^{(i)}
    \\ \nonumber
    &~~ + \log \det \boldR(\tau) 
    \\  \label{eq_nll_original_saa2_2}
    &= \frac{ 1}{ \sigma^2} \Big[ \boldone^T \realp{\boldD^{\rev{(i)}}(\tau, \nu)} \boldone + \bxidelta^T \realp{\boldD^{\rev{(i)}}(\tau, \nu)} \bxidelta
    \\ \nonumber &~~+ 2 \bxidelta^T \imp{\boldD^{\rev{(i)}}(\tau, \nu} \boldone  \Big] + \bxidelta^T \boldR(\tau)^{-1} \bxidelta + 2 \bxidelta^T \boldR(\tau)^{-1} \bxihat^{(i)}
    \\ \nonumber &~~+ (\bxihat^{(i)})^T \boldR(\tau)^{-1} \bxihat^{(i)} +  \log \det \boldR(\tau) 
    \\ \label{eq_nll_original_saa3}
    &= \bxidelta^T \left( \frac{1}{\sigma^2} \realp{\boldD^{\rev{(i)}}(\tau, \nu)} +  \boldR(\tau)^{-1} \right) \bxidelta 
    \\ \nonumber &~~+ 2 \bxidelta^T \left( \frac{1}{\sigma^2} \imp{\boldD^{\rev{(i)}}(\tau, \nu)} \boldone  +  \boldR(\tau)^{-1} \bxihat^{(i)}\right)
    \\ \nonumber &~~+ \frac{ 1}{ \sigma^2} \boldone^T \realp{\boldD^{\rev{(i)}}(\tau, \nu)} \boldone + (\bxihat^{(i)})^T \boldR(\tau)^{-1} \bxihat^{(i)} +  \log \det \boldR(\tau) ~,
\end{align}
where \eqref{eq_nll_original_saa2_2} is due to $\bxidelta^{\rev{(i)}}$ being a real vector and $\boldD(\tau, \nu)$ being a Hermitian matrix (please see Appendix~\ref{app_obtain} for details).

Observing that \eqref{eq_nll_original_saa3} is quadratic in $\bxidelta$, the optimal estimate of $\bxidelta$ that minimizes the approximated version of $\llr(\tau, \nu, \bxidelta\rev{+ \bxihat^{(i)}})$ in \eqref{eq_nll_original_saa3} can be written in closed-form for a given delay-Doppler pair $(\tau, \nu)$ as follows:
\begin{align} \nonumber
    \bxideltahat(\tau, \nu)
    &= - \boldR(\tau) \Big(  \realp{\boldD^{\rev{(i)}}(\tau, \nu)} \boldR(\tau) + \sigma^2  \Imatrix \Big)^{-1} 
    \\ \label{eq_pn_res_est_doppler}
    &~~~~~\times \left( \imp{\boldD^{\rev{(i)}}(\tau, \nu)} \boldone  + \sigma^2 \boldR(\tau)^{-1} \bxihat^{(i)} \right) ~.
\end{align}
Finally, using the residual estimate in \eqref{eq_pn_res_est_doppler} and the definition in \eqref{eq_bxidelta}, the PN estimate can be updated as
\begin{align} \label{eq_bxi_update}
    \bxihat^{(i+1)} = \bxihat^{(i)} + \bxideltahat(\tau, \nu) ~.
\end{align}

\subsection{Alternating Optimization to Solve \eqref{eq_hybrid_ml_map_doppler3}}\label{sec_alt_opt}
The iterative procedure developed in \eqref{eq_pn_res_est_doppler} and \eqref{eq_bxi_update} for updating the PN estimate $\bxihat$ as a function of delay and Doppler motivates an alternating optimization method to solve the original hybrid ML/MAP optimization problem \eqref{eq_hybrid_ml_map_doppler3}. Hence, we propose to estimate delay, Doppler and PN in \eqref{eq_hybrid_ml_map_doppler3} using an \textit{iterative refinement} approach that alternates between \textit{PN estimation} and \textit{delay-Doppler estimation} as follows:
\begin{itemize}
    \item \textbf{Update $\bxihat$ for Fixed $(\tau, \nu)$:} For a given delay-Doppler pair $(\tau, \nu)$, we compute the residual PN via \eqref{eq_pn_res_est_doppler} and update $\bxihat$ via \eqref{eq_bxi_update}.
    \item \textbf{Update $(\tau, \nu)$ for Fixed $\bxihat$:} For a given PN estimate $\bxihat$, we find the optimal delay-Doppler pair by solving \eqref{eq_hybrid_ml_map_doppler3}:
    \begin{align}\label{eq_hybrid_ml_map_doppler3_fixed}
    (\tauhat, \nuhat)  = \arg \min_{ \tau, \nu} ~ \llr(\tau, \nu, \bxihat) ~,
\end{align}
which is equivalent to (please see Appendix~\ref{app_delayDoppler} for details)
\begin{align}\label{eq_llrmax}
    (\tauhat, \nuhat)  &= \arg \max_{ \tau, \nu} ~ \llrt(\tau, \nu, \bxihat)  ~,
\end{align}
where
\begin{align}\nonumber
    \llrt(\tau, \nu, \bxihat) &= \frac{ \absbig{ \bb^H(\tau) \rev{\Big[} \boldX^\conj \odot \FF_N \big( \Wbhat^\conj \odot \boldY \big)  \rev{\Big]} \cc(\nu) }^2 }{ \sigma^2 \norm{\boldX}_F^2  }  \\ \label{eq_llrtilde} &~~ - \bxihat^T \boldR(\tau)^{-1} \bxihat - \log \det \boldR(\tau) ~,
\end{align}
and $\Wbhat \triangleq \veccinv{ e^{- j \bxihat}}$.
\end{itemize}
The entire algorithm to solve \eqref{eq_hybrid_ml_map_doppler3} is summarized in Algorithm~\ref{alg_map_isaa} and referred to as MAP-ISAA\rev{\footnote{\label{fn_data}\rev{Algorithm~\ref{alg_map_isaa} is agnostic to data symbols $\boldX$ in the sense that it can work well with arbitrary $\boldX$ and imposes no constraints on data symbols generated by the communications system.}}}.
\rev{Under certain conditions, Algorithm~\ref{alg_map_isaa} converges to a stationary point of \eqref{eq_hybrid_ml_map_doppler3} (please see Appendix~\ref{sec_conv_analysis} for a detailed convergence analysis).} Employing the conjugate gradient (CG) method \cite{OFDM_Joint_PHN_TSP_2006} to evaluate \eqref{eq_pn_res_est_doppler}, the per-iteration complexity of Algorithm~\ref{alg_map_isaa} can be obtained as (please see Appendix~\ref{app_comp_alg_map_isaa} for details)
\begin{align}
    \mathcal{O}\big( MN  \left( \log M + (I M_0 +1) \log N \right) \big) ~,
\end{align}
where $I$ is the number of CG iterations and $M_0$ is the number of dominant blocks of $\boldR(\tau)$ in \eqref{eq_toep_block}, with $M_0 = 1$ for FROs due to Lemma~\ref{lemma_fro_blkdiag} and $1 \leq M_0 \ll M$ for PLLs (typically, $M_0 \leq 3$).

\begin{algorithm}[t]
	\caption{Joint Delay, Doppler and PN Estimation with MAP Criterion via Iterated Small Angle Approximation (MAP-ISAA).}
	\label{alg_map_isaa}
	\begin{algorithmic}[1]
	    \State \textbf{Input:} Fast-time/slow-time observations $\boldY$ in \eqref{eq_ym_all_multi}, convergence thresholds $\epstau$ and $\epsnu$ for delay and Doppler, and maximum number of iterations $\imax$.
	    \State \textbf{Output:} Estimates of delay, Doppler and PN $\{ \tauhat, \nuhat, \bxihat \}$. 
	    \State \textbf{Initialization:} Set $i=0$.
	    \begin{enumerate}
	        \item Initialize the PN estimate to be the all-zeros vector, i.e., $\bxihat^{(0)} = \boldzero$, in accordance with \eqref{eq_bxi_stat}.
	        \item Initialize the delay-Doppler pair to be the output of the standard 2-D FFT method, i.e.,
	        \begin{align}\nonumber
                (\tauhat^{(0)}, \nuhat^{(0)})  &= \arg \max_{ \tau, \nu} ~  \absbig{ \bb^H(\tau) \left( \boldX^\conj \odot \FF_N \boldY   \right) \cc(\nu) }^2 ~.
            \end{align}
	    \end{enumerate}
	    \State \textbf{Iterated Approximation Steps:} 
	    \State \textbf{while} $i < \imax$
	    \Indent
	        \State \label{alg_pn_step} Update PN estimate via \eqref{eq_pn_res_est_doppler} and \eqref{eq_bxi_update}:
	        \begin{align*} 
                \bxihat^{(i+1)} = \bxihat^{(i)} + \bxideltahat(\tauhat^{(i)}, \nuhat^{(i)}) ~.
            \end{align*}
            \State \label{alg_dd_step} Update delay-Doppler estimate via \eqref{eq_llrmax} and \eqref{eq_llrtilde}:
            \begin{align*}
    (\tauhat^{(i+1)}, \nuhat^{(i+1)})  &= \arg \max_{ \tau, \nu} ~ \llrt(\tau, \nu, \bxihat^{(i+1)})  ~.
            \end{align*}
            \State Set $i = i+1$.
            \State \textbf{if} $\abs{\tauhat^{(i)}-\tauhat^{(i-1)}} \leq \epstau$ and $\abs{\nuhat^{(i)}-\nuhat^{(i-1)}} \leq \epsnu$
	    \Indent
	    \State \textbf{break}
	    \EndIndent
	    \State \textbf{end if}
	    \EndIndent
	    \State \textbf{end while}
	\end{algorithmic}
	\normalsize
\end{algorithm}



\section{Extension to Range-Ambiguous Targets: From Mitigation to Exploitation}\label{sec_extension}
We have up to now focused on how to estimate and \textit{mitigate} PN in the OFDM sensing problem of interest. In this section, we extend the proposed estimation framework to the case of range-ambiguous targets and devise a PN \textit{exploitation} approach to resolve range ambiguity.

\subsection{PN Exploitation: Concept Description}
\label{sec_pn_foe_friend}
An important peculiarity of PN in OFDM monostatic \textit{sensing} is that it provides additional source of information on target delays that can be exploited to improve delay estimation performance, whereas PN in OFDM \textit{communications} always degrades performance (e.g., \cite{OFDM_Cons_PN_TCOM_2004,PN_BCRB_TSP_2013,OFDM_PN_HCRB_2014,PN_OFDM_TSP_2017,PN_CohBw_2021_TWC}). More precisely, it is seen from \eqref{eq_steer_delay}, \eqref{eq_bxi_stat} and \eqref{eq_qtaunu} that both $\bb(\tau)$ and $\boldR(\tau)$, which are functions of the unknown delay $\tau$, have an impact on the observation \eqref{eq_y_mult_doppler} and the hybrid ML/MAP cost function \eqref{eq_hybrid_ml_map_doppler4}. Hence, in addition to the standard frequency-domain phase rotations in $\bb(\tau)$, the covariance matrix $\boldR(\tau)$ of the PN vector conveys information on target delay. As noticed from the elements of $\boldR(\tau)$ in \eqref{eq_exp_xi2}--\eqref{eq_rtau_entries}, there is no maximum unambiguous delay imposed by $\boldR(\tau)$ in estimating $\tau$. This implies that we can detect the true ranges of the targets with $\tau > T$ by exploiting the information in $\boldR(\tau)$. However, for the ideal case without PN, we can only extract the delay information from $\bb(\tau)$ (e.g., \cite{Passive_OFDM_2010,RadCom_Proc_IEEE_2011,OFDM_Radar_Corr_TAES_2020,OFDM_Radar_Phd_2014}), which leads to range ambiguity for targets with $\tau > T$ due to the periodicity of the complex exponential terms in \eqref{eq_steer_delay}.

\subsection{PN Exploitation: \rev{Proposed} Algorithm}\label{sec_pn_exp_alg}
Let us consider a range-ambiguous target with true (unambiguous) delay $\tau = \taup + k T$ for some integer $k \geq 1$ denoting the ambiguity index, where $0 < \taup \leq T$ is the principal delay of the target, i.e., $\taup = \tau \Mod{T}$. Based on the observations in Sec.~\ref{sec_pn_foe_friend}, we propose to resolve the range ambiguity using the PN statistics represented by $\boldR(\tau)$. The key insight here is that the observation-related (first) term in \eqref{eq_hybrid_ml_map_doppler4} assumes the same value for $\tau$ and $\taup$ due to inherent range ambiguity in $\bb(\cdot)$, while the statistics of the PN estimate $\bxihat$ at the output of Algorithm~\ref{alg_map_isaa} would match only with $\boldR(\tau)$ (not with $\boldR(\taup)$), enabling range ambiguity resolution\footnote{Please see Fig.~\ref{fig_covarianceProfilePNExploitation_SNR_20dB} in Sec.~\ref{sec_pn_exp} for an illustration of this phenomenon.}. Inspired by this insight, we formulate a parametric covariance matrix reconstruction problem\rev{\footnote{\label{fn_exip}\rev{The covariance matching problem formulated in \eqref{eq_parametric_Frob} has a theoretical justification based on the extended invariance principle (EXIP) \cite{stoica1989reparametrization,ottersten1998covariance} and the ML estimator of delay from $\bxihat$. Please refer to Appendix~\ref{sec_cov_match} for details.}}} that minimizes the Frobenius distance between the sample matrix $\boldRhat = \bxihat \bxihat^T$ and the parametric matrix $\boldR(\tau)$ in \eqref{eq_toep_block}:
\begin{align}\label{eq_parametric_Frob}
    \tauhatu = \arg \min_{ \tau} ~ \normsmall{ \boldR(\tau) - \boldRhat }_F^2 ~,
\end{align}
potentially yielding the \textit{unambiguous (true) delay estimate} $\tauhatu$. Exploiting the Toeplitz-block Toeplitz structure in \eqref{eq_toep_block}, the problem \eqref{eq_parametric_Frob} reduces to \cite{toeplitz_radar_2020}
\begin{align}\label{eq_parametric_Frob2}
    \tauhatu = \arg \min_{ \tau} ~  \normsmall{ [\boldR(\tau)]_{0,:} - \rrhat }^2 ~,
\end{align}
where $[\boldR(\tau)]_{0,:}$ is the first row of $\boldR(\tau)$ and
\begin{align}\label{eq_rhat_row}
    \rhat_i = \frac{1}{NM-i} \sum_{k=0}^{NM-i-1} \xihat_k \, \xihat_{k+i} ~,
\end{align}
with $\rrhat = [\rhat_0, \ldots, \rhat_{NM-1} ]^T$ and $\bxihat = [\xihat_0 , \ldots, \xihat_{NM-1}]^T$.

In Algorithm~\ref{alg_resolve_amb}, we present the proposed algorithm for range ambiguity resolution via PN exploitation. We restrict the search interval for range in Algorithm~\ref{alg_map_isaa} such that the resulting delay estimate $\tauhat$ is ambiguous. As seen from Algorithm~\ref{alg_resolve_amb}, the prior information extracted from the PN estimates is exploited to find the range ambiguity index of the target (i.e., ambiguity resolution), while the principal range estimate of the target comes from Algorithm~\ref{alg_map_isaa}.

\begin{algorithm}[t]
	\caption{PN Exploitation to Resolve Range Ambiguity.}
	\label{alg_resolve_amb}
	\begin{algorithmic}[1]
	    \State \textbf{Input:} Estimates of (ambiguous) delay and PN at the output of Algorithm~\ref{alg_map_isaa} $\{\tauhat, \bxihat \} $.
	    \State \textbf{Output:} Unambiguous (true) estimate of delay $\tauhatu$. 
	    \begin{enumerate}
	        \item Compute an estimate $\rrhat = [\rhat_0, \ldots, \rhat_{NM-1} ]^T$ of the first row of the PN covariance matrix $\boldR(\tau)$ in \eqref{eq_toep_block} by using the PN estimate $\bxihat$:
	        \begin{align} \label{eq_rhat}
                \rhat_i = \frac{1}{NM-i} \sum_{k=0}^{NM-i-1} \xihat_k \, \xihat_{k+i} ~,
            \end{align}
            where $\bxihat = [\xihat_0 , \ldots, \xihat_{NM-1}]^T$.
            \item Construct the set of possible delay values by unfolding the ambiguous delay estimate $\tauhat$ up to some maximum ambiguity index $K$:
            \begin{align}\label{eq_set_delay}
                \mathcal{T} = \{ \tau ~ \lvert ~ \tau = \tauhat + kT, \, k = 0, \ldots, K \}  ~.
            \end{align}
	        \item Find the unambiguous (true) estimate of delay by solving the parametric Toeplitz-block Toeplitz PN matrix reconstruction problem:
	        \begin{align}\label{eq_parametric_Frob3}
                \tauhatu = \arg \min_{ \tau \in \mathcal{T}} ~  \normsmall{ [\boldR(\tau)]_{0,:} - \rrhat }^2 ~.
            \end{align}
	    \end{enumerate}
	\end{algorithmic}
	\normalsize
\end{algorithm}

\section{Simulation Results}\label{sec_sim}
In this section, we assess the performance of the proposed delay-Doppler estimation algorithm for PN-impaired OFDM radar sensing using the mmWave setting in Table~\ref{tab_parameters}. In the observation model \eqref{eq_ym_all_multi}, the complex data symbols $\boldX$ are chosen randomly from the QPSK constellation. To evaluate the RMSE performances, we generate a total of $2500$ Monte Carlo realizations consisting of every combination of $50$ independent realizations of PN vector $\bxi$ and AWGN vector $\zz$ in \eqref{eq_y_mult_doppler}. Unless otherwise stated, we consider a target with $R = 30 \, \rm{m}$ and $v = 20 \, \rm{m/s}$, and an oscillator with $\fdb = 200 \, \rm{kHz}$ and $\floop = 1 \, \rm{MHz}$ (in the case of PLL) \cite{Demir_PN_2006}. Moreover, SNR is defined as $\snr = \abs{\alpha}^2/(2 \sigma^2)$ according to the model in \eqref{eq_ym_all_multi}. To provide comparative performance analysis, we consider the following benchmark processing schemes: 
\begin{itemize}
    \item \textit{MAP-ISAA}: The proposed hybrid ML/MAP estimation algorithm based on ISAA, described in Algorithm~\ref{alg_map_isaa}.
    
    \item \textit{2-D FFT}: The standard 2-D FFT method used in OFDM radar processing \cite{OFDM_Passive_Res_2017_TSP,Passive_OFDM_2010,RadCom_Proc_IEEE_2011,OFDM_Radar_Corr_TAES_2020,OFDM_Radar_Phd_2014}, which corresponds to the optimal estimator \rev{in the ML sense} in the absence of PN\rev{\footnote{\label{fn_2d_fft}\rev{The ML estimator in the absence of PN can be obtained as a special case of the hybrid ML/MAP estimator in \eqref{eq_hybrid_ml_map_doppler3} for known PN matrix with all elements being equal to $1$, i.e., $\boldW = \boldone_{N \times M}$, as mentioned in Sec.~\ref{sec_ideal}. This special case has already been derived in \eqref{eq_hybrid_ml_map_doppler3_fixed}--\eqref{eq_llrtilde} for a given PN estimate $\Wbhat$. Inserting $\Wbhat = \boldone_{N \times M}$ and $\bxihat = \boldzero_{NM}$ into \eqref{eq_llrtilde}, one can readily derive the ML estimator in \eqref{eq_fft_benc}, which can be implemented via 2-D FFT as $\bb(\tau)$ in \eqref{eq_steer_delay} and $\cc(\nu)$ in \eqref{eq_steer_doppler} are DFT matrix columns for a uniformly sampled delay-Doppler grid.}}}:
    \begin{align} \label{eq_fft_benc}
        (\tauhat, \nuhat) = \arg \rev{\max_{\tau, \nu}} ~ \absbig{ \bb^H(\tau) \left( \boldX^\conj \odot \FF_N \boldY   \right) \cc(\nu) }^2~.
    \end{align}
    
    \item \textit{2-D FFT (PN-free)}: The 2-D FFT method applied on the PN-free version of the radar observations, $\boldYfr$, in \eqref{eq_ym_special}: 
    \rev{\begin{align} \label{eq_fft_benc_pn_free}
        (\tauhat, \nuhat) = \arg \max_{\tau, \nu} ~ \absbig{ \bb^H(\tau) \left( \boldX^\conj \odot \FF_N \boldYfr   \right) \cc(\nu) }^2~.
    \end{align}}
    This benchmark will provide insights into PN-induced performance losses and gains in delay-Doppler estimation.
\end{itemize}
Besides the above schemes, we also plot the hybrid CRBs \cite{hybrid_CRB_LSP_2008,Hybrid_ML_MAP_TSP} to quantify the theoretical performance bounds. The hybrid CRB in the presence and absence of PN will be denoted as \quot{\textit{CRB}} and \quot{\textit{CRB (PN-free)}}, respectively, which theoretically lower-bound the RMSE of \quot{\textit{MAP-ISAA}} and \quot{\textit{2-D FFT (PN-free)}} algorithms.

\begin{table}\footnotesize
\caption{OFDM Simulation Parameters}
\vspace{0.05in}
\centering
    \begin{tabular}{|l|l|}
        \hline
        \textbf{Parameter} & \textbf{Value} \\ \hline
        Carrier Frequency, $\fc$  & $28 \, \rm{GHz}$ \\ \hline
        Total Bandwidth, $B$ & $50 \, \rm{MHz}$ \\ \hline
        Number of Subcarriers, $N$ & $256$ \\ \hline
        Number of Symbols, $M$ & $10$ \\ \hline
        Subcarrier Spacing, $\deltaf$ & $195.31 \, \rm{kHz}$  \\ \hline
        Symbol Duration, $T$ & $5.12 \, \mu \rm{s}$  \\ \hline
        Cyclic Prefix Duration, $\Tcp$ & $1.28 \, \mu \rm{s}$  \\ \hline
        Total Symbol Duration, $\Tsym$ & $6.40 \, \rm{\mu s}$  \\ \hline
    \end{tabular}
    \label{tab_parameters}
    \vspace{-0.1in}
\end{table}


In what follows, we first evaluate the RMSE performances of the considered processing schemes under various operating conditions with regard to SNR, oscillator quality and target range. Then, we illustrate the convergence behavior of the proposed algorithm in Algorithm~\ref{alg_map_isaa}. Finally, we demonstrate the PN exploitation capability of the proposed approach in Algorithm~\ref{alg_resolve_amb}.


\subsection{Performance with respect to SNR}\label{sec_perf_snr}
We first assess the performance of the proposed MAP-ISAA algorithm, along with the FFT-based benchmark methods, with respect to SNR. Fig.~\ref{fig_rmse_r30_v20_f3dB_200_FRO} and Fig.~\ref{fig_rmse_r30_v20_f3dB_200_PLL} show (for FRO and PLL architectures, respectively) the range and velocity RMSEs of the considered schemes\rev{\footnote{\label{fn_practical}\rev{Due to discretization of search space and its confinement to a finite interval for practical implementation of the estimators, the RMSE might slightly fall below the CRB in certain rare scenarios.}}}. 
In terms of ranging performance in Fig.~\ref{fig_range_rmse_r30_v20_f3dB_200_FRO} and Fig.~\ref{fig_range_rmse_r30_v20_f3dB_200_PLL}, the proposed algorithm achieves the CRB and significantly outperforms the 2-D FFT benchmark for both FRO and PLL architectures, especially at medium and high SNRs, where an order-of-magnitude improvement in ranging accuracy can be observed. It is seen that the performance of the standard FFT method saturates above a certain SNR level, while the proposed approach can effectively utilize the prior information on PN to compensate for its impact on the observations and avoid such saturation behavior by attaining its theoretical lower bound (which decreases with increasing SNR). Moreover, in compliance with the theoretical bounds, MAP-ISAA exhibits ranging performance very close to that achieved in the absence of PN, which evidences its remarkable PN compensation capability.

In contrast to their ranging performances, FRO and PLL display different trends in velocity estimation, depicted in Fig.~\ref{fig_vel_rmse_r30_v20_f3dB_200_FRO} and Fig.~\ref{fig_vel_rmse_r30_v20_f3dB_200_PLL}. For PLL, the MAP-ISAA algorithm can provide noticeable improvements in velocity RMSE over the FFT benchmark, leading to gains on the order of several $10 \, \rm{cm/s}$, while in case of FRO performance gains seem negligible with respect to the FFT method. The same observation is also valid when comparing the CRB to the RMSE of the FFT method since MAP-ISAA can get very close to the CRB for both types of oscillators. This difference between FRO and PLL in velocity estimation can be attributed to the following two facts: \textit{(i)} velocity information in \eqref{eq_ym_all_multi} is extracted from \textit{slow-time (symbol-to-symbol) phase rotations} represented by $\cc(\nu)$ in \eqref{eq_steer_doppler}, and \textit{(ii)} PN samples in different symbols are \textit{uncorrelated} for FRO, while they can be \textit{correlated} for PLL, as shown in \eqref{eq_toep_block}, Lemma~\ref{lemma_fro_blkdiag} and Fig.~\ref{fig_PN_covariance_PLL_FRO}. Therefore, correlation of PN across symbols can be exploited in PLL for accurate PN estimation and compensation in slow-time, leading to better velocity estimates compared to FRO. In this respect, similar ranging performances for FRO and PLL can be explained by pointing out high correlation of PN in \textit{fast-time} for both oscillator types, represented by the block-diagonals $\boldR_0(\tau)$ in \eqref{eq_toep_block} and \eqref{eq_R_blkdiag}. 

Comparing the asymptotic trends of the MAP-ISAA algorithm, as well as the corresponding CRB, between range and velocity estimation in Fig.~\ref{fig_rmse_r30_v20_f3dB_200_FRO} and Fig.~\ref{fig_rmse_r30_v20_f3dB_200_PLL}, we observe plateau in velocity estimation performance as opposed to monotonically decreasing errors for range estimation. In connection with this, PN causes only slight degradation of ranging performance, whereas velocity estimation can be severely degraded by PN. This is due to the fact that range information is gathered from frequency-domain (or, equivalently, fast-time) phase shifts across $\bb(\tau)$ in \eqref{eq_steer_delay}, while velocity information comes from slow-time phase shifts in \eqref{eq_steer_doppler}. Since PN enjoys much higher correlation in fast-time than in slow-time, such correlation can be utilized to cancel out its effect in fast-time and accordingly provide accurate range estimates.

\begin{figure}[t]
        \begin{center}
        \subfigure[]{
			 \label{fig_range_rmse_r30_v20_f3dB_200_FRO}
			 \includegraphics[width=0.42\textwidth]{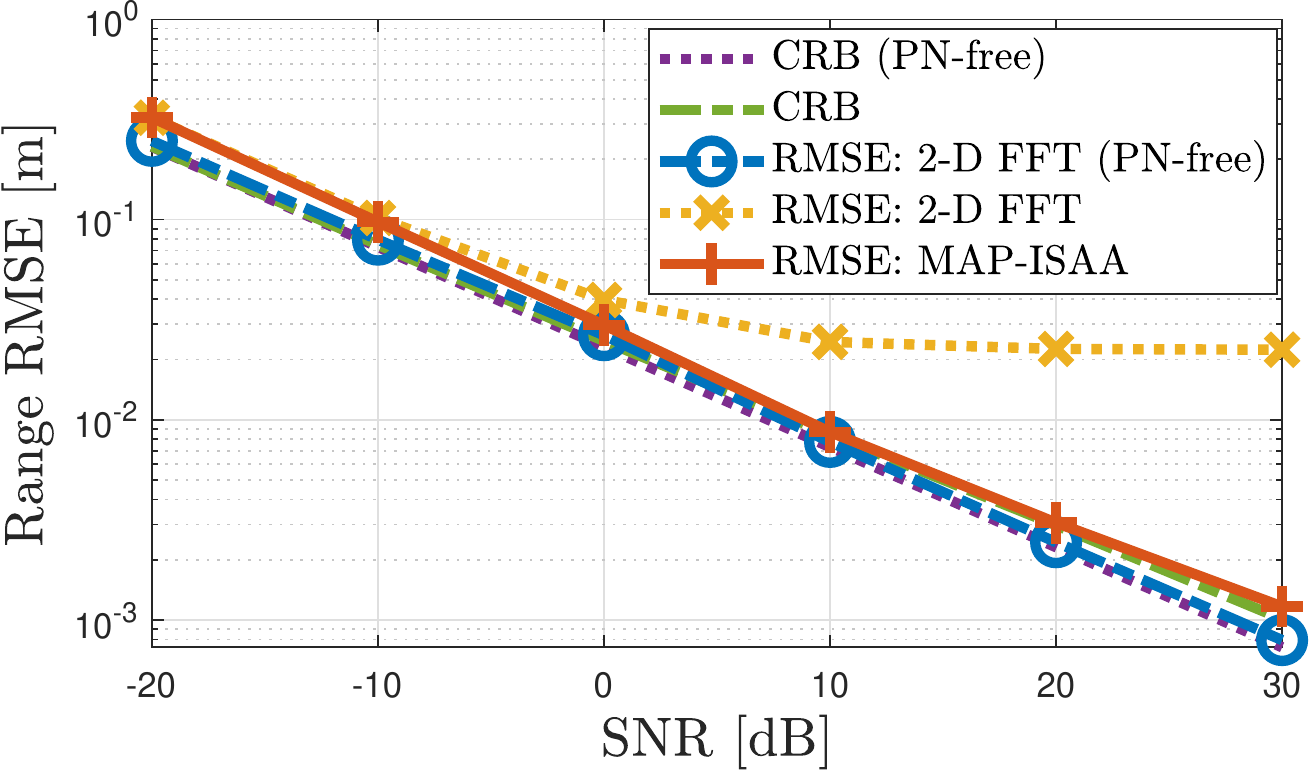}
		}
        \subfigure[]{
			 \label{fig_vel_rmse_r30_v20_f3dB_200_FRO}
			 \includegraphics[width=0.42\textwidth]{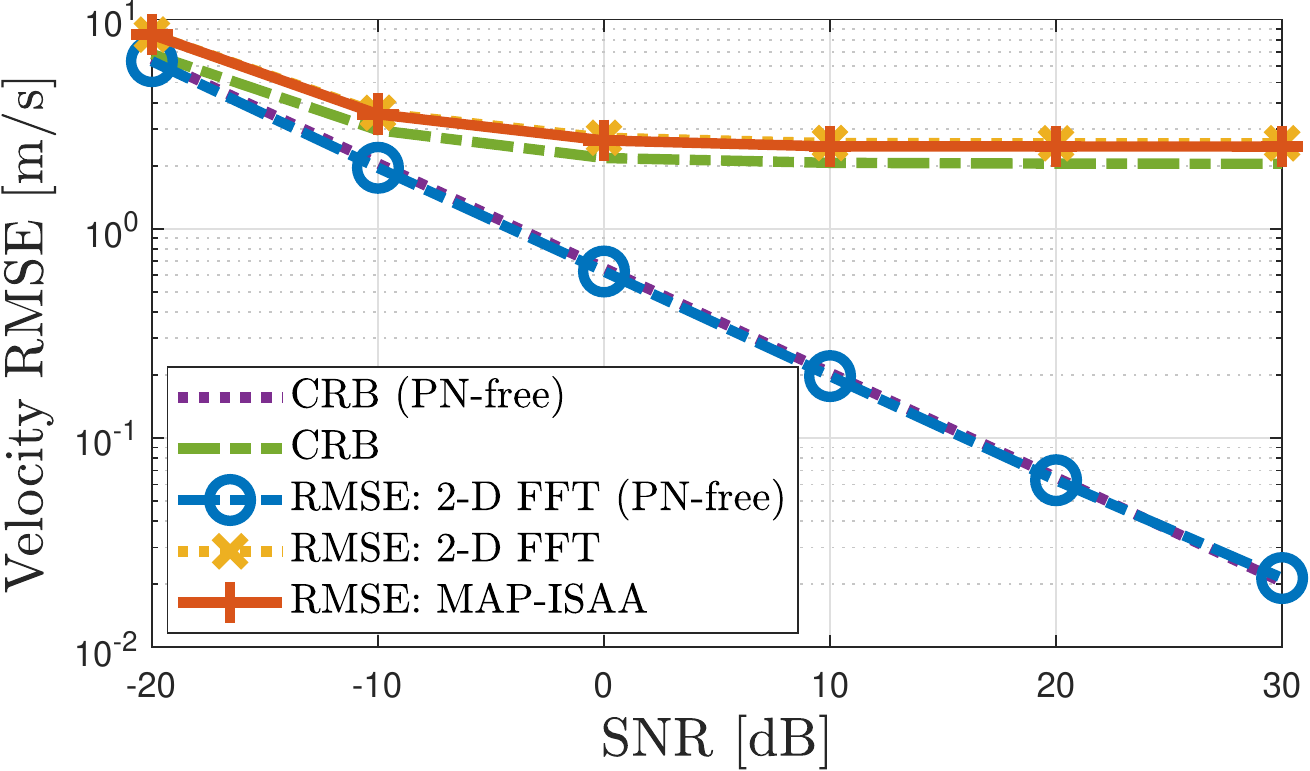}
		}
		
		
		\end{center}
		\vspace{-0.2in}
        \caption{\subref{fig_range_rmse_r30_v20_f3dB_200_FRO} Range and \subref{fig_vel_rmse_r30_v20_f3dB_200_FRO} velocity RMSE with respect to SNR for FRO with $\fdb = 200 \, \rm{kHz}$.} 
        \label{fig_rmse_r30_v20_f3dB_200_FRO}
        \vspace{-0.1in}
\end{figure}

\begin{figure}[t]
        \begin{center}
        \subfigure[]{
			 \label{fig_range_rmse_r30_v20_f3dB_200_PLL}
			 \includegraphics[width=0.42\textwidth]{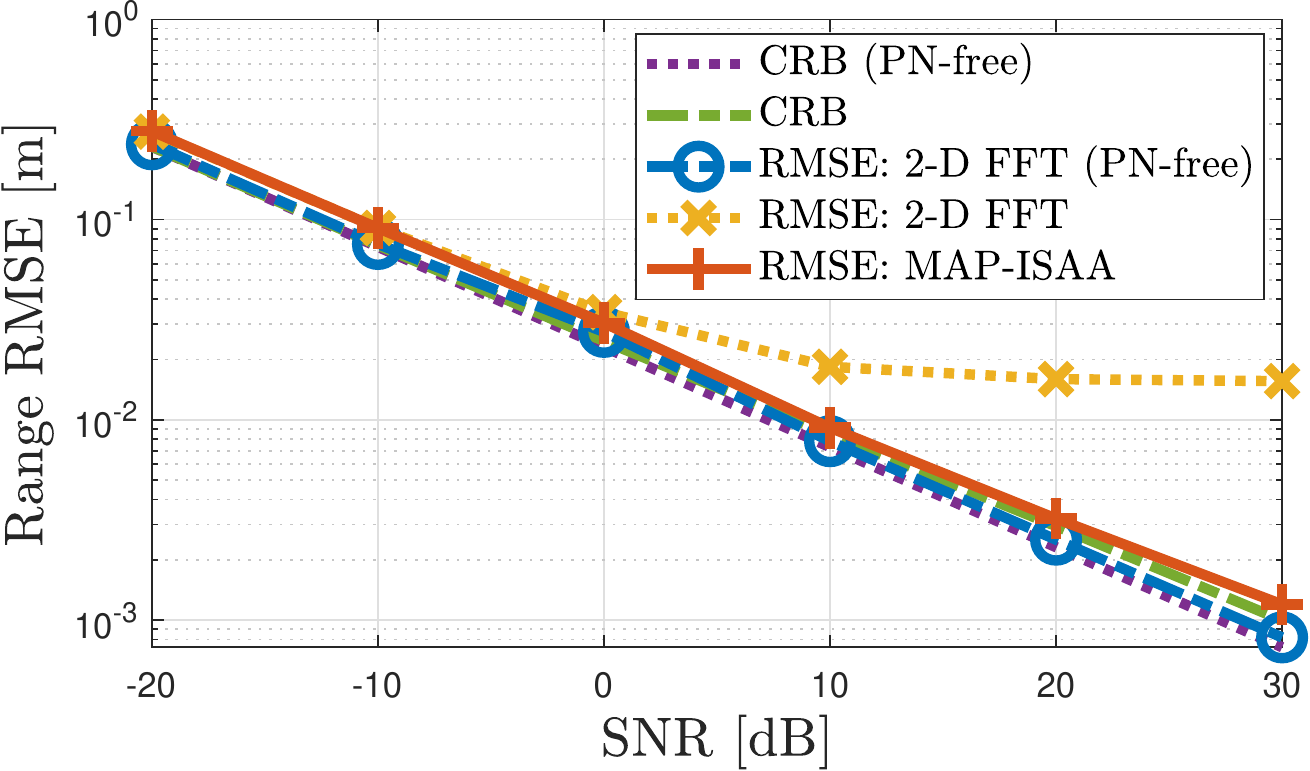}
		}
        \subfigure[]{
			 \label{fig_vel_rmse_r30_v20_f3dB_200_PLL}
			 \includegraphics[width=0.42\textwidth]{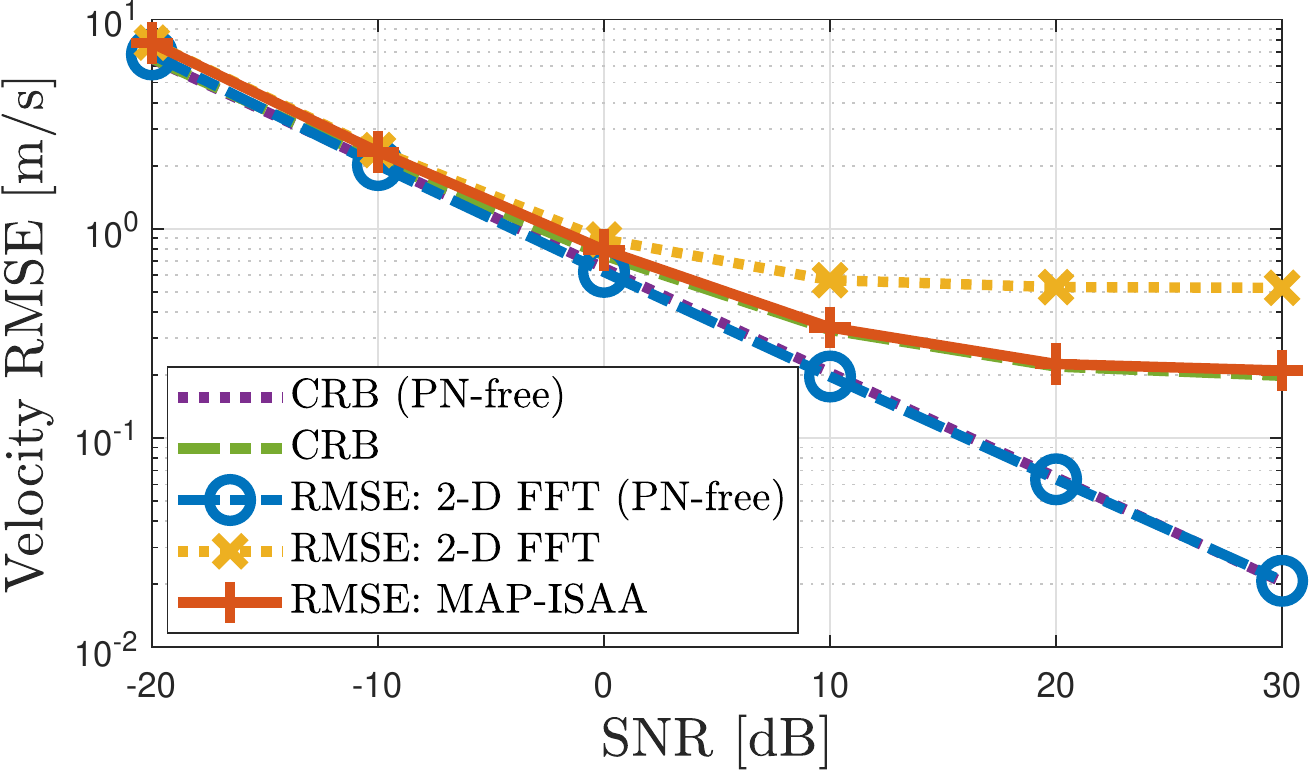}
		}
		
		
		\end{center}
		\vspace{-0.2in}
        \caption{\subref{fig_range_rmse_r30_v20_f3dB_200_PLL} Range and \subref{fig_vel_rmse_r30_v20_f3dB_200_PLL} velocity RMSE with respect to SNR for PLL with $\fdb = 200 \, \rm{kHz}$ and $\floop = 1 \, \rm{MHz}$.} 
        \label{fig_rmse_r30_v20_f3dB_200_PLL}
        \vspace{-0.1in}
\end{figure}

\subsection{Performance with respect to Oscillator Quality}\label{sec_perf_osc_qual}
We now investigate the performance of the considered schemes with respect to oscillator quality for both FRO and PLL implementations. To this end, in Fig.~\ref{fig_rmse_wrt_f3dB_FRO} we report accuracy in range, velocity and PN estimation\rev{\footnote{\label{fn_pn_rmse}\rev{The PN RMSE is defined as $\big[\sum_{c=1}^{C} \normsmall{\bxihat_c - \bxi_c}^2/ (NMC)\big]^{1/2}$, where $\bxihat_c \in \realset{NM}{1}$ and $\bxi_c \in \realset{NM}{1}$ denote, respectively, the estimated and true PN at the $\thn{c}$ Monte Carlo realization and $C = 2500$. For the 2-D FFT method, we set $\bxihat_c = \boldzero$.}}} against the $3 \, \rm{dB}$ bandwidth $\fdb$ of FRO at $\snr = 20 \, \rm{dB}$. It can be observed that the ranging performance of the proposed MAP-ISAA algorithm is highly robust in the face of worsening oscillator quality (i.e., increasing $\fdb$), maintaining almost the same RMSE level and attaining the corresponding bounds over a wide range of $\fdb$ values. On the other hand, the FFT benchmark suffers from a considerable performance degradation as $\fdb$ increases due to increasing PN variance, shown in Fig.~\ref{fig_pn_rmse_wrt_f3dB_FRO}. Additionally, range accuracy obtained by MAP-ISAA is very close to that achievable in the absence of PN for all $\fdb$ values, which demonstrates the effectiveness of the proposed PN estimation/compensation approach in Algorithm~\ref{alg_map_isaa}. Regarding velocity RMSE, similar trends to those in Sec.~\ref{sec_perf_snr} can be observed for both the CRB and the RMSE of MAP-ISAA, due to lack of PN correlation in slow-time (i.e., the block-diagonal structure of PN covariance matrix for FRO, as specified in Lemma~\ref{lemma_fro_blkdiag}).

\begin{figure}[t]
        \begin{center}
        \subfigure[]{
			 \label{fig_range_rmse_wrt_f3dB_FRO}
			 \includegraphics[width=0.42\textwidth]{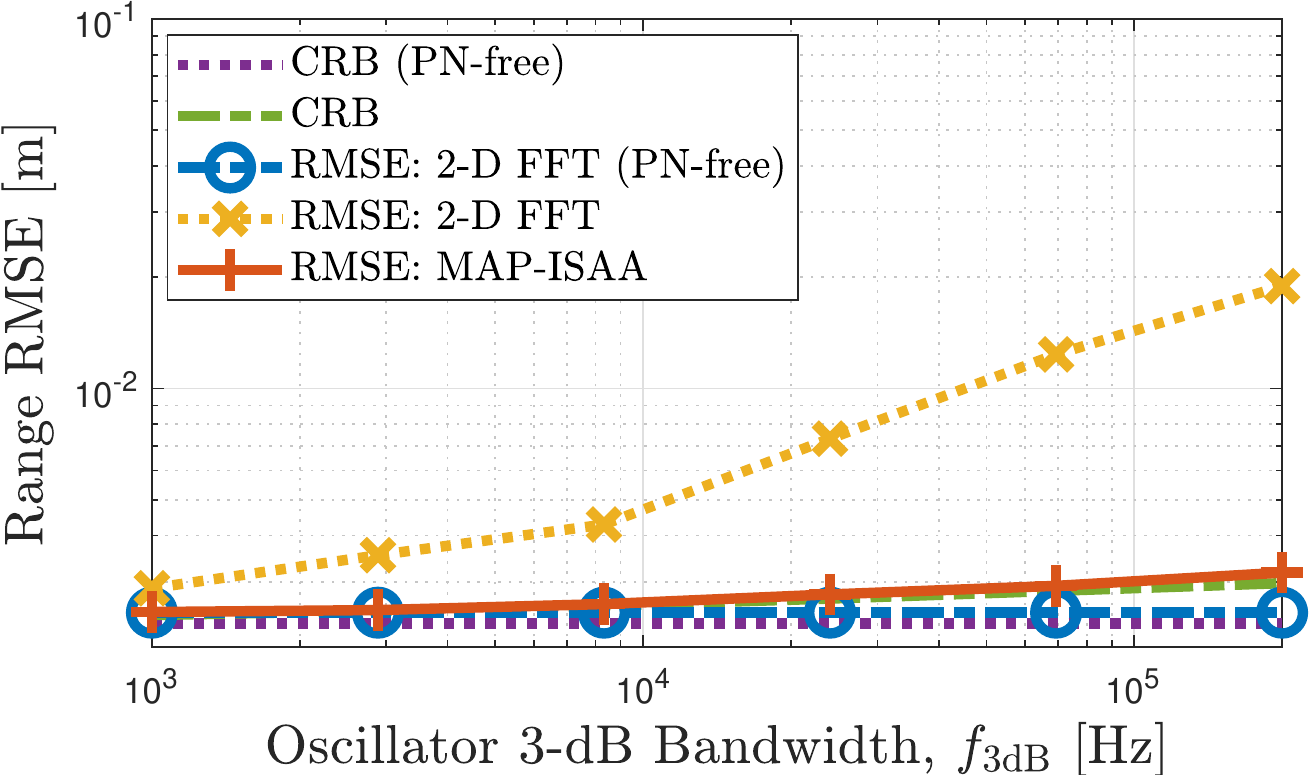}
		}
        \subfigure[]{
			 \label{fig_vel_rmse_wrt_f3dB_FRO}
			 \includegraphics[width=0.42\textwidth]{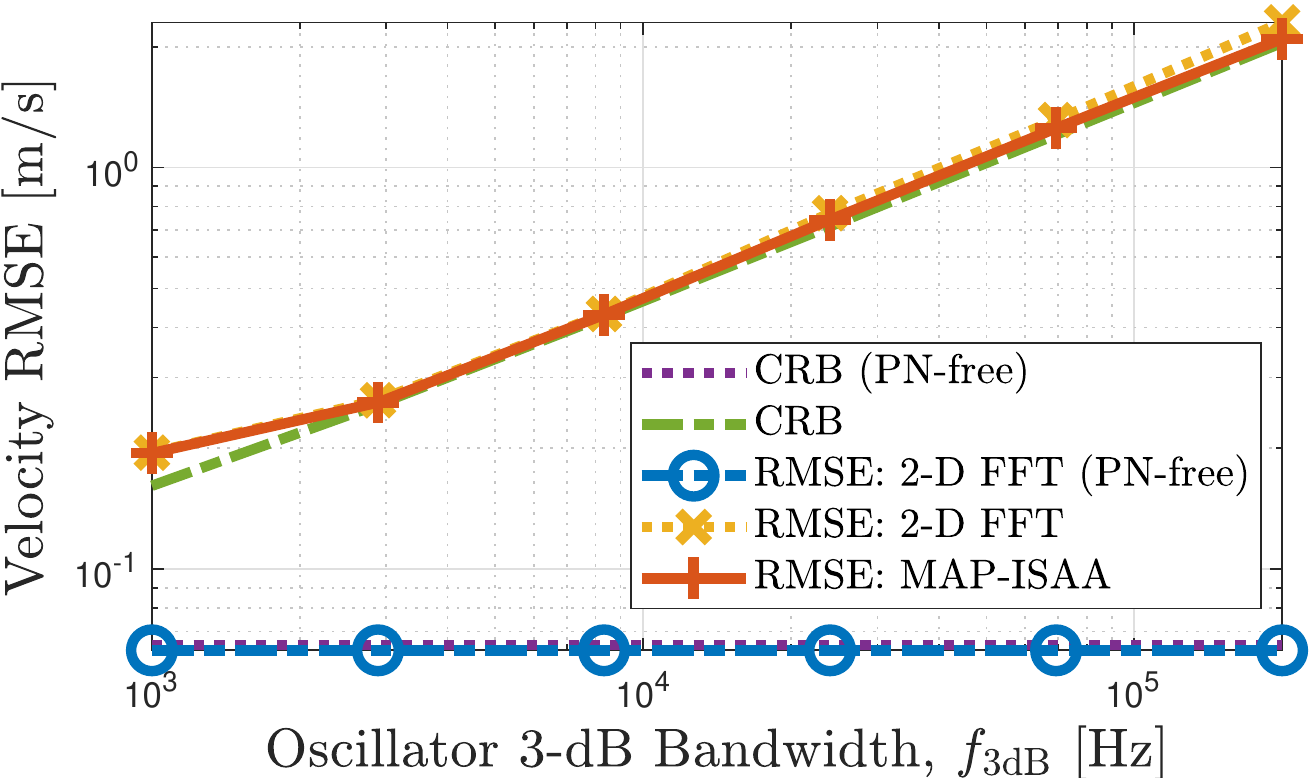}
		}
		
		\subfigure[]{
			 \label{fig_pn_rmse_wrt_f3dB_FRO}
			 \includegraphics[width=0.41\textwidth]{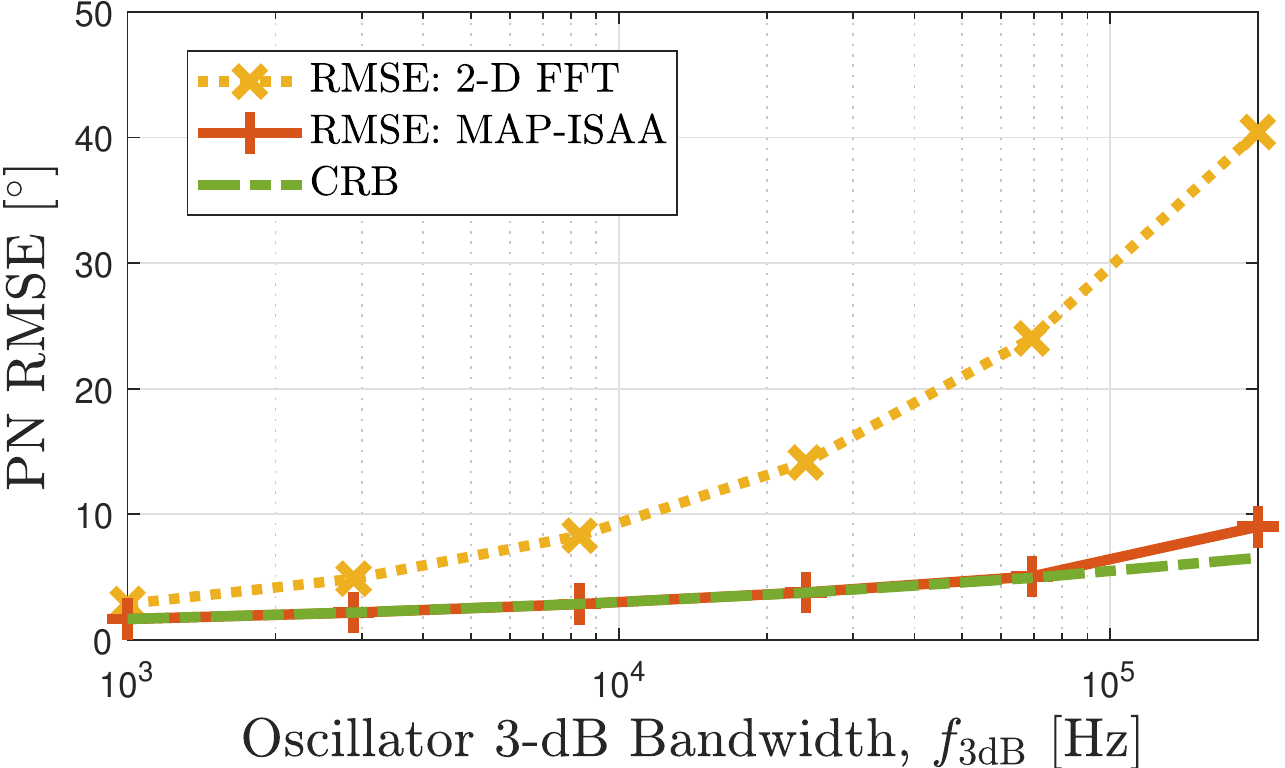}
		}
		
		\end{center}
		\vspace{-0.2in}
        \caption{\subref{fig_range_rmse_wrt_f3dB_FRO} Range, \subref{fig_vel_rmse_wrt_f3dB_FRO} velocity and \subref{fig_pn_rmse_wrt_f3dB_FRO} PN RMSE with respect to $3 \, \rm{dB}$ bandwidth $\fdb$ of FRO at $\snr = 20 \, \rm{dB}$.} 
        \label{fig_rmse_wrt_f3dB_FRO}
        \vspace{-0.1in}
\end{figure}

Next, we consider the RMSE performances for PLL synthesizers in Fig.~\ref{fig_rmse_wrt_floop_PLL}, which shows accuracies against the PLL loop bandwidth $\floop$. In terms of both range and velocity RMSEs, the PN-ignorant FFT method performs worse with decreasing $\floop$ since PN variance increases as $\floop$ decreases, as seen from \eqref{eq_sigmatauwn_pll} and Fig.~\ref{fig_pn_rmse_wrt_floop_PLL}. On the other hand, the proposed algorithm attains the same ranging accuracy for the entire interval of $\floop$ values from $10 \, \rm{kHz}$ to $10 \, \rm{MHz}$, proving its robustness against the quality of PLL synthesizer. Hence, the performance gain of MAP-ISAA in range estimation with respect to the FFT benchmark becomes more pronounced with decreasing PLL quality (i.e., decreasing $\floop$). Moreover, 
the ranging performance of the proposed approach under the impact of PN is near that of the FFT benchmark that uses PN-free radar observations over a broad range of $\floop$ values, which indicates almost perfect cancellation of the effect of PN. Furthermore, the velocity RMSE curves in Fig.~\ref{fig_vel_rmse_wrt_floop_PLL} reveal that MAP-ISAA performs very close to CRB and can provide gains on the order of $10 \, \rm{cm/s}$ over the FFT benchmark. Finally, from Fig.~\ref{fig_rmse_wrt_f3dB_FRO} and Fig.~\ref{fig_rmse_wrt_floop_PLL}, we note that MAP-ISAA can achieve the CRBs on range, velocity and PN estimation for both FROs and PLLs under various levels of oscillator quality.

\begin{figure}[t]
        \begin{center}
        \subfigure[]{
			 \label{fig_range_rmse_wrt_floop_PLL}
			 \includegraphics[width=0.42\textwidth]{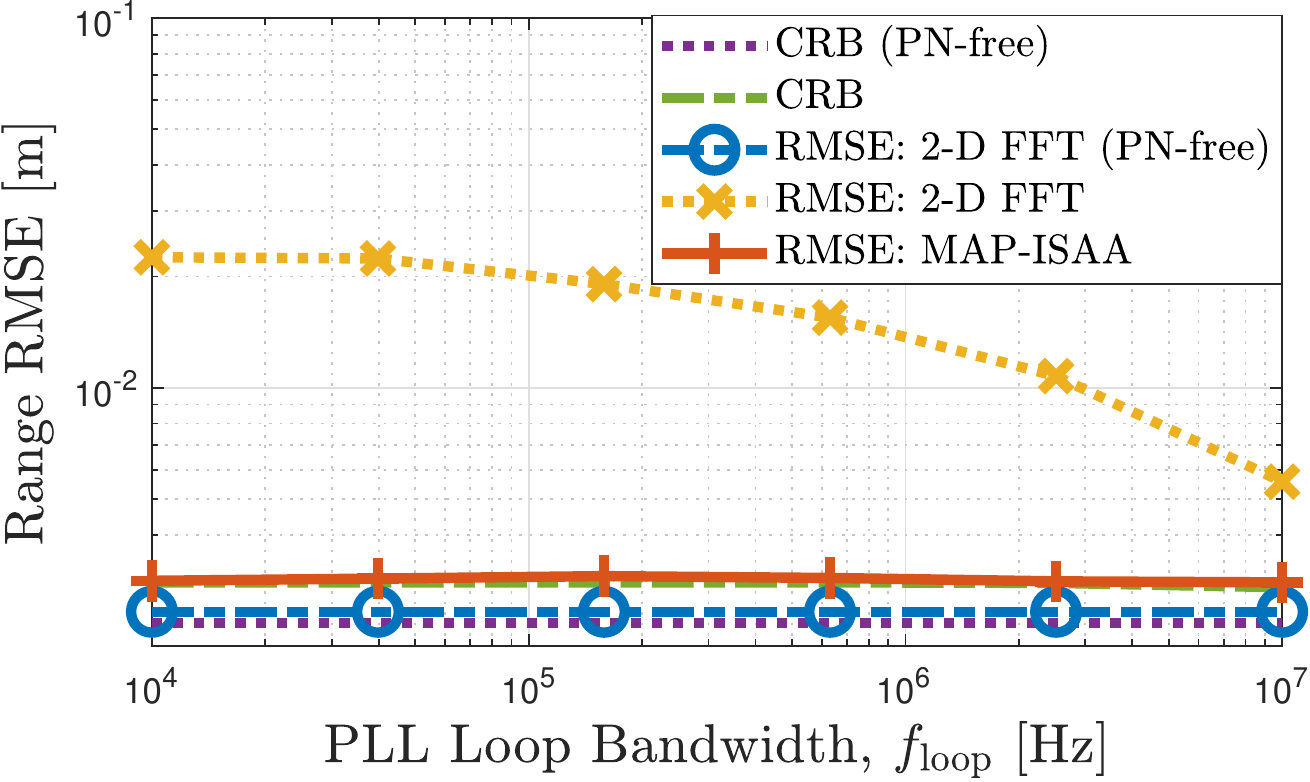}
		}
        \subfigure[]{
			 \label{fig_vel_rmse_wrt_floop_PLL}
			 \includegraphics[width=0.42\textwidth]{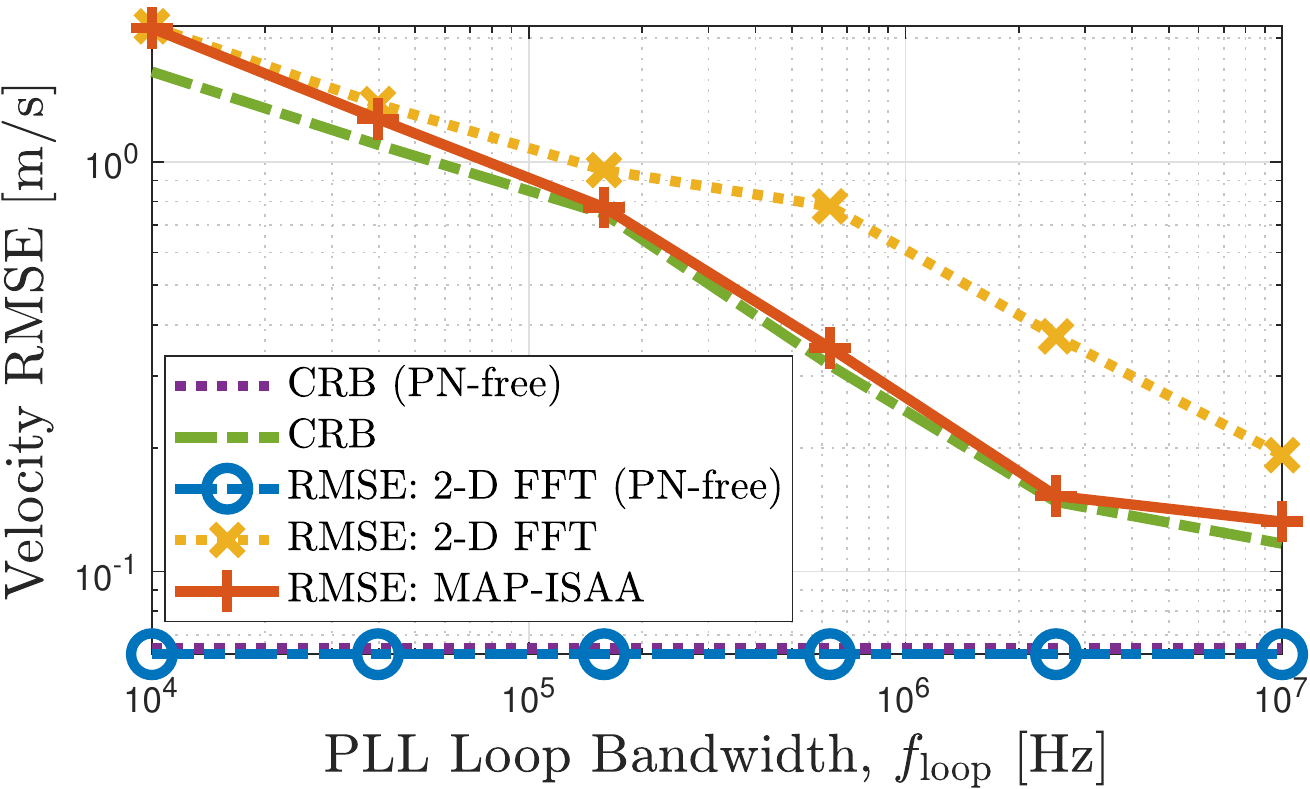}
		}
		
		\subfigure[]{
			 \label{fig_pn_rmse_wrt_floop_PLL}
			 \includegraphics[width=0.41\textwidth]{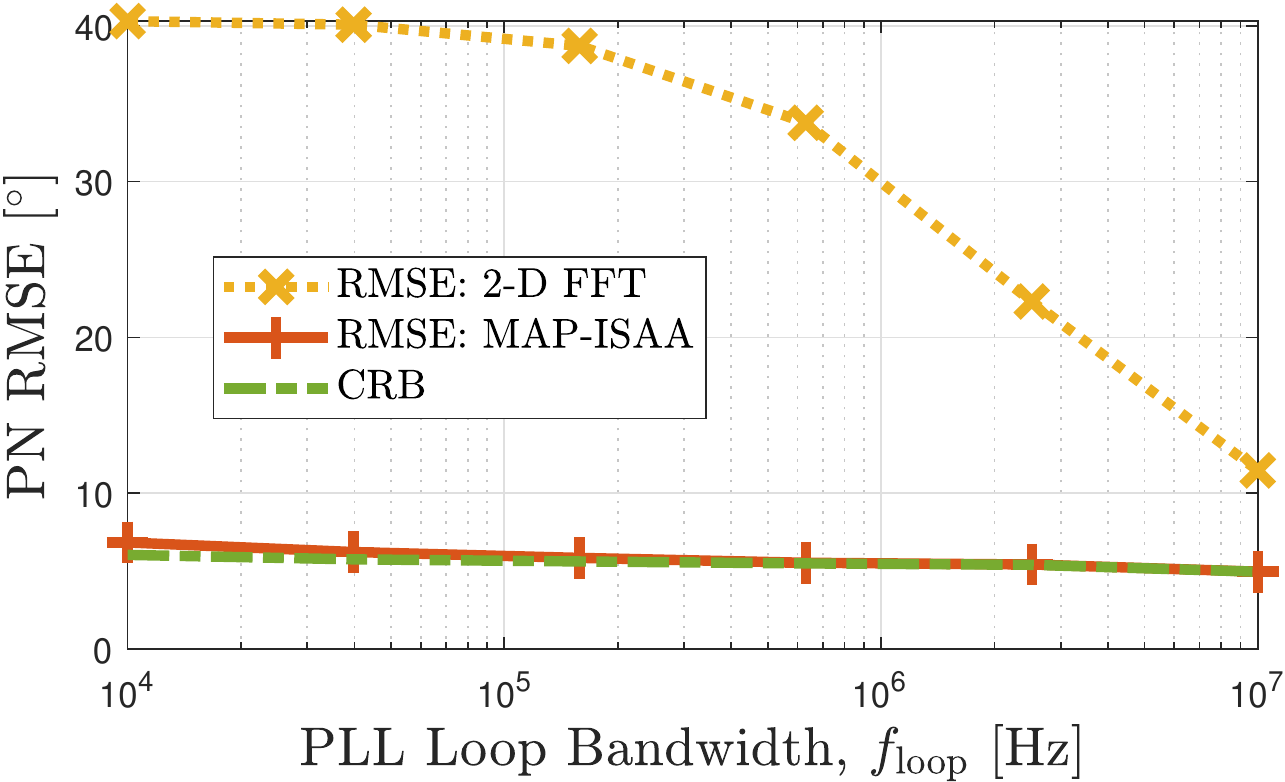}
		}
		
		\end{center}
		\vspace{-0.2in}
        \caption{\subref{fig_range_rmse_wrt_floop_PLL} Range, \subref{fig_vel_rmse_wrt_floop_PLL} velocity and \subref{fig_pn_rmse_wrt_floop_PLL} PN RMSE with respect to loop bandwidth $\floop$ of PLL with $\fdb = 200 \, \rm{kHz}$ at $\snr = 20 \, \rm{dB}$.
        } 
        \label{fig_rmse_wrt_floop_PLL}
        \vspace{-0.15in}
\end{figure}

\subsection{Performance with respect to Target Range}\label{sec_perf_range}
To further highlight the benefits of the proposed algorithm, we explore the impact of target range on the performance of the processing schemes under consideration, recalling the delay-dependent statistics of PN derived in Sec.~\ref{sec_pn_statistics}. In Fig.~\ref{fig_rmse_wrt_range_FRO} and Fig.~\ref{fig_rmse_wrt_range_PLL}, we show the range-velocity RMSEs versus target range for FRO and PLL, respectively. As can be noticed, the ranging performance of the proposed approach is robust against increasing target range for both FRO and PLL implementations, which suggest that Algorithm~\ref{alg_map_isaa} can successfully utilize the knowledge of delay-dependent PN covariance $\boldR(\tau)$ to jointly estimate the coupled delay and PN parameters. However, the FFT method experiences substantial loss in ranging accuracy as target moves further away from radar, in compliance with monotonically increasing variance of PN as a function of delay in \eqref{eq_sigmatauwn} and \eqref{eq_sigmatauwn_pll}. Moreover, comparing MAP-ISAA and CRB against the PN-free benchmark, PN-induced performance degradation is only marginal, which agrees with the observations in Sec.~\ref{sec_perf_snr} and Sec.~\ref{sec_perf_osc_qual}. Furthermore, looking at the velocity RMSEs in Fig.~\ref{fig_vel_rmse_wrt_range_FRO} and Fig.~\ref{fig_vel_rmse_wrt_range_PLL}, we observe that improvements on the order of $\rm{m/s}$ can be provided by the proposed algorithm over the FFT benchmark for PLL, whereas no noticeable gains occur in case of FRO. This further corroborates the above-mentioned insights on the impact of different PN correlation characteristics of FRO and PLL (specified in \eqref{eq_toep_block} and \eqref{eq_R_blkdiag}) onto velocity accuracy.

\begin{figure}[t]
        \begin{center}
        \subfigure[]{
			 \label{fig_range_rmse_wrt_range_FRO}
			 \includegraphics[width=0.42\textwidth]{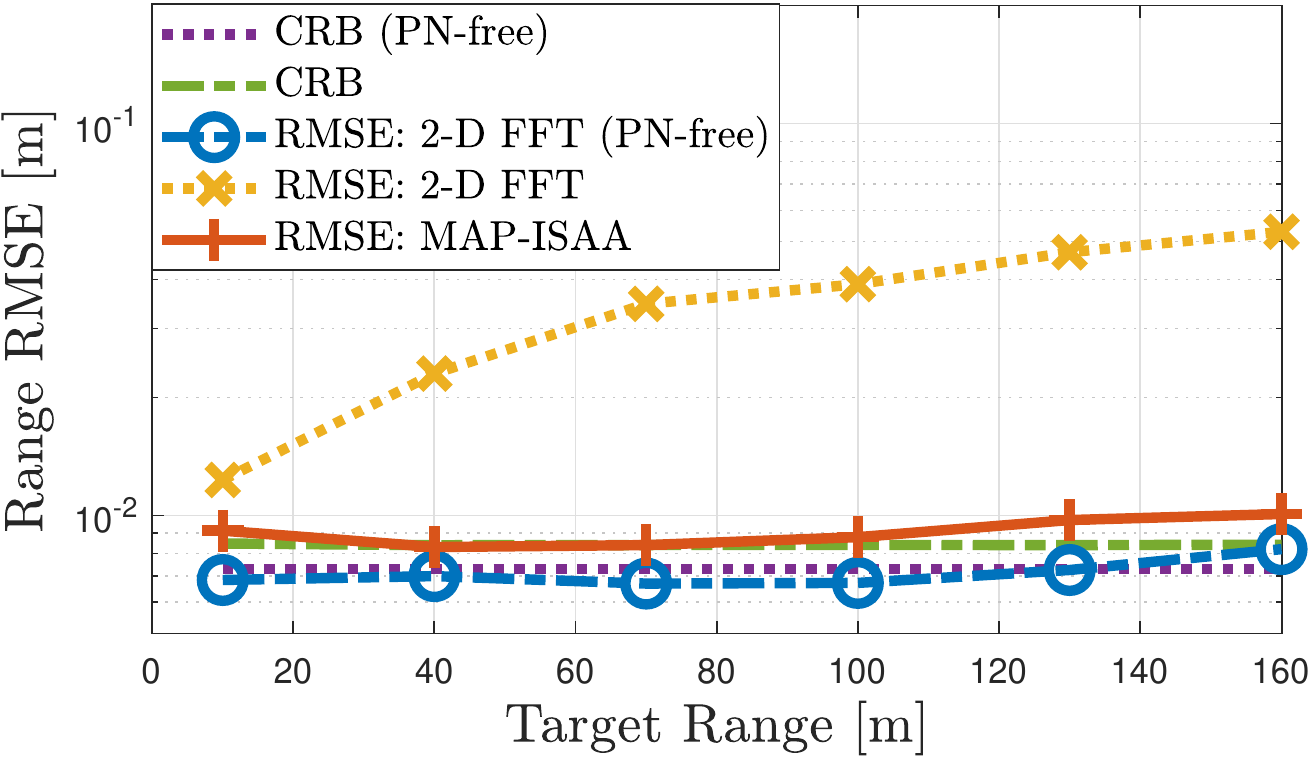}
		}
        \subfigure[]{
			 \label{fig_vel_rmse_wrt_range_FRO}
			 \includegraphics[width=0.42\textwidth]{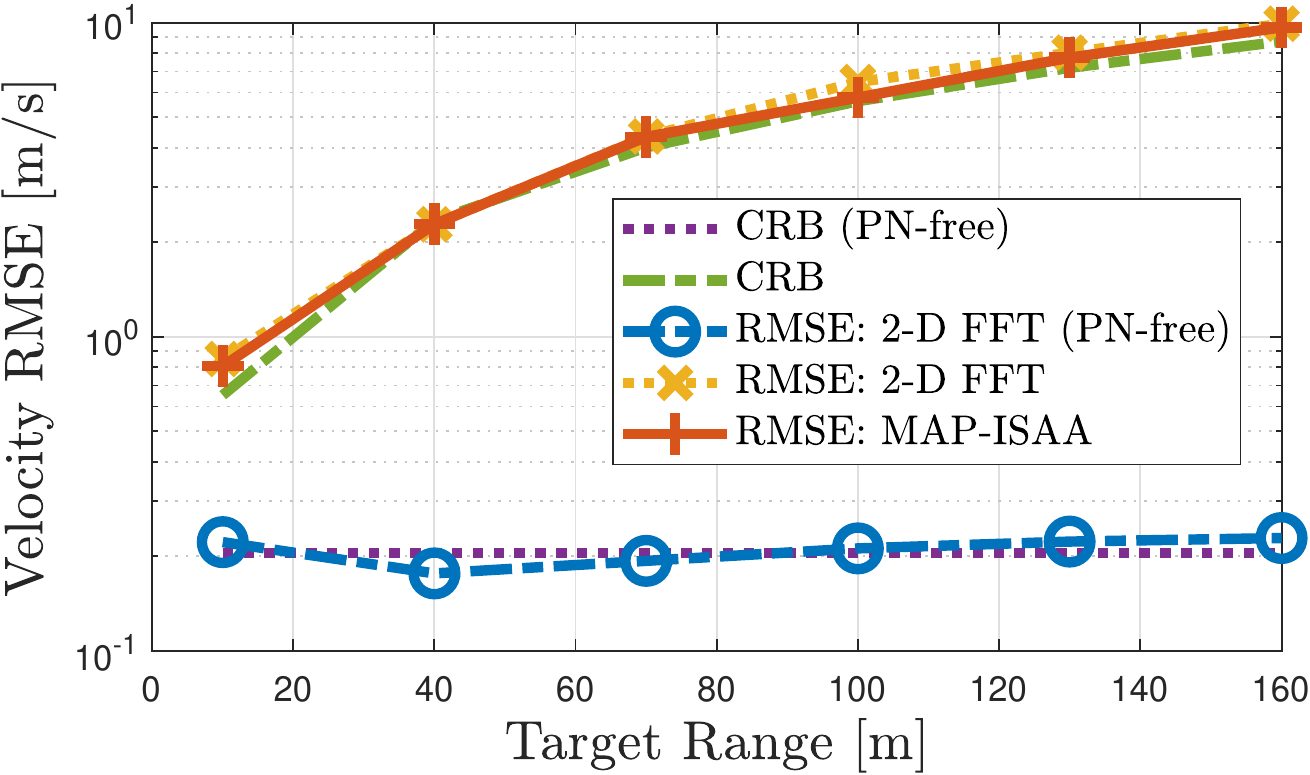}
		}

		\end{center}
		\vspace{-0.2in}
        \caption{\subref{fig_range_rmse_wrt_range_FRO} Range, and \subref{fig_vel_rmse_wrt_range_FRO} velocity RMSE with respect to target range for FRO with $\fdb = 150 \, \rm{kHz}$ at $\snr = 10 \, \rm{dB}$.} 
        \label{fig_rmse_wrt_range_FRO}
        \vspace{-0.1in}
\end{figure}

\begin{figure}[t]
        \begin{center}
        \subfigure[]{
			 \label{fig_range_rmse_wrt_range_PLL}
			 \includegraphics[width=0.42\textwidth]{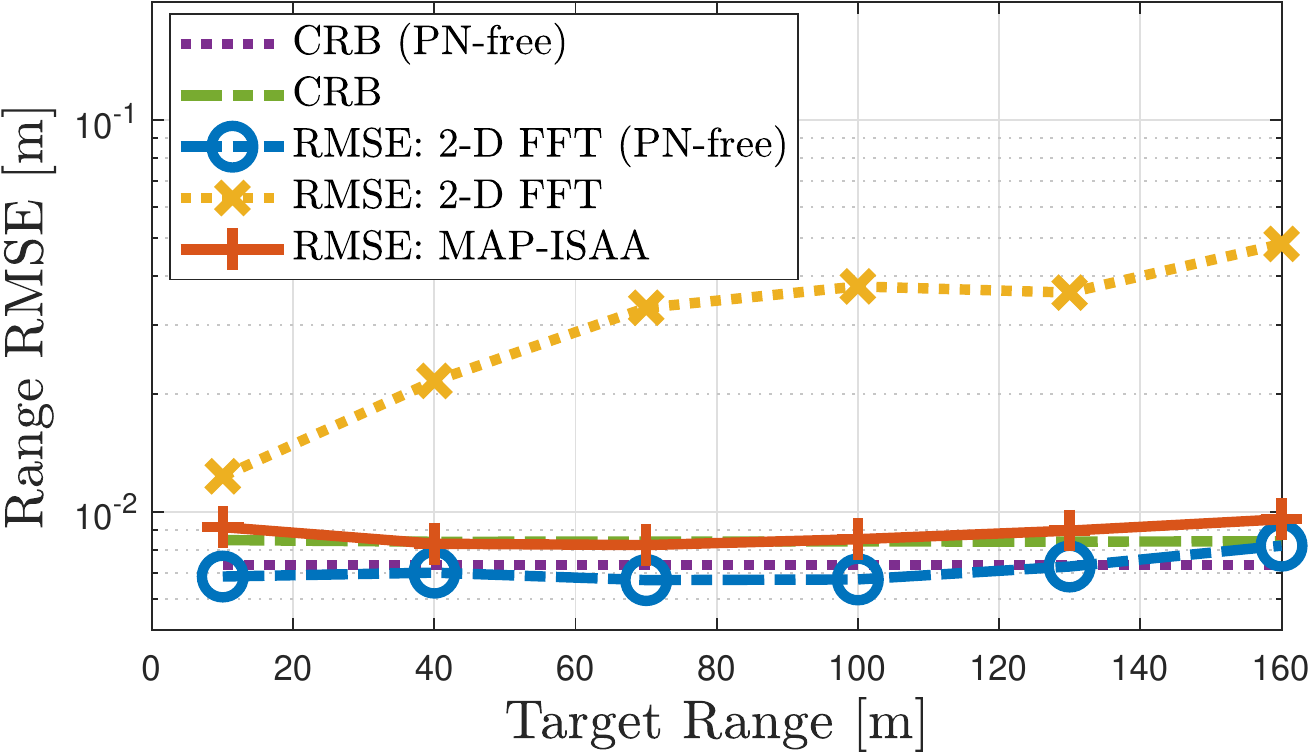}
		}
        \subfigure[]{
			 \label{fig_vel_rmse_wrt_range_PLL}
			 \includegraphics[width=0.42\textwidth]{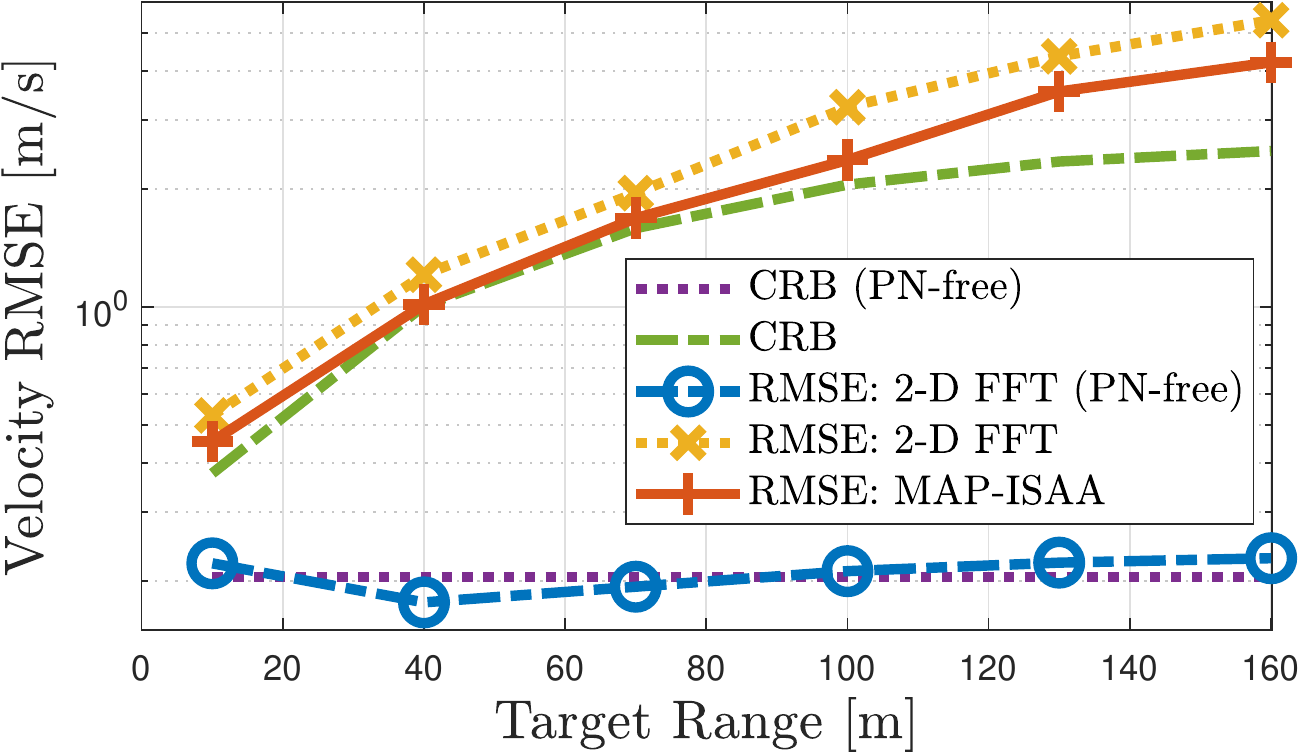}
		}

		\end{center}
		\vspace{-0.2in}
        \caption{\subref{fig_range_rmse_wrt_range_PLL} Range, and \subref{fig_vel_rmse_wrt_range_PLL} velocity RMSE with respect to target range for PLL with $\fdb = 150 \, \rm{kHz}$ and $\floop = 100 \, \rm{kHz}$ at $\snr = 10 \, \rm{dB}$.} 
        \label{fig_rmse_wrt_range_PLL}
        \vspace{-0.1in}
\end{figure}

\subsection{Convergence Behavior of the Proposed Algorithm}\label{sec_conv_alg}
To illustrate the convergence behavior of the proposed MAP-ISAA algorithm in Algorithm~\ref{alg_map_isaa} through iterated approximations, Fig.~\ref{fig_iterations_rmse_PLL_10dB} demonstrates the evolution of range, velocity and PN RMSEs over consecutive iterations for PLL. It is observed that starting from the output of the FFT benchmark, Algorithm~1 monotonically converges to the corresponding CRBs on range, velocity and PN estimation within few iterations, which proves the effectiveness of the proposed iterated SAA approach. To explore the effect of iterated approximations on PN tracking performance, in Fig.~\ref{fig_pn_tracking} we depict instances from the PN process along with the PN estimates at the first and last iteration of Algorithm~\ref{alg_map_isaa} for FRO and PLL architectures\rev{\footnote{\label{fn_wrapping}\rev{Phase wrapping in PN trajectories (i.e., phases outside $[-\pi, \ \pi]$) occurs very rarely and does not affect the resulting RMSE performances.}}}. Comparing the results of the first and last iterations, we see that by applying iterated approximations in Algorithm~\ref{alg_map_isaa}, PN tracking accuracy significantly improves around regions with high fluctuations, which verifies the superiority of the proposed iterative refinement approach.

\begin{figure}
	\centering
	\includegraphics[width=0.95\linewidth]{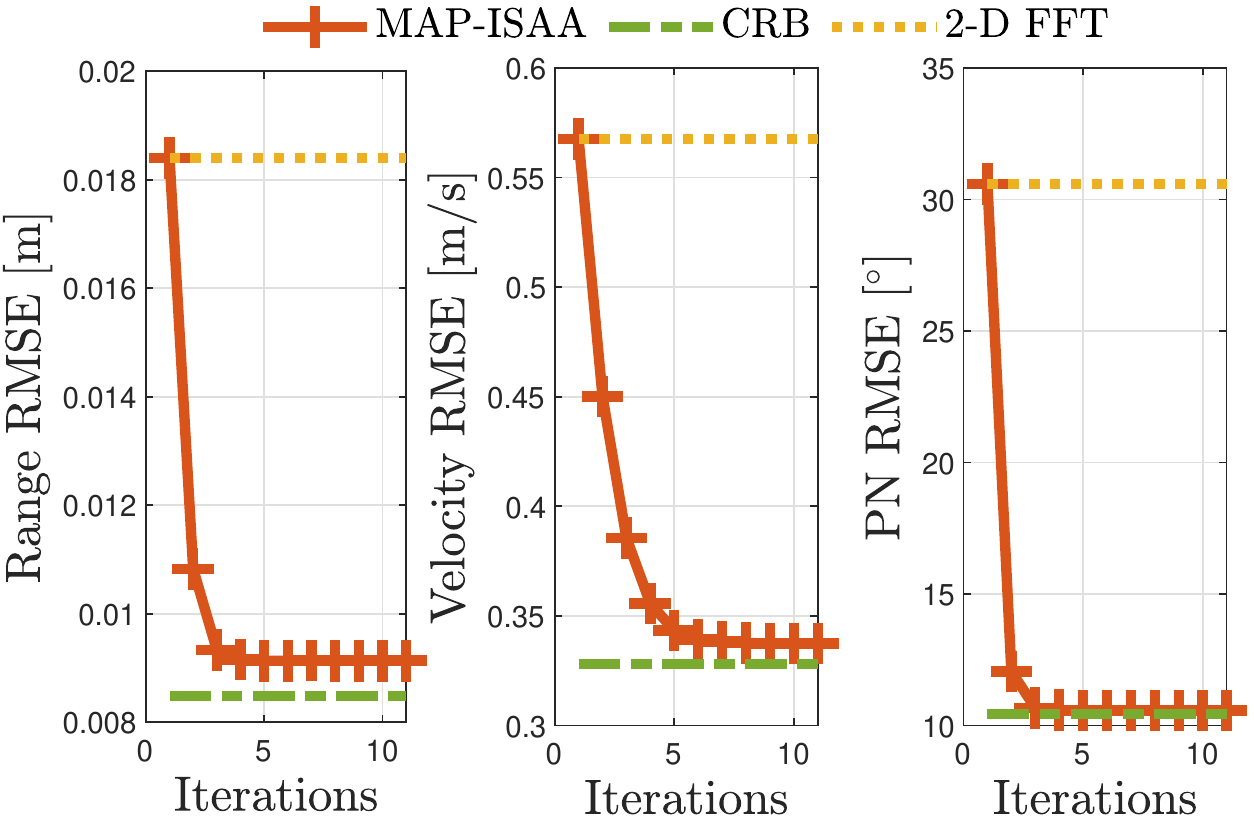}
	\vspace{-0.05in}
	\caption{Evolution of range, velocity and PN RMSEs through successive iterations of the proposed algorithm at $\snr = 10 \, \rm{dB}$ for PLL with $\fdb = 200 \, \rm{kHz}$ and $\floop = 1 \, \rm{MHz}$. Initialized at the FFT output, the proposed approach can quickly converge to the corresponding CRBs in few iterations, recovering the performance loss incurred by the PN.}
	\label{fig_iterations_rmse_PLL_10dB}
	\vspace{-0.1in}
\end{figure}

\begin{figure}
        \begin{center}
        \subfigure[]{
			 \label{fig_pn_tracking_FRO}
			 \includegraphics[width=0.45\textwidth]{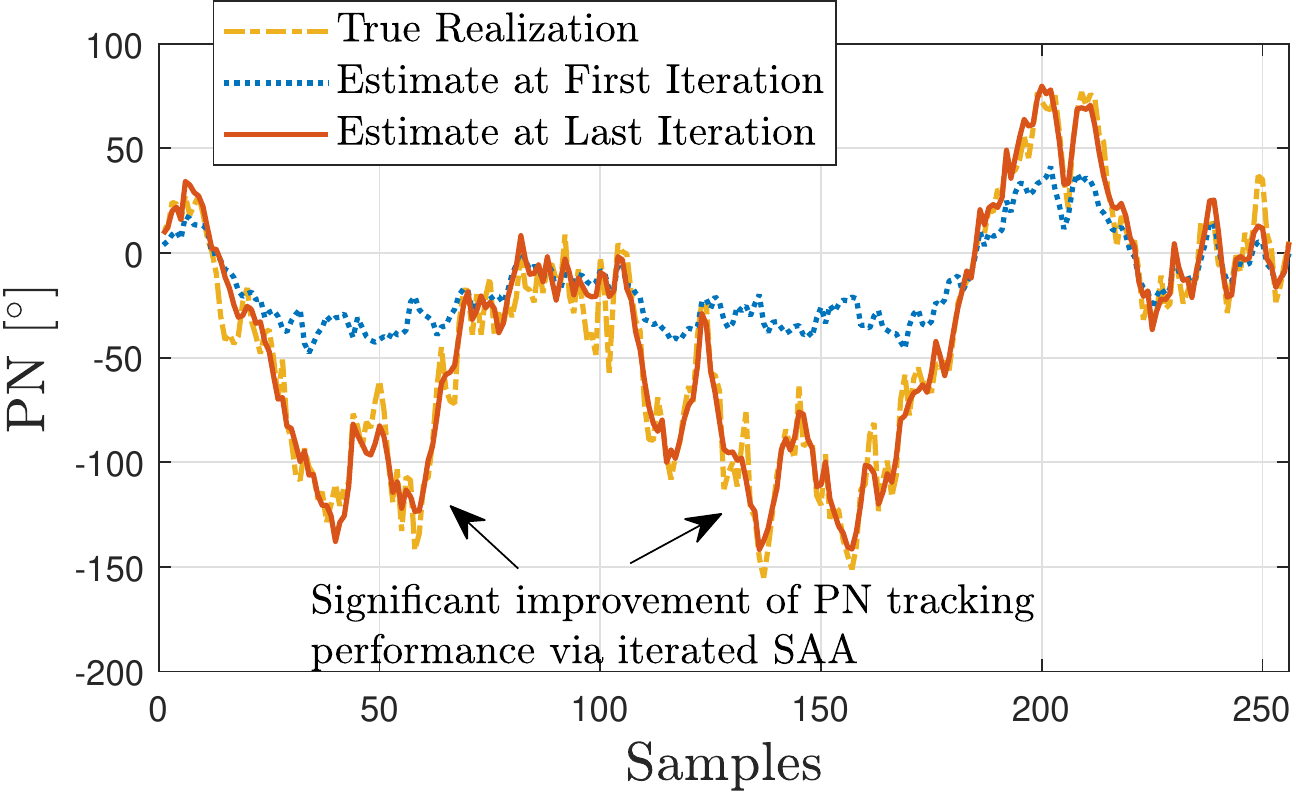}
		}
        \subfigure[]{
			 \label{fig_pn_tracking_PLL}
			 \includegraphics[width=0.45\textwidth]{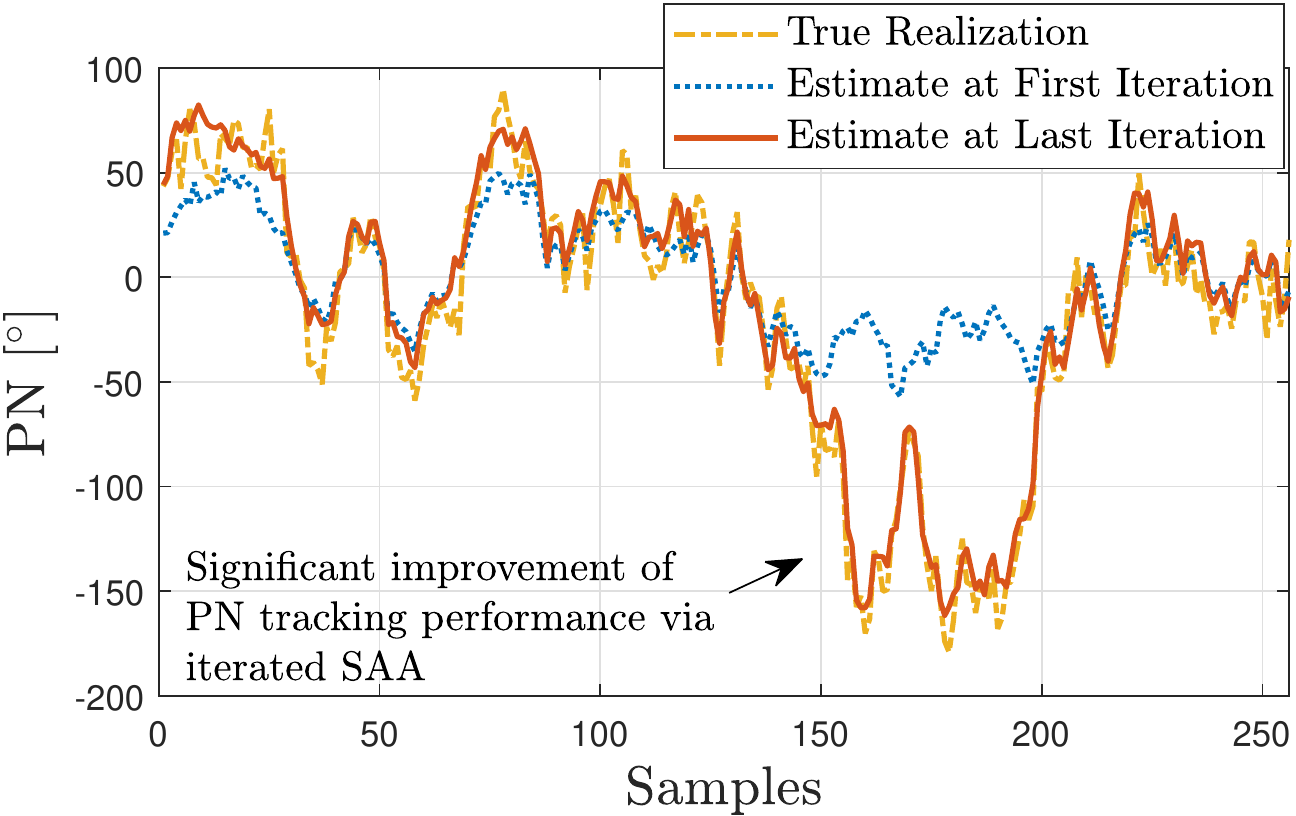}
		}

		\end{center}
		\vspace{-0.2in}
        \caption{True PN realization and PN estimates at the first and last iteration of the proposed algorithm at $\snr = 10 \, \rm{dB}$ for \subref{fig_pn_tracking_FRO} FRO with $\fdb = 150 \, \rm{kHz}$ considering a target with range $R = 100 \, \rm{m}$, and \subref{fig_pn_tracking_PLL} PLL with $\fdb = 150 \, \rm{kHz}$ and $\floop = 100 \, \rm{kHz}$ considering a target with range $R = 130 \, \rm{m}$. The proposed iterated SAA approach provides substantial improvements in PN trajectory tracking performance especially when large fluctuations occur.} 
        \label{fig_pn_tracking}
        \vspace{-0.15in}
\end{figure}

\subsection{PN Exploitation Capability of the Proposed Algorithm}\label{sec_pn_exp}
In this part, we demonstrate the PN exploitation capability of the proposed method in Algorithm~\ref{alg_resolve_amb} by considering a target located at $R = 1000 \, \rm{m}$, which is beyond the maximum unambiguous range of $768 \, \rm{m}$ according to Table~\ref{tab_parameters}. In Fig.~\ref{fig_range_rmse_pn_exploitation_pll}, we plot the range RMSE vs. SNR for PLL with $\fdb = 20 \, \rm{kHz}$ and $\floop = 1 \, \rm{MHz}$ by searching for the first two ambiguity intervals in \eqref{eq_set_delay}, i.e., $K=1$. In addition, Fig.~\ref{fig_covarianceProfilePNExploitation_SNR_20dB} shows the PN covariances at $\snr = 20 \, \rm{dB}$, obtained via the estimate $\rrhat$ in \eqref{eq_rhat} and via the analytical evaluation $[\boldR(\tau)]_{0,:}$ in \eqref{eq_parametric_Frob3} at the ambiguous and true ranges. It can be seen that the proposed PN exploitation approach can successfully resolve the range ambiguity starting from $\snr = 15 \, \rm{dB}$ (when the PN estimate becomes sufficiently accurate\rev{\footnote{\label{fn_pn_ambiguity}\rev{Additional simulations with different $\fdb$ values for PLL offer an intriguing insight into the relation between $\fdb$ and the degree of accuracy improvement via PN exploitation: higher $\fdb$ (which means lower oscillator quality) provides improved resolvability of range ambiguity at low SNRs through more pronounced notches in the PN covariance profile at the ambiguous and true ranges, while leading to lower accuracy at high SNRs due to performance saturation (i.e., when the ambiguity is already resolved).}}}) and attain the corresponding CRB, whereas neither of the FFT benchmarks can correctly identify the true target range due to intrinsic range ambiguity in $\bb(\tau)$, in compliance with the discussions in Sec.~\ref{sec_extension}. More specifically, the FFT processing in \eqref{eq_fft_benc}, either using PN-impaired observations $\boldY$ in \eqref{eq_ym_all_multi} or PN-free observations $\boldYfr$ in \eqref{eq_ym_special}, can extract the range information only from $\bb(\tau)$, which causes ambiguity for $\tau > T$ (see \eqref{eq_steer_delay}). On the other hand, to estimate the true (unambiguous) range of the target, Algorithm~\ref{alg_resolve_amb} can effectively exploit the ranging information revealed by the PN statistics $\boldR(\tau)$ in \eqref{eq_bxi_stat}, which does not introduce any ambiguity in estimating $\tau$ as noticed from \eqref{eq_exp_xi2}--\eqref{eq_rtau_entries}. Surprisingly, the proposed approach under the impact of PN can significantly outperform the FFT benchmark that uses ideal, PN-free observations, indicating its excellent PN exploitation performance.


\begin{figure}
        \begin{center}
        \subfigure[]{
			 \label{fig_range_rmse_pn_exploitation_pll}
			 \includegraphics[width=0.45\textwidth]{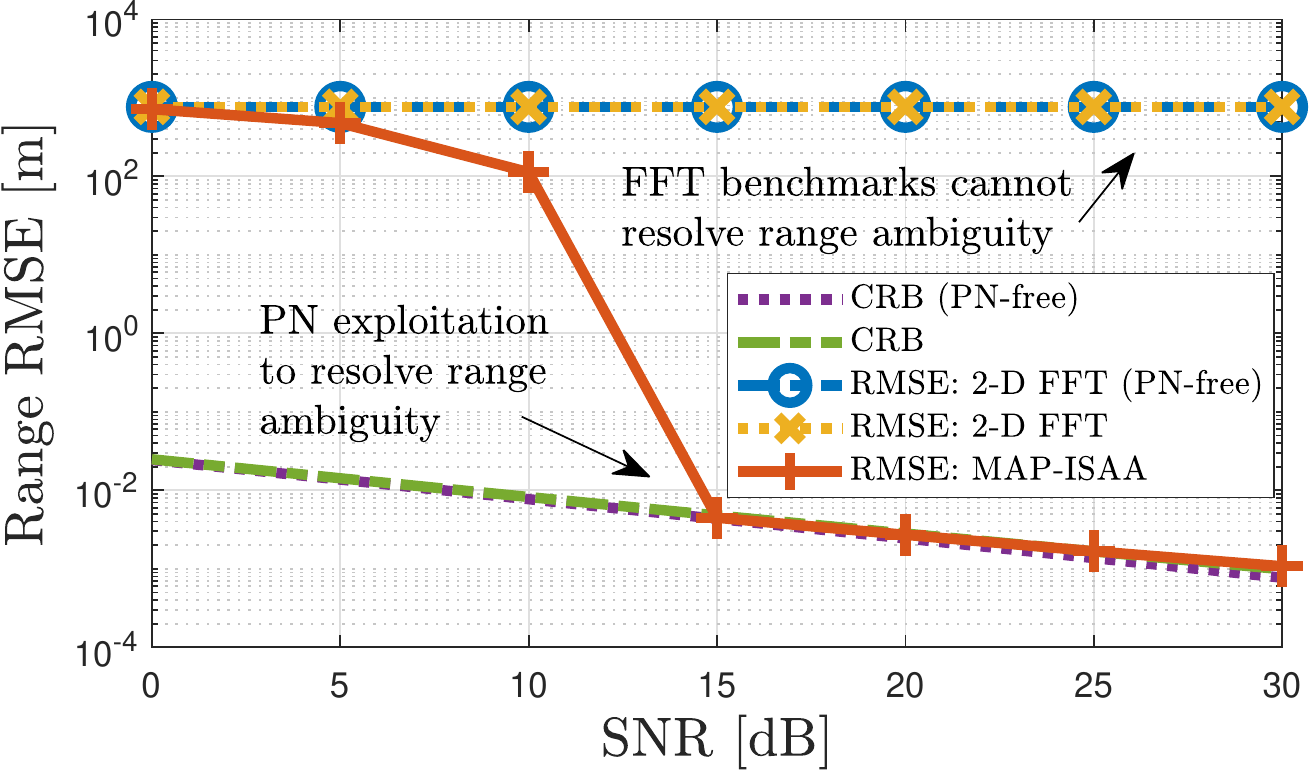}
		}
        \subfigure[]{
			 \label{fig_covarianceProfilePNExploitation_SNR_20dB}
			 \includegraphics[width=0.45\textwidth]{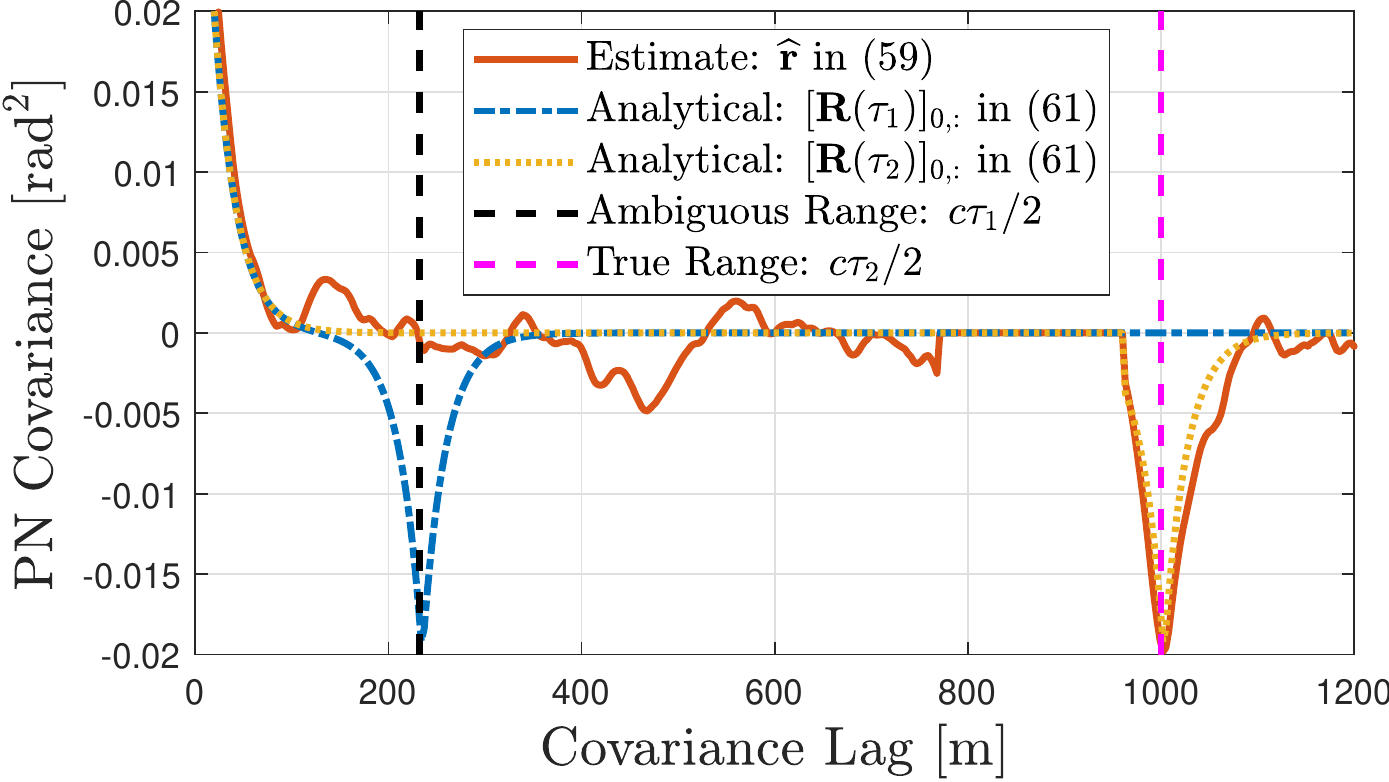}
		}

		\end{center}
		\vspace{-0.2in}
        \caption{PN exploitation to resolve range ambiguity for PLL with $\fdb = 20 \, \rm{kHz}$ and $\floop = 1 \, \rm{MHz}$, where target is located at $R = 1000 \, \rm{m}$. \subref{fig_range_rmse_pn_exploitation_pll} Range RMSE with respect to SNR. \subref{fig_covarianceProfilePNExploitation_SNR_20dB} PN covariance at $\snr = 20 \, \rm{dB}$, corresponding to the estimate $\rrhat$ in \eqref{eq_rhat} and the analytical expression $[\boldR(\tau)]_{0,:}$ in \eqref{eq_parametric_Frob3} evaluated at $\tau_1 = \tauhat$ and $\tau_2 = \tauhat + T$, where $\tauhat$ is the ambiguous delay estimate. The proposed approach in Algorithm~\ref{alg_resolve_amb} can correctly identify the true target range by exploiting PN statistics ($\rrhat$ matches with $[\boldR(\tau_2)]_{0,:}$), while the FFT benchmark fails to resolve range ambiguity irrespective of whether the PN is present or absent.} 
        \label{fig_pn_exploitation}
        \vspace{-0.05in}
\end{figure}

\section{Concluding Remarks}\label{sec_conc}
In this work, we have investigated the problem of monostatic radar sensing in OFDM JRC systems under the effect of PN. Starting from an explicit derivation of PN statistics in the OFDM radar receiver, we have proposed a novel algorithm for joint estimation of delay, Doppler and PN by devising an iterated small angle approximation approach that solves the hybrid ML/MAP optimization problem through alternating updates of delay-Doppler and PN estimates. In addition, we have developed a PN exploitation algorithm that uses the statistics of the PN estimates to resolve range ambiguity. To assess the performance of the proposed algorithms, comprehensive simulations have been carried out over a broad range of operating conditions, leading to the following key findings:
\begin{itemize}
    \item \textit{Range vs. Velocity Accuracy under PN:} \rev{The proposed MAP-ISAA algorithm significantly outperforms the FFT benchmark in ranging accuracy, while providing slight improvements in velocity accuracy. This} results from strong (weak) correlation of PN in fast-time (slow-time). 
    \item \textit{Impact of Oscillator Type - FRO vs. PLL:} FROs and PLLs lead to very similar performance in range estimation, whereas PLLs can provide much higher velocity accuracy than FROs. This is due to the absence of PN correlation in slow-time for FROs.
    \item \rev{\textit{Scenarios with Significant Impact of PN on Sensing:} The effect of PN on sensing performance becomes more significant at higher SNRs, for oscillators with larger $\fdb$ and smaller $\floop$ values, and farther targets. In such scenarios, PN should be considered in algorithm design.}
    \item \textit{Iterated Small Angle Approximation:} The proposed algorithm enjoys fast convergence to the hybrid CRB and provides superior PN tracking performance via iterated approximations. 
    \item \textit{PN Exploitation:} Delay-dependency of PN statistics can be effectively exploited to resolve range ambiguity above a certain SNR level, thereby converting the detrimental effect of PN into something beneficial for radar sensing.
\end{itemize}
As future research, we plan to investigate extensions of the proposed framework to MIMO architectures and \rev{the multi-target case.}
\vspace{-0.05in}

\begin{appendices}

\section{Correlation Function of Differential Phase Noise in \eqref{eq_exp_xi2}}\label{app_corr_dpn}
   Let the correlation function of $\phi(t)$ in \eqref{eq_pn_stat} be defined as
   \begin{align}
        \rphii(t_1,t_2) &\triangleq \E \left[  \phi(t_1)  \phi(t_2) \right] ~.
   \end{align}
   Then, it follows from \eqref{eq_dpn} that 
\begin{align}\label{eq_exp_xi}
    \E \left[  \xi(t_1, \tau)  \xi(t_2, \tau) \right] & = \rphii(t_1,t_2) - \rphii(t_1, t_2 - \tau) \\ \nonumber
    &- \rphii(t_1-\tau, t_2) + \rphii(t_1-\tau, t_2-\tau) ~.
\end{align}
From \eqref{eq_dpn} and \eqref{eq_dpn_stat}, we have
\begin{align}
    \sigmatau = \E \left[  \xi(t, \tau)^2 \right] = \sigmaphi(t) + \sigmaphi(t-\tau) - 2 \rphii(t, t-\tau) ~,
\end{align}
which yields 
\begin{align}\label{eq_rphii}
    \rphii(t_1, t_2) = \frac{ \sigmaphi(t_1) + \sigmaphi(t_2) - \sigmaxi(t_1-t_2)  }{2} ~.
\end{align}
Finally, inserting \eqref{eq_rphii} into \eqref{eq_exp_xi} yields
\begin{align}\label{eq_exp_xi3}
    &\E \left[  \xi(t_1, \tau)  \xi(t_2, \tau) \right] \\ \nonumber &= \frac{ \sigmaxi(t_1-t_2+\tau) + \sigmaxi(t_1-t_2-\tau) }{2} - \sigmaxi(t_1-t_2) ~.
\end{align}
Defining $\deltat \triangleq t_1 - t_2$ in \eqref{eq_exp_xi3} and considering $\sigmaxi(\tau) = \sigmaxi(-\tau)$, we obtain the result in \eqref{eq_exp_xi2}.

\section{Block-Diagonal Structure of $\boldR(\tau)$ for FROs}\label{app_blk_diag}
   
   Using \eqref{eq_sigmatauwn} in \eqref{eq_exp_xi2}, we obtain the correlation function for FROs as
\begin{align} \label{eq_rxii_fro}
     \rxii(\deltat, \tau) &=  4 \pi \fdb \max \left( \tau - \abs{\deltat}, 0 \right) ~.
\end{align}
Since $\abs{\deltat^{(m)}_{n_1 n_2}} \geq \Tcp$ for $m>0$ in \eqref{eq_ind_blocks}, we have $\rxii(\deltat^{(m)}_{n_1 n_2}, \tau) = 0$ for $m>0$ and $\tau \leq \Tcp$ from \eqref{eq_rxii_fro}, leading to $\boldR_m(\tau) = \boldzero$ for $m>0$, which yields \eqref{eq_R_blkdiag}.

\section{Obtaining \eqref{eq_nll_original_saa2_2} From \eqref{eq_nll_original_saa2}} \label{app_obtain}
In this part, we provide details on how to reach \eqref{eq_nll_original_saa2_2} from \eqref{eq_nll_original_saa2}. To this end, we present the following auxiliary lemmas.

\begin{lemma}\label{lemma_s1}
For any $\xx \in \realset{N}{1}$ and any Hermitian matrix $\boldA \in \complexset{N}{N}$, the following equality holds:
\begin{align} \label{eq_xAx_lemma}
    \xx^T \boldA \xx = \xx^T \realp{\boldA} \xx ~.
\end{align}
\end{lemma}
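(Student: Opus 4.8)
The plan is to split $\boldA$ into its real and imaginary parts and then use the Hermitian symmetry together with the fact that $\xx$ is real-valued. Write $\boldA = \realp{\boldA} + j\,\imp{\boldA}$, where both $\realp{\boldA}$ and $\imp{\boldA}$ are real matrices in $\realset{N}{N}$. Since $\boldA$ is Hermitian, $\boldA = \boldA^H = \realp{\boldA}^T - j\,\imp{\boldA}^T$, and matching real and imaginary parts gives $\realp{\boldA}^T = \realp{\boldA}$ (i.e., $\realp{\boldA}$ is symmetric) and $\imp{\boldA}^T = -\imp{\boldA}$ (i.e., $\imp{\boldA}$ is skew-symmetric).

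Next I would expand the quadratic form as $\xx^T \boldA \xx = \xx^T \realp{\boldA} \xx + j\,\xx^T \imp{\boldA} \xx$, which is valid because $\xx$ is real. The key observation is that the quadratic form of a real skew-symmetric matrix on a real vector vanishes: since $\xx^T \imp{\boldA}\xx$ is a scalar, it equals its own transpose, so $\xx^T \imp{\boldA}\xx = (\xx^T \imp{\boldA}\xx)^T = \xx^T \imp{\boldA}^T \xx = -\xx^T \imp{\boldA}\xx$, forcing $\xx^T \imp{\boldA}\xx = 0$. Substituting this back yields $\xx^T \boldA \xx = \xx^T \realp{\boldA} \xx$, which is exactly \eqref{eq_xAx_lemma}.

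There is no genuine obstacle here; the statement is a one-line consequence of Hermitian symmetry. The only point worth flagging is that the cancellation of the imaginary contribution hinges on $\xx$ being real — for a general complex vector the skew-symmetric part of $\boldA$ would not drop out of $\xx^T \boldA \xx$ (though of course $\hermitianp{\xx} \boldA \xx$ would still be real).
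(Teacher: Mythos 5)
Your proof is correct and follows essentially the same route as the paper: both arguments rest on the observation that the scalar quadratic form equals its own transpose and that Hermitian symmetry forces the imaginary part's contribution to vanish (you make the skew-symmetry of $\imp{\boldA}$ explicit, whereas the paper works directly with $\boldA^T = \boldA^\conj$, but this is the same computation). Your closing remark that the cancellation requires $\xx$ to be real is accurate and matches the hypothesis of the lemma.
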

\begin{proof}
Since $\xx^T \boldA \xx$ is scalar, we can write
\begin{align}\label{eq_lemma_s1}
    &\xx^T \boldA \xx = (\xx^T \boldA \xx)^T = \xx^T \boldA^T \xx
    \labelrel={eq_hermit_s1} \xx^T \boldA^\conj \xx ~,
\end{align}
where \eqref{eq_hermit_s1} is due to $ \boldA$ being Hermitian, i.e., $ \boldA^H =  \boldA$. Re-writing the leftmost and rightmost terms in \eqref{eq_lemma_s1}, we obtain
\begin{align} \label{eq_xax}
    &\xx^T \boldA \xx = \xx^T \left( \realp{\boldA} + j \imp{\boldA} \right) \xx
    \\ \nonumber
    &=\xx^T \boldA^\conj \xx = \xx^T \left( \realp{\boldA} - j \imp{\boldA} \right) \xx ~,
\end{align}
which leads to
\begin{align} \label{eq_ximx}
    \xx^T \imp{\boldA} \xx = 0 ~.
\end{align}
Substituting \eqref{eq_ximx} into \eqref{eq_xax} yields the result in \eqref{eq_xAx_lemma}. $\blacksquare$
\end{proof}

\begin{lemma}\label{lemma_s2}
For any $\xx \in \realset{N}{1}$, $\yy \in \realset{N}{1}$ and any Hermitian matrix $\boldA \in \complexset{N}{N}$, the following equality holds:
\begin{align} \label{eq_xAx_lemma}
    j(\xx^T \boldA \yy - \yy^T \boldA \xx) = 2 \yy^T \imp{\boldA} \xx ~.
\end{align}
\end{lemma}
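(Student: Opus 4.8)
The plan is to mimic the argument used for Lemma~\ref{lemma_s1}, exploiting that $\xx^T \boldA \yy$ is a scalar together with the Hermitian symmetry $\boldA^H = \boldA$. First I would observe that, being a scalar, $\xx^T \boldA \yy$ equals its own transpose, so $\xx^T \boldA \yy = \yy^T \boldA^T \xx = \yy^T \boldA^\conj \xx$, where the last step uses $\boldA^T = \boldA^\conj$ (which is equivalent to $\boldA$ being Hermitian). Since $\xx$ and $\yy$ are real, $\yy^T \boldA^\conj \xx$ is precisely the complex conjugate of $\yy^T \boldA \xx$.

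Combining these facts gives $\xx^T \boldA \yy - \yy^T \boldA \xx = \overline{\yy^T \boldA \xx} - \yy^T \boldA \xx = -2j\,\imp{\yy^T \boldA \xx}$. Multiplying both sides by $j$ then yields $j(\xx^T \boldA \yy - \yy^T \boldA \xx) = 2\,\imp{\yy^T \boldA \xx}$, and because $\xx,\yy$ are real the imaginary part can be pulled through the quadratic form: writing $\boldA = \realp{\boldA} + j\,\imp{\boldA}$, the scalar $\yy^T \boldA \xx$ splits into the real scalar $\yy^T \realp{\boldA}\,\xx$ plus $j$ times the real scalar $\yy^T \imp{\boldA}\,\xx$, so $\imp{\yy^T \boldA \xx} = \yy^T \imp{\boldA}\,\xx$, which is the claimed identity.

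An equivalent, more matrix-algebraic route is to use the same decomposition $\boldA = \realp{\boldA} + j\,\imp{\boldA}$ and note that Hermitian symmetry forces $\realp{\boldA}$ to be symmetric and $\imp{\boldA}$ to be skew-symmetric; then $\yy^T \boldA \xx = \xx^T \realp{\boldA}\,\yy - j\,\xx^T \imp{\boldA}\,\yy$, and subtracting this from $\xx^T \boldA \yy = \xx^T \realp{\boldA}\,\yy + j\,\xx^T \imp{\boldA}\,\yy$ cancels the symmetric part and doubles the skew part, giving $\xx^T \boldA \yy - \yy^T \boldA \xx = 2j\,\xx^T \imp{\boldA}\,\yy = -2j\,\yy^T \imp{\boldA}\,\xx$, from which the result follows after multiplying by $j$.

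I do not anticipate a genuine obstacle here: the statement is an elementary consequence of Hermitian symmetry and the reality of $\xx$ and $\yy$, and the only point requiring a modicum of care is bookkeeping the factor of $j$ together with the sign arising because $\imp{\boldA}$ is skew-symmetric (so that $\yy^T \imp{\boldA}\,\xx = -\xx^T \imp{\boldA}\,\yy$), exactly as in the proof of Lemma~\ref{lemma_s1}.
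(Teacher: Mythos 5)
Your proof is correct and follows essentially the same route as the paper: both start from the scalar-transpose identity $\xx^T \boldA \yy = \yy^T \boldA^T \xx$ and Hermitian symmetry $\boldA^T = \boldA^\conj$, and then extract the factor $-2j\,\imp{\boldA}$ (the paper via $\boldA^\conj - \boldA = -2j\,\imp{\boldA}$ at the matrix level, you via $\bar z - z = -2j\,\Im\{z\}$ at the scalar level, which is the same computation). The sign and factor-of-$j$ bookkeeping is right, so nothing further is needed.
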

\begin{proof}
Since $\xx^T \boldA \yy$ is scalar and $\boldA$ is Hermitian, we can write
\begin{align*}
    &j(\xx^T \boldA \yy - \yy^T \boldA \xx)
    = j(\yy^T \boldA^T \xx - \yy^T \boldA \xx)
    \\ 
    &= j(\yy^T \boldA^\conj \xx - \yy^T \boldA \xx)
    \\
    &= j \yy^T (\boldA^\conj - \boldA) \xx
    \\
    &= -j \yy^T 2j \imp{\boldA} \xx
    \\
    &= 2 \yy^T \imp{\boldA} \xx ~,
\end{align*} 
which completes the proof. $\blacksquare$
\end{proof}

Opening up the first term in \eqref{eq_nll_original_saa2}, we have
\begin{align} \label{eq_lemma_s10}
    & (\boldone - j \bxidelta)^T \boldD^{\rev{(i)}}(\tau, \nu)  (\boldone + j \bxidelta)
    \\ \label{eq_lemma_s11}
    &= \boldone^T \boldD^{\rev{(i)}}(\tau, \nu)\boldone + \bxidelta^T \boldD^{\rev{(i)}}(\tau, \nu)\bxidelta 
    \\ \nonumber
    &~~ + j \left( \boldone^T \boldD^{\rev{(i)}}(\tau, \nu)\bxidelta -  \bxidelta^T \boldD^{\rev{(i)}}(\tau, \nu) \boldone \right)
    \\  \label{eq_lemma_s12}
    &= \boldone^T \realp{\boldD^{\rev{(i)}}(\tau, \nu)} \boldone + \bxidelta^T \realp{\boldD^{\rev{(i)}}(\tau, \nu)} \bxidelta
    \\ \nonumber &~~+ 2 \bxidelta^T \imp{\boldD(\tau, \nu} \boldone ~,
\end{align}
where the transition from \eqref{eq_lemma_s11} to \eqref{eq_lemma_s12} is via Lemma~\ref{lemma_s1} and Lemma~\ref{lemma_s2}. Plugging \eqref{eq_lemma_s10}--\eqref{eq_lemma_s12} into \eqref{eq_nll_original_saa2} yields \eqref{eq_nll_original_saa2_2}.

\section{Delay-Doppler Estimation in \eqref{eq_hybrid_ml_map_doppler3_fixed}}\label{app_delayDoppler}
Using \eqref{eq_xi}, \eqref{eq_qq_norm} and \eqref{eq_hybrid_ml_map_doppler4} in \eqref{eq_hybrid_ml_map_doppler3_fixed}, we obtain
\begin{align} \nonumber
    & \llr(\tau, \nu, \bxihat) \\ \nonumber &= \frac{1}{\sigma^2} \left( e^{j \bxihat} \odot \yy \right)^H  \projnull{\qq(\tau, \nu)} \left( e^{j \bxihat} \odot \yy \right)  +  \bxihat^T \boldR(\tau)^{-1} \bxihat \\ \nonumber  &~~+ \log \det \boldR(\tau)
    \\ \nonumber
    &= \frac{1}{\sigma^2} \left( e^{j \bxihat} \odot \yy \right)^H  \left( \Imatrix  - \frac{ \qq(\tau, \nu) \qq^H(\tau, \nu)}{\norm{\boldX}_F^2} \right) \left( e^{j \bxihat} \odot \yy \right)    \\ \nonumber  &~~+ \bxihat^T \boldR(\tau)^{-1} \bxihat  + \log \det \boldR(\tau)
    \\ \nonumber
    &= \frac{\norm{\yy}^2}{\sigma^2} - \frac{ \absbig{ \qq^H(\tau, \nu) \left( e^{j \bxihat} \odot \yy \right)}^2 }{ \sigma^2 \norm{\boldX}_F^2  } + \bxihat^T \boldR(\tau)^{-1} \bxihat \\ \label{eq_delayDoppler_min}   &~~  + \log \det \boldR(\tau) ~.
\end{align}
Based on \eqref{eq_delayDoppler_min}, the problem \eqref{eq_hybrid_ml_map_doppler3_fixed} is equivalent to
\begin{align}\nonumber
    (\tauhat, \nuhat)  &= \arg \max_{ \tau, \nu} ~ \frac{ \absbig{ \qq^H(\tau, \nu) \left( e^{j \bxihat} \odot \yy \right)}^2 }{ \sigma^2 \norm{\boldX}_F^2  } - \bxihat^T \boldR(\tau)^{-1} \bxihat \\ \label{eq_hybrid_ml_map_doppler3_fixed_eq} &~~- \log \det \boldR(\tau)  ~.
\end{align}
Using \eqref{eq_qtaunu} in the first term of \eqref{eq_hybrid_ml_map_doppler3_fixed_eq} and defining $\Wbhat \triangleq \veccinv{ e^{- j \bxihat}}$, we have
\begin{align}\label{eq_map_first}
      & \absbig{ \qq^H(\tau, \nu) \left( e^{j \bxihat} \odot \yy \right)}^2 \\ \nonumber
      &= \absbig{ \trace{ \left[\FF_N^{H} \Big(\boldX \odot \bb(\tau) \cc^{H}(\nu) \Big) \right]^H \left[ \Wbhat^\conj \odot \boldY \right]  } }^2 
      \\ \nonumber
      &= \absbig{ \trace{  \Big(\boldX \odot \bb(\tau) \cc^{H}(\nu) \Big)^H \FF_N \left[ \Wbhat^\conj \odot \boldY \right]  } }^2
      \\ \nonumber
      &= \absbig{ \trace{  \Big(\bb(\tau) \cc^{H}(\nu) \Big)^H \Big( \boldX^\conj \odot \FF_N \left[ \Wbhat^\conj \odot \boldY \right] \Big)  } }^2
      \\ \nonumber
      &= \absbig{ \trace{   \cc(\nu) \bb^H(\tau) \Big( \boldX^\conj \odot \FF_N \left[ \Wbhat^\conj \odot \boldY \right] \Big)  } }^2
      \\ \nonumber
      &= \absbig{ \trace{   \bb^H(\tau) \Big( \boldX^\conj \odot \FF_N \left[ \Wbhat^\conj \odot \boldY \right] \Big) \cc(\nu)   } }^2
      \\ \nonumber
      &= \absbig{ \bb^H(\tau) \left( \boldX^\conj \odot \FF_N \big( \Wbhat^\conj \odot \boldY \big)  \right) \cc(\nu) }^2~,
\end{align}
which, after being inserted into \eqref{eq_hybrid_ml_map_doppler3_fixed_eq}, results in \eqref{eq_llrmax}.

\section{\rev{Convergence Analysis of Algorithm~\ref{alg_map_isaa}}}\label{sec_conv_analysis}
\rev{In this part, we provide the convergence analysis of the proposed ISAA algorithm in Algorithm~\ref{alg_map_isaa} to identify the conditions under which the algorithm converges to a stationary point of the hybrid ML/MAP optimization problem in \eqref{eq_hybrid_ml_map_doppler3}.}

\subsection{\rev{Formulation of Subproblems to Solve \eqref{eq_hybrid_ml_map_doppler3}}}\label{sec_subproblems}
\rev{In Algorithm~1, we decompose the original problem \eqref{eq_hybrid_ml_map_doppler3} into two subproblems, each optimizing over either delay-Doppler or PN while keeping the other variable fixed in an alternating fashion. Let $(\tauhat^{(i)}, \nuhat^{(i)}, \bxihat^{(i)})$ denote the delay, Doppler and PN estimates at the beginning of the $\thn{i}$ iteration of the alternating optimization procedure.}

\subsubsection{\rev{Subproblem for PN Estimation}}
\rev{The subproblem of \eqref{eq_hybrid_ml_map_doppler3} for PN estimation at the $\thn{i}$ iteration can be cast as
\begin{align}\label{eq_subprob_pn_supp}
    \bxihat^{(i+1)}  = \arg \min_{\bxi} ~ \llr(\tauhat^{(i)}, \nuhat^{(i)}, \bxi) ~,
\end{align}
where $\llr(\tau, \nu, \bxi)$ is the hybrid ML/MAP objective function in \eqref{eq_hybrid_ml_map_doppler4}. In the proposed ISAA approach, we tackle \eqref{eq_subprob_pn_supp} by solving the equivalent problem
\begin{align}\label{eq_hybrid_ml_map_doppler_res_supp}
    \bxideltahat  = \arg \min_{\bxidelta} ~ \llr(\tauhat^{(i)}, \nuhat^{(i)}, \bxidelta + \bxihat^{(i)}) ~,
\end{align}
which estimates the residual PN $\bxidelta = \bxi - \bxihat^{(i)}$ in \eqref{eq_bxidelta} instead of the actual PN $\bxi$ given the PN estimate $\bxihat^{(i)}$ from the previous iteration. In the PN update step in Line~\ref{alg_pn_step} of Algorithm~\ref{alg_map_isaa}, the problem \eqref{eq_subprob_pn_supp}, or, equivalently \eqref{eq_hybrid_ml_map_doppler_res_supp}, is solved in closed-form by applying the SAA for $\bxidelta$:
\begin{align}\label{eq_saa_supp}
    e^{-j\bxidelta} \approx \boldone - j \bxidelta ~.
\end{align}}

\subsubsection{\rev{Subproblem for Delay-Doppler Estimation}}
\rev{The subproblem of \eqref{eq_hybrid_ml_map_doppler3} for delay-Doppler estimation at the $\thn{i}$ iteration is given by
\begin{align}\label{eq_hybrid_ml_map_doppler3_fixed_supp}
    (\tauhat^{(i+1)}, \nuhat^{(i+1)})  = \arg \min_{ \tau, \nu} ~ \llr(\tau, \nu, \bxihat^{(i+1)}) ~.
\end{align}
The delay-Doppler update in Line~\ref{alg_dd_step} of Algorithm~\ref{alg_map_isaa} finds an optimal solution of the problem \eqref{eq_hybrid_ml_map_doppler3_fixed_supp}.}

\subsection{\rev{Convergence Result}}
\rev{Based on the subproblem definitions in Sec.~\ref{sec_subproblems}, we can present the convergence result for Algorithm~\ref{alg_map_isaa}.
\begin{proposition}\label{prop_conv}
    Assume that the SAA in \eqref{eq_saa_supp} holds (i.e., the residual PN is small) and there exists a unique delay-Doppler pair that minimizes the objective in \eqref{eq_hybrid_ml_map_doppler3_fixed_supp}. Then, Algorithm~\ref{alg_map_isaa}, which solves \eqref{eq_subprob_pn_supp} and \eqref{eq_hybrid_ml_map_doppler3_fixed_supp} in an alternating fashion, converges to a stationary point of the hybrid ML/MAP optimization problem in \eqref{eq_hybrid_ml_map_doppler3}.
\end{proposition}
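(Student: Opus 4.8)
The plan is to recognize Algorithm~\ref{alg_map_isaa} as a two-block alternating minimization (block coordinate descent) applied to the hybrid ML/MAP objective $\llr(\tau,\nu,\bxi)$ in \eqref{eq_hybrid_ml_map_doppler4}, with blocks $\bxi$ and $(\tau,\nu)$, and then to invoke the classical convergence theory for such schemes. First I would argue that, under the stated assumption that the SAA \eqref{eq_saa_supp} holds, Line~\ref{alg_pn_step} solves the PN subproblem \eqref{eq_subprob_pn_supp} \emph{exactly}. Indeed, with \eqref{eq_saa_supp} the objective in \eqref{eq_hybrid_ml_map_doppler_res_supp}, viewed as a function of the residual PN $\bxidelta$, collapses to the quadratic form \eqref{eq_nll_original_saa3}, whose Hessian is $\tfrac{2}{\sigma^2}\realp{\boldD^{(i)}(\tau,\nu)} + 2\boldR(\tau)^{-1}$. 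This is positive definite: $\boldR(\tau)^{-1}\succ\boldzero$ by \eqref{eq_bxi_stat}, and $\boldD^{(i)}(\tau,\nu)\succeq\boldzero$ because \eqref{eq_gammanutau} is the Hadamard product of the positive semidefinite matrix $\diag{\yy}^H \projnull{\qq(\tau,\nu)} \diag{\yy}$ with the rank-one positive semidefinite matrix $e^{-j\bxihat^{(i)}}(e^{-j\bxihat^{(i)}})^H$ (Schur product theorem), so that $\bxidelta^T\realp{\boldD^{(i)}(\tau,\nu)}\bxidelta = \bxidelta^T\boldD^{(i)}(\tau,\nu)\bxidelta \ge 0$ for real $\bxidelta$. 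Hence the closed form \eqref{eq_pn_res_est_doppler} followed by the update \eqref{eq_bxi_update} returns the unique global minimizer $\bxihat^{(i+1)} = \arg\min_{\bxi}\llr(\tauhat^{(i)},\nuhat^{(i)},\bxi)$, and by the uniqueness hypothesis Line~\ref{alg_dd_step} likewise returns the exact global solution of \eqref{eq_hybrid_ml_map_doppler3_fixed_supp}.

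Second I would establish monotone descent and boundedness from below. Chaining the two exact block updates gives $\llr(\tauhat^{(i)},\nuhat^{(i)},\bxihat^{(i)}) \ge \llr(\tauhat^{(i)},\nuhat^{(i)},\bxihat^{(i+1)}) \ge \llr(\tauhat^{(i+1)},\nuhat^{(i+1)},\bxihat^{(i+1)})$, so the sequence of objective values is non-increasing. It is also bounded below: the first term of \eqref{eq_hybrid_ml_map_doppler4} is a (scaled) orthogonal-projection quadratic form and hence nonnegative, $\bxi^T\boldR(\tau)^{-1}\bxi \ge 0$, and $\log\det\boldR(\tau)$ is continuous, hence bounded below, over the compact delay search interval on which $\boldR(\tau)$ stays positive definite. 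A monotone, bounded-below real sequence converges, so $\{\llr(\tauhat^{(i)},\nuhat^{(i)},\bxihat^{(i)})\}_i$ has a limit.

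Third, I would pass to a stationary point. Since $(\tau,\nu)$ is searched over a compact region and, for each such $\tau$, the term $\bxi^T\boldR(\tau)^{-1}\bxi$ is coercive in $\bxi$ (its smallest eigenvalue is bounded below by a positive constant, by continuity and compactness), the sublevel sets of $\llr$ intersected with the search region are bounded; therefore the iterate sequence $\{(\tauhat^{(i)},\nuhat^{(i)},\bxihat^{(i)})\}$ lies in a compact set and admits limit points. At any limit point $(\tau^\star,\nu^\star,\bxi^\star)$, continuity of $\llr$ together with the fact that \emph{each} block update attains the exact per-block minimum — with the delay-Doppler minimizer moreover \emph{unique} — lets me invoke the standard two-block coordinate-descent argument to conclude that $\bxi^\star$ minimizes $\llr(\tau^\star,\nu^\star,\cdot)$ and $(\tau^\star,\nu^\star)$ minimizes $\llr(\cdot,\cdot,\bxi^\star)$, i.e., $(\tau^\star,\nu^\star,\bxi^\star)$ is a coordinatewise minimum. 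As $\llr$ is continuously differentiable in all its arguments, this forces the partial gradients — and hence the full gradient $\nabla\llr(\tau^\star,\nu^\star,\bxi^\star)$ — to vanish, so the limit point is a stationary point of \eqref{eq_hybrid_ml_map_doppler3}.

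The main obstacle — and the reason the SAA hypothesis is essential — is the gap between the surrogate quadratic \eqref{eq_nll_original_saa3} that Line~\ref{alg_pn_step} actually minimizes and the true nonconvex objective $\llr$: only when the residual PN stays small, so that \eqref{eq_saa_supp} is tight, can the closed-form update \eqref{eq_pn_res_est_doppler} be identified with an exact minimizer of $\llr$ over the $\bxi$-block, which is precisely what makes the descent chain and the limit-point optimality go through; without this, one would have to control the SAA truncation error and treat Algorithm~\ref{alg_map_isaa} as an inexact (majorization–minimization / successive-convex-approximation) scheme. A secondary technical point is the continuity and uniqueness bookkeeping needed for the limit points to inherit block-optimality, which is exactly where the uniqueness assumption on the delay-Doppler subproblem enters.
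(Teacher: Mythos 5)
Your proposal is correct and follows essentially the same route as the paper: identify Algorithm~\ref{alg_map_isaa} as two-block alternating minimization in which, under the SAA, the PN block is minimized exactly and uniquely in closed form, and the delay-Doppler block is minimized uniquely by hypothesis, whence convergence to a stationary point of \eqref{eq_hybrid_ml_map_doppler3}. The only difference is that the paper delegates the final convergence step to a cited result (\cite[Proposition~1]{aubry2018new}), whereas you reconstruct it by hand (monotone descent, boundedness below, compactness of iterates, limit-point block-optimality) and additionally verify positive definiteness of the quadratic's Hessian via the Schur product theorem — details the paper asserts implicitly.
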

\begin{proof}
    When the SAA in \eqref{eq_saa_supp} holds, the closed-form solution given by \eqref{eq_pn_res_est_doppler} and \eqref{eq_bxi_update} provides the optimal solution of the PN estimation subproblem in \eqref{eq_subprob_pn_supp}, or, equivalently \eqref{eq_hybrid_ml_map_doppler_res_supp}, and is therefore a unique solution (being a closed-form one). Then, relying on the assumption that the solution to \eqref{eq_hybrid_ml_map_doppler3_fixed_supp} is unique, it follows from \cite[Proposition~1]{aubry2018new} that Algorithm~\ref{alg_map_isaa} converges to a stationary point of \eqref{eq_hybrid_ml_map_doppler3}.
\end{proof}}

\section{Complexity Analysis of Algorithm~\ref{alg_map_isaa}}\label{app_comp_alg_map_isaa}

In this section, we analyze the per-iteration complexity of Algorithm~\ref{alg_map_isaa}. We first focus on the update of PN estimates via \eqref{eq_pn_res_est_doppler} and \eqref{eq_bxi_update}, and then study the complexity of updating delay-Doppler estimates via \eqref{eq_llrmax} and \eqref{eq_llrtilde}.

\subsection{Complexity of PN Estimation in \eqref{eq_pn_res_est_doppler}--\eqref{eq_bxi_update}}
For convenience, we repeat the residual PN estimate in \eqref{eq_pn_res_est_doppler} here:
\begin{align} \nonumber
    \bxideltahat(\tau, \nu)
    &= - \boldR(\tau) \Big(  \realp{\boldD^{\rev{(i)}}(\tau, \nu)} \boldR(\tau) + \sigma^2  \Imatrix \Big)^{-1} 
    \\ \label{eq_pn_res_est_doppler_reg_supp}
    &~~~~~\times \left( \imp{\boldD^{\rev{(i)}}(\tau, \nu)} \boldone  + \sigma^2 \boldR(\tau)^{-1} \bxihat^{(i)} \right) ~.
\end{align}
In the following, the complexity of \eqref{eq_pn_res_est_doppler_reg_supp} is analyzed in four steps.

\subsubsection{Complexity of $\boldR(\tau)^{-1} \bxihat^{(i)}$}
For efficient computation of $\boldR(\tau)^{-1} \bxihat^{(i)}$, the conjugate gradient (CG) method can be employed, similar to \cite{OFDM_Joint_PHN_TSP_2006}. Note that computing
\begin{align} \label{eq_xx_rtau_cg}
    \xx = \boldR(\tau)^{-1} \bxihat^{(i)}
\end{align}
is equivalent to solving the following linear system of equations for $\xx$ \cite{OFDM_Joint_PHN_TSP_2006}:
\begin{align} \label{eq_cg_boldr}
    \boldR(\tau) \xx = \bxihat^{(i)} ~.
\end{align}
The CG method provides an iterative procedure to solve \eqref{eq_cg_boldr}, where at each iteration, the major complexity results from a matrix-vector multiplication \cite{OFDM_Joint_PHN_TSP_2006}
\begin{align} \label{eq_matvec}
    \bomega = \boldR(\tau) \bkappa
\end{align}
for some $\bkappa \in \realset{NM}{1}$, where $\boldR(\tau) \in \realset{NM}{NM}$ is the PN covariance matrix given in \eqref{eq_toep_block}. 

Let us define $\bomega = [\bomega_0^T \, \ldots \, \bomega_{M-1}^T]^T$ and $\bkappa = [\bkappa_0^T \, \ldots \, \bkappa_{M-1}^T]^T$, where $\bomega_m \in \realset{N}{1}$ and $\bkappa_m \in \realset{N}{1}$ for each $m$. Then, using the Toeplitz-block Toeplitz structure of $\boldR(\tau)$ in \eqref{eq_toep_block}, we can re-write \eqref{eq_matvec} as
\begin{align} \label{eq_omegam}
    \bomega_m = \sum_{\ell = 0}^{m-1} \boldR_{m-\ell}^T(\tau) \bkappa_{\ell} + \sum_{\ell = 
    m}^{M-1} \boldR_{\ell-m}(\tau) \bkappa_{\ell}
\end{align}
for $m = 0, \ldots, M-1$, where $\boldR_m(\tau) \in \realset{N}{N}$ is a Toeplitz matrix. Using the circulant approximation of Toeplitz matrices for large $N$ \cite{Toeplitz_Circulant_88}, $\boldR_m(\tau)$ can be approximated as
\begin{align}\label{eq_circ_approx_supp}
    \boldR_m(\tau) \approx \boldRcirc_m(\tau) = \FF_N \Lambdab_m(\tau) \FF_N^H ~,
\end{align}
where $\FF_N \in \complexset{N}{N}$ is the unitary DFT matrix, $\boldRcirc_m(\tau) \in \realset{N}{N}$ is the best circulant approximation to $\boldR_m(\tau)$ that minimizes $\norm{\boldR_m(\tau) - \boldRcirc_m(\tau)}_F$ \cite{Toeplitz_Circulant_88}, and 
\begin{align} \label{eq_eigs_rcirc}
    \Lambdab_m(\tau) = \diag{\sqrt{N} \FF_N^H \cc_m(\tau) } \in \complexset{N}{N} ~,
\end{align}
with $\cc_m(\tau)$ denoting the first column of $\boldRcirc_m(\tau)$.

Based on the approximation in \eqref{eq_circ_approx_supp}, each summand in the second term of \eqref{eq_omegam} can be expressed as
\begin{align}
    \boldR_{\ell-m}(\tau) \bkappa_{\ell} \approx \FF_N \Lambdab_{\ell-m}(\tau) \FF_N^H \bkappa_{\ell} ~,
\end{align}
which can be computed efficiently using FFT and IFFT, leading to $\mathcal{O}(N \log N)$ complexity. Similarly, for the summands in the first term of \eqref{eq_omegam}, we must compute
\begin{align}
    \boldR_{m-\ell}^T(\tau) \bkappa_{\ell} \approx \FF_N^H \Lambdab_{m-\ell}(\tau) \FF_N \bkappa_{\ell} ~,
\end{align}
which results in $\mathcal{O}(N \log N)$ complexity. In practice, as seen from the correlation characteristics of PN in Fig.~\ref{fig_PN_covariance_PLL_FRO}, the PN correlation vanishes after a certain time lag, meaning that only a few blocks in $\boldR(\tau)$, say $M_0$, are dominant in the computation of the right-hand side of \eqref{eq_omegam} (i.e., $\boldR_0(\tau), \ldots, \boldR_{M_0-1}(\tau)$), where $M_0 = 1$ for FROs due to the block-diagonal structure derived in Lemma~\ref{lemma_fro_blkdiag} and $1 \leq M_0 \ll M$ for PLLs (typically, $M_0 \leq 3$). Hence, the complexity of \eqref{eq_omegam} is given by
\begin{align}
    \mathcal{O}(M_0 N \log N) ~,
\end{align}
which yields the following complexity for \eqref{eq_matvec}:
\begin{align} \label{eq_comp_rkappa}
    \mathcal{O}(M_0 M N \log N) ~.
\end{align}
Assuming that the CG method requires $I$ iterations to converge, the complexity of evaluating $\boldR(\tau)^{-1} \bxihat^{(i)}$ in \eqref{eq_xx_rtau_cg} can be expressed as
\begin{align} \label{eq_comp_step1}
    \mathcal{O}(I M_0 M N \log N) ~.
\end{align}

\subsubsection{Complexity of $\imp{\boldD^{\rev{(i)}}(\tau, \nu)} \boldone$}
Using the definition of $\boldD^{\rev{(i)}}(\tau, \nu)$ in \eqref{eq_gammanutau}, we can write
\begin{align} \label{eq_boldDtau}
    & \boldD^{\rev{(i)}}(\tau, \nu)
    \\ \nonumber
    &= \left( \diag{\yy}^H \projnull{ \qq(\tau, \nu)}  \diag{\yy} \right) \odot \left( e^{-j  \bxihat^{(i)}  } (e^{j  \bxihat^{(i)}  })^T \right)
    \\ \nonumber
    &= \left[ \diag{\yy}^H \diag{\yy} - \left(\yy^\conj \yy^T \odot \frac{\qq(\tau, \nu) \qq^H(\tau, \nu)}{\norm{\qq(\tau, \nu)}^2} \right) \right]
    \\ \nonumber &~~\odot \left( e^{-j \bxihat^{(i)}} (e^{j \bxihat^{(i)}})^T \right)
    \\ \nonumber
    &= \diag{\abs{\yy}^2} 
    \\ \nonumber
    &~~- \frac{1}{\norm{\qq(\tau, \nu)}^2} \left(\yy^\conj \odot \qq(\tau, \nu) \odot e^{-j \bxihat^{(i)}} \right) \left(\yy^\conj \odot \qq(\tau, \nu) \odot e^{-j \bxihat^{(i)}} \right)^H
    \\ \nonumber
    &= \diag{\abs{\yy}^2} - \frac{1}{\norm{\boldX}_F^2}
     \left(\yy^\conj \odot \boldXihat^{(i)} \qq(\tau, \nu)  \right) \left(\yy^\conj \odot \boldXihat^{(i)} \qq(\tau, \nu)  \right)^H
\end{align}
where $\boldXihat^{(i)} \triangleq \diag{e^{-j \bxihat^{(i)}}}$. Using \eqref{eq_boldDtau}, the real and imaginary parts of $\boldD^{\rev{(i)}}(\tau, \nu)$ are given by
\begin{align}\label{eq_reald}
    &\realp{\boldD^{\rev{(i)}}(\tau, \nu)} = \diag{\abs{\yy}^2} 
    \\ \nonumber
    &~~- \frac{1}{\norm{\boldX}_F^2}
     \realp{\left(\yy^\conj \odot \boldXihat^{(i)} \qq(\tau, \nu)  \right) \left(\yy^\conj \odot \boldXihat^{(i)} \qq(\tau, \nu)  \right)^H} ~,
     \\ \label{eq_imagd}
      &\imp{\boldD^{\rev{(i)}}(\tau, \nu)} \\ \nonumber
      &~~= - \frac{1}{\norm{\boldX}_F^2}
     \imp{\left(\yy^\conj \odot \boldXihat^{(i)} \qq(\tau, \nu)  \right) \left(\yy^\conj \odot \boldXihat^{(i)} \qq(\tau, \nu)  \right)^H} ~.
\end{align}

Using \eqref{eq_imagd}, we can compute $\imp{\boldD^{\rev{(i)}}(\tau, \nu)} \boldone$ as
\begin{align}
    & \imp{\boldD^{\rev{(i)}}(\tau, \nu)} \boldone 
    \\ \nonumber
    &= - \frac{1}{\norm{\boldX}_F^2}
     \imp{\left(\yy^\conj \odot \boldXihat^{(i)} \qq(\tau, \nu)  \right) \left(\yy^\conj \odot \boldXihat^{(i)} \qq(\tau, \nu)  \right)^H \boldone} ~,
\end{align}
which requires
\begin{align}\label{eq_comp_step2}
    \mathcal{O}(NM)
\end{align}
operations.

\subsubsection{Complexity of $\Big(  \realp{\boldD^{\rev{(i)}}(\tau, \nu)} \boldR(\tau) + \sigma^2  \Imatrix \Big)^{-1} \varpib $}
Now that we have computed $\boldR(\tau)^{-1} \bxihat^{(i)}$ and $\imp{\boldD^{\rev{(i)}}(\tau, \nu)} \boldone$ in \eqref{eq_pn_res_est_doppler_reg_supp}, we will analyze the complexity of evaluating
\begin{align} \label{eq_varpib}
    \Big(  \realp{\boldD^{\rev{(i)}}(\tau, \nu)} \boldR(\tau) + \sigma^2  \Imatrix \Big)^{-1} \varpib ~,
\end{align}
where 
\begin{align} \label{eq_varpib_def}
    \varpib = \imp{\boldD^{\rev{(i)}}(\tau, \nu)} \boldone  + \sigma^2 \boldR(\tau)^{-1} \bxihat^{(i)} ~.
\end{align}
Similar to the above analysis, the CG method can be employed to evaluate \eqref{eq_varpib}, where each iteration involves the computation of
\begin{align} \label{eq_varpib2}
    \Big(  \realp{\boldD^{\rev{(i)}}(\tau, \nu)} \boldR(\tau) + \sigma^2  \Imatrix \Big) \bkappa 
\end{align}
for some $\bkappa \in \realset{NM}{1}$. Opening up the terms in \eqref{eq_varpib2}, we have
\begin{align} \label{eq_varpib3}
    \realp{\boldD^{\rev{(i)}}(\tau, \nu)} \boldR(\tau)\bkappa + \sigma^2 \bkappa ~.
\end{align}
As previously derived in \eqref{eq_comp_rkappa}, evaluation of $\boldR(\tau)\bkappa$ in \eqref{eq_varpib3} has the complexity of 
\begin{align} \label{eq_order_comp}
    \mathcal{O}(M_0 M N \log N) ~.
\end{align}
Next, defining $\bomega = \boldR(\tau)\bkappa$, we compute $\realp{\boldD^{\rev{(i)}}(\tau, \nu)} \bomega$ in \eqref{eq_varpib3} using \eqref{eq_reald} as
\begin{align}
    &\realp{\boldD^{\rev{(i)}}(\tau, \nu)} \bomega
    \\ \nonumber
    &= \diag{\abs{\yy}^2} \bomega
    \\ \nonumber
    &~~- \frac{1}{\norm{\boldX}_F^2}
     \realp{\left(\yy^\conj \odot \boldXihat^{(i)} \qq(\tau, \nu)  \right) \left(\yy^\conj \odot \boldXihat^{(i)} \qq(\tau, \nu)  \right)^H \bomega} ~,
\end{align}
which leads to a complexity of $\mathcal{O}(NM)$. Combining this with \eqref{eq_order_comp}, the complexity of computing the expression in \eqref{eq_varpib2} becomes
\begin{align} \label{eq_order_comp2}
    \mathcal{O}(M_0 M N \log N)~. 
\end{align}
Finally, assuming $I$ iterations for the CG method to converge, \eqref{eq_varpib} has a computational complexity of
\begin{align} \label{eq_comp_step3}
    \mathcal{O}(I M_0 M N \log N) ~.
\end{align}

\subsubsection{Final Complexity Result for \eqref{eq_pn_res_est_doppler_reg_supp}}
After evaluating \eqref{eq_varpib}, the final step to obtain \eqref{eq_pn_res_est_doppler_reg_supp} is to compute
\begin{align} \label{eq_varsigmab}
     \boldR(\tau)\varsigmab ~,
\end{align}
where
\begin{align}
    \varsigmab = \Big(  \realp{\boldD^{\rev{(i)}}(\tau, \nu)} \boldR(\tau) + \sigma^2  \Imatrix \Big)^{-1} \varpib ~,
\end{align}
with $\varpib$ being defined in \eqref{eq_varpib_def}. From \eqref{eq_comp_rkappa}, evaluation of \eqref{eq_varsigmab} involves
\begin{align} \label{eq_comp_step4}
    \mathcal{O}(M_0 M N \log N) 
\end{align}
operations. Combining \eqref{eq_comp_step1}, \eqref{eq_comp_step2}, \eqref{eq_comp_step3} and \eqref{eq_comp_step4}, the computational complexity of  \eqref{eq_pn_res_est_doppler_reg_supp} is obtained as
\begin{align} \label{eq_comp_step5}
    \mathcal{O}(I M_0 M N \log N) ~.
\end{align}

\subsection{Complexity of Delay-Doppler Estimation in \eqref{eq_llrmax}--\eqref{eq_llrtilde}}
To evaluate \eqref{eq_llrtilde} in practice, it is enough to compute the first term as the remaining terms are negligibly small and can be discarded. We notice that the first term in \eqref{eq_llrtilde} (up to a multiplicative constant)
\begin{align}
    & \absbig{ \bb^H(\tau) \left( \boldX^\conj \odot \FF_N \big( \Wbhat^\conj \odot \boldY \big)  \right) \cc(\nu) }^2
\end{align}
can be evaluated over a delay-Doppler grid via
\begin{align}\label{eq_fft_delayDoppler}
    \absbig{ \FF_N^H \left( \boldX^\conj \odot \FF_N \big( \Wbhat^\conj \odot \boldY \big)  \right) \FF_M }^2 ~,
\end{align}
utilizing the fact that $\bb(\tau)$ in \eqref{eq_steer_delay} and $\cc(\nu)$ in \eqref{eq_steer_doppler} correspond to DFT matrix columns for uniformly sampled delay-Doppler values. The complexity of \eqref{eq_fft_delayDoppler} is thus given by
\begin{align} \label{eq_order_fft}
    \mathcal{O}(NM) + \mathcal{O}(M N \log N) + \mathcal{O}(N M \log M) ~,
\end{align}
where $ \mathcal{O}(NM)$ is due to element-wise multiplication of $N \times M$ matrices, $\mathcal{O}(M N \log N) $ results from $M$ times $N$-point FFT/IFFT, and $\mathcal{O}(N M \log M) $ comes from $N$ times $M$-point FFT. Re-writing \eqref{eq_order_fft}, the complexity of \eqref{eq_fft_delayDoppler} can be expressed as
\begin{align} \label{eq_order_fft2}
     \mathcal{O}(NM \log(NM)) ~.
\end{align}

\subsection{Complexity of Algorithm~\ref{alg_map_isaa}}
Using the complexity of PN estimation in \eqref{eq_comp_step5} and that of delay-Doppler estimation in \eqref{eq_order_fft2}, the per-iteration complexity of Algorithm~\ref{alg_map_isaa} can be written as
\begin{align} \label{eq_order_algo}
     \mathcal{O}\big( MN  \left( \log M + (I M_0 +1) \log N \right) \big) ~.
\end{align}


\section{\rev{Theoretical Ground for Formulation of PN Exploitation Problem in \eqref{eq_parametric_Frob}}}\label{sec_cov_match}
\rev{In this section, we provide the theoretical motivation behind the parametric covariance matrix reconstruction problem in \eqref{eq_parametric_Frob}, which is formulated to resolve range ambiguity via PN exploitation. The goal of PN exploitation is to estimate the unknown delay $\tau$ from the PN estimate $\bxihat \in \realset{NM}{1}$ obtained at the output of Algorithm~\ref{alg_map_isaa}. Intuitively, this can lead to unambiguous estimates of $\tau$ since no ambiguity exists in PN covariance $\boldR(\tau)$ with respect to $\tau$.}

\vspace{-0.2in}
\subsection{\rev{PN Observation Model and ML Estimator of Delay}}
\rev{The observation model for the above estimation problem can be written as
\begin{align}\label{eq_bxihat_obs}
    \bxihat = \bxi + \nn ~,
\end{align}
where $\bxi \in \realset{NM}{1}$ denotes the true PN vector with $\bxi \sim \mtN(\boldzero, \boldR(\tau)) $ and $\nn \in \realset{NM}{1}$ is the estimation noise, independent of $\bxi$, whose statistics are given by $\nn \sim \mtN(\boldzero, \boldSigma(\tau)  ) $. Here, $\boldSigma(\tau) \in \realset{NM}{NM}$ represents the covariance matrix of the PN estimation error in Algorithm~\ref{alg_map_isaa} and can be set to the CRB matrix on PN estimation for the hybrid ML/MAP problem \eqref{eq_hybrid_ml_map_doppler3}. Accordingly, the statistics of $\bxihat$ in \eqref{eq_bxihat_obs} are 
\begin{align}\label{eq_bxihat_stat}
    \bxihat \sim \mtN(\boldzero, \boldR(\tau) + \boldSigma(\tau)  ) ~.
\end{align}
The ML estimator of $\tau$ from the PN estimates in \eqref{eq_bxihat_stat} can be obtained as \cite{ML_radar_array_98}
\begin{align}\label{eq_ml_tau_bxi}
    \tauhat  = \arg \min_{ \tau} &~ \Big\{ \bxihat^T \big( \boldR(\tau) + \boldSigma(\tau)\big)^{-1} \bxihat
    \\ \nonumber
    &~~+ \log \det \big( \boldR(\tau) + \boldSigma(\tau) \big) \Big\}   ~.
\end{align}}

\rev{We identify several challenges in solving \eqref{eq_ml_tau_bxi}. First, the inverse of the covariance matrix of $\bxihat$ needs to be calculated, which leads to a high computational burden. Second, only a single sample of PN estimate is available to solve \eqref{eq_ml_tau_bxi}, i.e., the sample covariance matrix $\bxihat \bxihat^T$ is rank-one, leading to poor estimates of $\tau$. Additionally, the formulation in \eqref{eq_ml_tau_bxi} does not allow us to exploit redundancy in $\boldR(\tau)$ stemming from its Toeplitz-block Toeplitz structure in \eqref{eq_toep_block}.}

\vspace{-0.1in}
\subsection{\rev{Covariance Matching Approach}}
\rev{To overcome the aforementioned challenges, we propose a covariance matching (i.e., parametric covariance matrix reconstruction) approach that fits the sample covariance to the analytical model using a weighted least squares (WLS) formulation \cite{ottersten1998covariance}:
\begin{align} \nonumber
    \tauhat  = \arg \min_{ \tau} &~ \Big\{  \big[ \veccs{\boldRhat} - \veccs{\boldR(\tau) + \boldSigma(\tau)} \big]^T \big( \boldRhat^T \otimes \boldRhat \big)^{-1} 
    \\ \label{eq_wls_cov_match}
    &~~ \times   \big[ \veccs{\boldRhat} - \veccs{\boldR(\tau) + \boldSigma(\tau)} \big] \Big\} ~,
\end{align}
where $\boldRhat = \bxihat \bxihat^T $
is the sample covariance matrix. By the extended invariance principle (EXIP), the proposed covariance matching based estimator \eqref{eq_wls_cov_match} is asymptotically equivalent to the original ML estimator \eqref{eq_ml_tau_bxi} at high SNRs \cite{ottersten1998covariance}. Since $\boldRhat$ is rank-one, $\big( \boldRhat^T \otimes \boldRhat \big)$ is also rank-one and thus not invertible. Hence, we resort to the LS reformulation of \eqref{eq_wls_cov_match}:
\begin{align} \label{eq_wls_cov_match2}
    \tauhat  = \arg \min_{ \tau} &~ \norm{ \veccs{\boldRhat} - \veccs{\boldR(\tau) + \boldSigma(\tau)} }^2 ~.
\end{align}}

\vspace{-0.3in}
\subsection{\rev{High-SNR Approximation}}
\rev{The Toeplitz-block Toeplitz structure of $\boldR(\tau)$ cannot be exploited in \eqref{eq_wls_cov_match2} since $\boldSigma(\tau)$ can have a generic covariance structure. Moreover, the sample covariance $\boldRhat$ is constructed using a single observation $\bxihat$ and thus constitutes a very inaccurate estimate of the true covariance $\boldR(\tau) + \boldSigma(\tau)$. To deal with these issues, we propose to make the approximation
\begin{align}
    \boldR(\tau) + \boldSigma(\tau) \approx \boldR(\tau) ~,
\end{align}
which is valid at high SNRs since the CRB matrix elements are inversely proportional to SNR (i.e., PN estimation becomes sufficiently accurate at high SNRs such that $\boldR(\tau) + \boldSigma(\tau)$ is dominated by the PN covariance $\boldR(\tau)$). By virtue of this high-SNR approximation, the problem \eqref{eq_wls_cov_match2} becomes
\begin{align} \label{eq_wls_cov_match3}
    \tauhat  = \arg \min_{ \tau} &~ \norm{ \veccs{\boldRhat} - \veccs{\boldR(\tau)} }^2 ~,
\end{align}
which coincides with the formulation in \eqref{eq_parametric_Frob}. As seen from \eqref{eq_parametric_Frob2} and \eqref{eq_rhat_row}, the Toeplitz-block Toeplitz structure of $\boldR(\tau)$ can now be effectively exploited in \eqref{eq_wls_cov_match3} to estimate delay from a single observation of the PN vector.}

\end{appendices}

\bibliographystyle{IEEEtran}
\bibliography{main}

\begin{thebibliography}{10}
\providecommand{\url}[1]{#1}
\csname url@samestyle\endcsname
\providecommand{\newblock}{\relax}
\providecommand{\bibinfo}[2]{#2}
\providecommand{\BIBentrySTDinterwordspacing}{\spaceskip=0pt\relax}
\providecommand{\BIBentryALTinterwordstretchfactor}{4}
\providecommand{\BIBentryALTinterwordspacing}{\spaceskip=\fontdimen2\font plus
\BIBentryALTinterwordstretchfactor\fontdimen3\font minus
  \fontdimen4\font\relax}
\providecommand{\BIBforeignlanguage}[2]{{%
\expandafter\ifx\csname l@#1\endcsname\relax
\typeout{** WARNING: IEEEtran.bst: No hyphenation pattern has been}%
\typeout{** loaded for the language `#1'. Using the pattern for}%
\typeout{** the default language instead.}%
\else
\language=\csname l@#1\endcsname
\fi
#2}}
\providecommand{\BIBdecl}{\relax}
\BIBdecl

\bibitem{SPM_JRC_2019}
K.~V. {Mishra} \emph{et~al.}, ``Toward millimeter-wave joint radar
  communications: A signal processing perspective,'' \emph{IEEE Signal
  Processing Magazine}, vol.~36, no.~5, pp. 100--114, Sep. 2019.

\bibitem{jointRadCom_review_TCOM}
F.~Liu \emph{et~al.}, ``Joint radar and communication design: Applications,
  state-of-the-art, and the road ahead,'' \emph{IEEE Transactions on
  Communications}, vol.~68, no.~6, pp. 3834--3862, 2020.

\bibitem{JRC_6G_Nokia_2021}
T.~Wild \emph{et~al.}, ``Joint design of communication and sensing for beyond
  {5G} and {6G} systems,'' \emph{IEEE Access}, vol.~9, pp. 30\,845--30\,857,
  2021.

\bibitem{Eldar_SPM_JRC_2020}
D.~Ma \emph{et~al.}, ``Joint radar-communication strategies for autonomous
  vehicles: Combining two key automotive technologies,'' \emph{IEEE Signal
  Processing Magazine}, vol.~37, no.~4, pp. 85--97, 2020.

\bibitem{JCAS_Survey_2022}
J.~A. Zhang \emph{et~al.}, ``Enabling joint communication and radar sensing in
  mobile networks—a survey,'' \emph{IEEE Communications Surveys Tutorials},
  vol.~24, no.~1, pp. 306--345, 2022.

\bibitem{Fan_ISAC_6G_JSAC_2022}
F.~Liu \emph{et~al.}, ``Integrated sensing and communications: Toward
  dual-functional wireless networks for {6G} and beyond,'' \emph{IEEE Journal
  on Selected Areas in Communications}, vol.~40, no.~6, pp. 1728--1767, 2022.

\bibitem{SPM_Zheng_2019}
L.~{Zheng} \emph{et~al.}, ``Radar and communication coexistence: An overview: A
  review of recent methods,'' \emph{IEEE Signal Processing Magazine}, vol.~36,
  no.~5, pp. 85--99, 2019.

\bibitem{DFRC_SPM_2019}
A.~{Hassanien} \emph{et~al.}, ``Dual-function radar communication systems: A
  solution to the spectrum congestion problem,'' \emph{IEEE Signal Processing
  Magazine}, vol.~36, no.~5, pp. 115--126, Sep. 2019.

\bibitem{DFRC_Waveform_Design}
F.~{Liu} \emph{et~al.}, ``Toward dual-functional radar-communication systems:
  Optimal waveform design,'' \emph{IEEE Transactions on Signal Processing},
  vol.~66, no.~16, pp. 4264--4279, Aug 2018.

\bibitem{RCC_Interf_TSP_2021}
J.~Qian \emph{et~al.}, ``Radar and communication spectral coexistence in
  range-dependent interference,'' \emph{IEEE Transactions on Signal
  Processing}, vol.~69, pp. 5891--5906, 2021.

\bibitem{6GVision_2020}
W.~Saad \emph{et~al.}, ``A vision of {6G} wireless systems: Applications,
  trends, technologies, and open research problems,'' \emph{IEEE Network},
  vol.~34, no.~3, pp. 134--142, 2020.

\bibitem{RadCom_Proc_IEEE_2011}
C.~{Sturm} \emph{et~al.}, ``Waveform design and signal processing aspects for
  fusion of wireless communications and radar sensing,'' \emph{Proceedings of
  the IEEE}, vol.~99, no.~7, pp. 1236--1259, July 2011.

\bibitem{General_Multicarrier_Radar_TSP_2016}
M.~{Bică} \emph{et~al.}, ``Generalized multicarrier radar: Models and
  performance,'' \emph{IEEE Transactions on Signal Processing}, vol.~64,
  no.~17, pp. 4389--4402, Sep. 2016.

\bibitem{ICI_OFDM_TSP_2020}
F.~Zhang \emph{et~al.}, ``Joint range and velocity estimation with intrapulse
  and intersubcarrier {Doppler} effects for {OFDM}-based {RadCom} systems,''
  \emph{IEEE Transactions on Signal Processing}, vol.~68, pp. 662--675, 2020.

\bibitem{OFDM_DFRC_TSP_2021}
M.~F. Keskin \emph{et~al.}, ``Limited feedforward waveform design for {OFDM}
  dual-functional radar-communications,'' \emph{IEEE Transactions on Signal
  Processing}, vol.~69, pp. 2955--2970, 2021.

\bibitem{PN_OFDM_TSP_2017}
A.~Leshem \emph{et~al.}, ``Phase noise compensation for {OFDM} systems,''
  \emph{IEEE Transactions on Signal Processing}, vol.~65, no.~21, pp.
  5675--5686, 2017.

\bibitem{hexax_pimrc_2021}
H.~Wymeersch \emph{et~al.}, ``Integration of communication and sensing in {6G}:
  a joint industrial and academic perspective,'' in \emph{2021 IEEE 32nd Annual
  International Symposium on Personal, Indoor and Mobile Radio Communications
  (PIMRC)}, 2021, pp. 1--7.

\bibitem{RF_JCS_2021}
F.~Bozorgi \emph{et~al.}, ``{RF} front-end challenges for joint communication
  and radar sensing,'' in \emph{1st IEEE Int. Online Symp. Joint Commun.
  Sens.}, Feb. 2021.

\bibitem{OFDM_PN_HCRB_2014}
O.~H. Salim \emph{et~al.}, ``Channel, phase noise, and frequency offset in
  {OFDM} systems: Joint estimation, data detection, and hybrid cramér-rao
  lower bound,'' \emph{IEEE Transactions on Communications}, vol.~62, no.~9,
  pp. 3311--3325, 2014.

\bibitem{MIMO_OFDM_ICI_JSTSP_2021}
M.~F. Keskin \emph{et~al.}, ``{MIMO-OFDM} joint radar-communications: Is {ICI}
  friend or foe?'' \emph{IEEE Journal of Selected Topics in Signal Processing},
  vol.~15, no.~6, pp. 1393--1408, 2021.

\bibitem{PN_CohBw_2021_TWC}
M.~Chung \emph{et~al.}, ``Phase-noise compensation for {OFDM} systems
  exploiting coherence bandwidth: Modeling, algorithms, and analysis,''
  \emph{IEEE Transactions on Wireless Communications}, pp. 1--1, 2021.

\bibitem{VI_PN_TSP_2007}
D.~D. Lin \emph{et~al.}, ``The variational inference approach to joint data
  detection and phase noise estimation in {OFDM},'' \emph{IEEE Transactions on
  Signal Processing}, vol.~55, no.~5, pp. 1862--1874, 2007.

\bibitem{PN_OFDM_Relay_TWC_2016}
R.~Wang \emph{et~al.}, ``Channel estimation, carrier recovery, and data
  detection in the presence of phase noise in {OFDM} relay systems,''
  \emph{IEEE Transactions on Wireless Communications}, vol.~15, no.~2, pp.
  1186--1205, 2016.

\bibitem{PN_mmWave_OFDM_TWC_2022}
J.~Rodríguez-Fernández, ``Joint synchronization and compressive channel
  estimation for frequency-selective hybrid mmwave {MIMO} systems,'' \emph{IEEE
  Transactions on Wireless Communications}, vol.~21, no.~1, pp. 548--562, 2022.

\bibitem{PN_OFDM_PLL_TCOM_2007}
D.~Petrovic \emph{et~al.}, ``Effects of phase noise on {OFDM} systems with and
  without {PLL}: Characterization and compensation,'' \emph{IEEE Transactions
  on Communications}, vol.~55, no.~8, pp. 1607--1616, 2007.

\bibitem{PN_ICI_OFDM_COML_2005}
T.~Schenk \emph{et~al.}, ``On the influence of phase noise induced {ICI} in
  {MIMO OFDM} systems,'' \emph{IEEE Communications Letters}, vol.~9, no.~8, pp.
  682--684, 2005.

\bibitem{OFDM_PN_Wiener_TCOM_2011}
P.~Mathecken \emph{et~al.}, ``Performance analysis of {OFDM} with {Wiener}
  phase noise and frequency selective fading channel,'' \emph{IEEE Transactions
  on Communications}, vol.~59, no.~5, pp. 1321--1331, 2011.

\bibitem{PN_OFDM_WLAB_COMML_2002}
S.~Wu \emph{et~al.}, ``A phase noise suppression algorithm for {OFDM}-based
  {WLANs},'' \emph{IEEE Communications Letters}, vol.~6, no.~12, pp. 535--537,
  2002.

\bibitem{OFDM_Joint_PHN_TSP_2006}
D.~D. Lin \emph{et~al.}, ``Joint estimation of channel response, frequency
  offset, and phase noise in {OFDM},'' \emph{IEEE Transactions on Signal
  Processing}, vol.~54, no.~9, pp. 3542--3554, 2006.

\bibitem{PN_OFDM_Sayed_TSP_2007}
Q.~Zou \emph{et~al.}, ``Compensation of phase noise in {OFDM} wireless
  systems,'' \emph{IEEE Transactions on Signal Processing}, vol.~55, no.~11,
  pp. 5407--5424, 2007.

\bibitem{OFDM_JRC_PN_JLT_2022}
Z.~Xue \emph{et~al.}, ``{OFDM} radar and communication joint system using
  opto-electronic oscillator with phase noise degradation analysis and
  mitigation,'' \emph{Journal of Lightwave Technology}, pp. 1--1, 2022.

\bibitem{SC_OFDM_PN}
B.~Schweizer \emph{et~al.}, ``On hardware implementations of stepped-carrier
  {OFDM} radars,'' in \emph{2018 IEEE/MTT-S International Microwave Symposium -
  IMS}, 2018, pp. 891--894.

\bibitem{multiFunc_6G_CoMmag}
Z.~Wei \emph{et~al.}, ``Toward multi-functional {6G} wireless networks:
  Integrating sensing, communication, and security,'' \emph{IEEE Communications
  Magazine}, vol.~60, no.~4, pp. 65--71, 2022.

\bibitem{IEEE80211_Auto_JSTSP_2021}
A.~Tang \emph{et~al.}, ``Self-interference-resistant {IEEE} 802.11ad-based
  joint communication and automotive radar design,'' \emph{IEEE Journal of
  Selected Topics in Signal Processing}, vol.~15, no.~6, pp. 1484--1499, 2021.

\bibitem{Multibeam_JRC_CommLetter_2022}
Y.~Zhuo \emph{et~al.}, ``Multi-beam joint communication and radar sensing:
  Beamforming design and interference cancellation,'' \emph{IEEE Communications
  Letters}, 2022.

\bibitem{mmwave_radcom_JSTSP_2021}
P.~Kumari \emph{et~al.}, ``Adaptive and fast combined waveform-beamforming
  design for mmwave automotive joint communication-radar,'' \emph{IEEE Journal
  of Selected Topics in Signal Processing}, vol.~15, no.~4, pp. 996--1012,
  2021.

\bibitem{6G_HexaX_Access}
M.~A. Uusitalo \emph{et~al.}, ``{6G} vision, value, use cases and technologies
  from european {6G} flagship project {Hexa-X},'' \emph{IEEE Access}, vol.~9,
  pp. 160\,004--160\,020, 2021.

\bibitem{Firat_OFDM_2012}
R.~F. {Tigrek} \emph{et~al.}, ``{OFDM} signals as the radar waveform to solve
  {Doppler} ambiguity,'' \emph{IEEE Transactions on Aerospace and Electronic
  Systems}, vol.~48, no.~1, pp. 130--143, Jan 2012.

\bibitem{OFDM_Radar_Phd_2014}
M.~Braun, ``{OFDM} radar algorithms in mobile communication networks,''
  \emph{Karlsruher Institutes f{\"u}r Technologie}, 2014.

\bibitem{OFDM_Radar_Corr_TAES_2020}
S.~Mercier \emph{et~al.}, ``Comparison of correlation-based {OFDM} radar
  receivers,'' \emph{IEEE Transactions on Aerospace and Electronic Systems},
  vol.~56, no.~6, pp. 4796--4813, 2020.

\bibitem{MUSIC_OFDM_Radar_TSP_2021}
R.~Xie \emph{et~al.}, ``Performance analysis of joint range-velocity estimator
  with {2D-MUSIC} in {OFDM} radar,'' \emph{IEEE Transactions on Signal
  Processing}, vol.~69, pp. 4787--4800, 2021.

\bibitem{Range_Correlation_93}
M.~C. {Budge} \emph{et~al.}, ``Range correlation effects in radars,'' in
  \emph{The Record of the 1993 IEEE National Radar Conference}, April 1993, pp.
  212--216.

\bibitem{Bliss_PN_2016}
A.~R. {Chiriyath} \emph{et~al.}, ``Joint radar-communications information
  bounds with clutter: The phase noise menace,'' in \emph{2016 IEEE Radar
  Conference (RadarConf)}, May 2016, pp. 1--6.

\bibitem{SPM_PN_2019}
M.~Gerstmair \emph{et~al.}, ``On the safe road toward autonomous driving: Phase
  noise monitoring in radar sensors for functional safety compliance,''
  \emph{IEEE Signal Processing Magazine}, vol.~36, no.~5, pp. 60--70, 2019.

\bibitem{Canan_SPM_2020}
C.~{Aydogdu} \emph{et~al.}, ``Radar interference mitigation for automated
  driving: Exploring proactive strategies,'' \emph{IEEE Signal Processing
  Magazine}, vol.~37, no.~4, pp. 72--84, 2020.

\bibitem{OFDM_Cons_PN_TCOM_2004}
S.~Wu \emph{et~al.}, ``{OFDM} systems in the presence of phase noise:
  consequences and solutions,'' \emph{IEEE Transactions on Communications},
  vol.~52, no.~11, pp. 1988--1996, 2004.

\bibitem{PN_BCRB_TSP_2013}
A.~A. Nasir \emph{et~al.}, ``Phase noise in {MIMO} systems: {Bayesian}
  {Cramér–Rao} bounds and soft-input estimation,'' \emph{IEEE Transactions
  on Signal Processing}, vol.~61, no.~10, pp. 2675--2692, 2013.

\bibitem{Demir_PN_2006}
A.~{Demir}, ``Computing timing jitter from phase noise spectra for oscillators
  and phase-locked loops with white and $1/f$ noise,'' \emph{IEEE Transactions
  on Circuits and Systems I: Regular Papers}, vol.~53, no.~9, pp. 1869--1884,
  Sep. 2006.

\bibitem{PN_2006}
A.~{Chorti} \emph{et~al.}, ``A spectral model for {RF} oscillators with
  power-law phase noise,'' \emph{IEEE Transactions on Circuits and Systems I:
  Regular Papers}, vol.~53, no.~9, pp. 1989--1999, Sep. 2006.

\bibitem{OFDM_Joint_PHN_TVT_2009}
J.~Tao \emph{et~al.}, ``Estimation of channel transfer function and carrier
  frequency offset for {OFDM} systems with phase noise,'' \emph{IEEE
  Transactions on Vehicular Technology}, vol.~58, no.~8, pp. 4380--4387, 2009.

\bibitem{OFDM_ICI_TVT_2017}
G.~Hakobyan \emph{et~al.}, ``A novel intercarrier-interference free signal
  processing scheme for {OFDM} radar,'' \emph{IEEE Transactions on Vehicular
  Technology}, vol.~67, no.~6, pp. 5158--5167, 2017.

\bibitem{hybrid_CRB_LSP_2008}
S.~Bay \emph{et~al.}, ``On the hybrid {Cramér Rao} bound and its application
  to dynamical phase estimation,'' \emph{IEEE Signal Processing Letters},
  vol.~15, pp. 453--456, 2008.

\bibitem{Hybrid_ML_MAP_TSP}
Y.~Noam \emph{et~al.}, ``Notes on the tightness of the hybrid {Cramér–Rao}
  lower bound,'' \emph{IEEE Transactions on Signal Processing}, vol.~57, no.~6,
  pp. 2074--2084, 2009.

\bibitem{SI_5G_2015}
Z.~Zhang \emph{et~al.}, ``Full duplex techniques for {5G} networks:
  self-interference cancellation, protocol design, and relay selection,''
  \emph{IEEE Communications Magazine}, vol.~53, no.~5, pp. 128--137, 2015.

\bibitem{80211_Radar_TVT_2018}
P.~{Kumari} \emph{et~al.}, ``{IEEE} 802.11ad-based radar: An approach to joint
  vehicular communication-radar system,'' \emph{IEEE Transactions on Vehicular
  Technology}, vol.~67, no.~4, pp. 3012--3027, April 2018.

\bibitem{OFDM_FD_LTE_2019}
C.~B. Barneto \emph{et~al.}, ``Full-duplex {OFDM} radar with {LTE} and {5G NR}
  waveforms: Challenges, solutions, and measurements,'' \emph{IEEE Transactions
  on Microwave Theory and Techniques}, vol.~67, no.~10, pp. 4042--4054, 2019.

\bibitem{Demir_PN_2000}
A.~Demir \emph{et~al.}, ``Phase noise in oscillators: a unifying theory and
  numerical methods for characterization,'' \emph{IEEE Transactions on Circuits
  and Systems I: Fundamental Theory and Applications}, vol.~47, no.~5, pp.
  655--674, 2000.

\bibitem{PN_Spectral_ICI_TSP_2010}
P.~Rabiei \emph{et~al.}, ``A non-iterative technique for phase noise {ICI}
  mitigation in packet-based {OFDM} systems,'' \emph{IEEE Transactions on
  Signal Processing}, vol.~58, no.~11, pp. 5945--5950, 2010.

\bibitem{PN_SI_TSP_2017}
X.~Quan \emph{et~al.}, ``Impacts of phase noise on digital self-interference
  cancellation in full-duplex communications,'' \emph{IEEE Transactions on
  Signal Processing}, vol.~65, no.~7, pp. 1881--1893, 2017.

\bibitem{Passive_OFDM_2010}
C.~R. Berger \emph{et~al.}, ``Signal processing for passive radar using {OFDM}
  waveforms,'' \emph{IEEE Journal of Selected Topics in Signal Processing},
  vol.~4, no.~1, pp. 226--238, 2010.

\bibitem{OFDM_Passive_Res_2017_TSP}
L.~{Zheng} \emph{et~al.}, ``Super-resolution delay-{Doppler} estimation for
  {OFDM} passive radar,'' \emph{IEEE Transactions on Signal Processing},
  vol.~65, no.~9, pp. 2197--2210, May 2017.

\bibitem{OTFS_RadCom_TWC_2020}
L.~Gaudio \emph{et~al.}, ``On the effectiveness of {OTFS} for joint radar
  parameter estimation and communication,'' \emph{IEEE Transactions on Wireless
  Communications}, vol.~19, no.~9, pp. 5951--5965, 2020.

\bibitem{DPN_93}
A.~Murat \emph{et~al.}, ``Phase-noise-induced performance limits for {DPSK}
  modulation with and without frequency feedback,'' \emph{Journal of Lightwave
  Technology}, vol.~11, no.~2, pp. 290--302, 1993.

\bibitem{blockToeplitzInv_83}
M.~Wax \emph{et~al.}, ``Efficient inversion of {Toeplitz-block} {Toeplitz}
  matrix,'' \emph{IEEE Transactions on Acoustics, Speech, and Signal
  Processing}, vol.~31, no.~5, pp. 1218--1221, 1983.

\bibitem{stoica1989reparametrization}
P.~Stoica \emph{et~al.}, ``On reparametrization of loss functions used in
  estimation and the invariance principle,'' \emph{Signal Processing}, vol.~17,
  no.~4, pp. 383--387, 1989.

\bibitem{ottersten1998covariance}
B.~Ottersten \emph{et~al.}, ``Covariance matching estimation techniques for
  array signal processing applications,'' \emph{Digital Signal Processing},
  vol.~8, no.~3, pp. 185--210, 1998.

\bibitem{toeplitz_radar_2020}
X.~Du \emph{et~al.}, ``Toeplitz structured covariance matrix estimation for
  radar applications,'' \emph{IEEE Signal Processing Letters}, vol.~27, pp.
  595--599, 2020.

\bibitem{aubry2018new}
A.~Aubry \emph{et~al.}, ``A new sequential optimization procedure and its
  applications to resource allocation for wireless systems,'' \emph{IEEE
  Transactions on Signal Processing}, vol.~66, no.~24, pp. 6518--6533, 2018.

\bibitem{Toeplitz_Circulant_88}
T.~F. Chan, ``An optimal circulant preconditioner for {Toeplitz} systems,''
  \emph{SIAM Journal on Scientific and Statistical Computing}, vol.~9, no.~4,
  pp. 766--771, 1988.

\bibitem{ML_radar_array_98}
A.~Swindlehurst \emph{et~al.}, ``Maximum likelihood methods in radar array
  signal processing,'' \emph{Proceedings of the IEEE}, vol.~86, no.~2, pp.
  421--441, 1998.

\end{thebibliography}

\end{document}